\documentclass[10pt,a4paper,fleqn]{article}

\usepackage[T1]{fontenc}
\usepackage[utf8]{inputenc}
\usepackage[margin=1in]{geometry}
\usepackage{newtxtext}
\usepackage{mathtools,amsthm,amssymb,array,multirow}
\usepackage{newtxmath}
\usepackage{graphicx,booktabs,etoolbox}
\usepackage{xcolor}
\usepackage{microtype}
\usepackage{threeparttable}
\usepackage{capt-of}
\usepackage{longtable}
\usepackage{tabularx}
\usepackage{enumitem}
\usepackage{needspace}
\usepackage{float}
\usepackage{placeins}
\usepackage[authoryear]{natbib}
\usepackage{hyperref}
\usepackage{bookmark}
\allowdisplaybreaks
\hypersetup{
  hidelinks,
  hypertexnames=false,
  pdftitle={Capacity Modification for Matching with Couples: An Integer Programming and Computational Study},
  pdfauthor={Zexiao Wu; Genjiu Xu; P{\'e}ter Bir{\'o}}
}

\newtheorem{theorem}{Theorem}[section]

\newtheorem{proposition}[theorem]{Proposition}

\theoremstyle{definition}
\newtheorem{definition}{Definition}[section]

\newtheorem{counterexample}{Counterexample}[section]

\newcommand{\D}{\mathcal D}
\newcommand{\Sg}{\mathcal S}
\newcommand{\C}{\mathcal C}
\newcommand{\Hh}{\mathcal H}
\newcommand{\E}{\mathcal E}
\newcommand{\Pk}{\mathcal P}
\newcommand{\Kcross}{\mathcal K}
\newcommand{\outside}{\varnothing}
\newcommand{\ICMNF}{\mbox{ICM-NF}}
\newcommand{\ICMBC}{\mbox{ICM-BC}}
\newcommand{\ICMUC}{\mbox{ICM-UC}}
\newcommand{\IFCCUT}{\mbox{IFC-CUT}}
\newcommand{\ICMCUTNF}{\mbox{ICM-CUT-NF}}
\newcommand{\ICMCUTBC}{\mbox{ICM-CUT-BC}}
\newcommand{\ICMCUTUC}{\mbox{ICM-CUT-UC}}
\newcommand{\NV}{\mbox{NV}}

\title{Capacity Modification for Matching with Couples:\\
An Integer Programming and Computational Study}
\author{%
  Zexiao Wu\textsuperscript{1,2}\qquad
  Genjiu Xu\textsuperscript{1,2}\qquad
  P{\'e}ter Bir{\'o}\textsuperscript{3,4}\thanks{Corresponding author.}\\[0.7em]
  \small\textsuperscript{1}\begin{tabular}[t]{@{}c@{}}College of Mathematics and Statistics, Northwestern\\
  Polytechnical University, Xi'an, China\end{tabular}\\
  \small\textsuperscript{2}\begin{tabular}[t]{@{}c@{}}Shenzhen Research Institute, Northwestern Polytechnical University,\\
  Shenzhen, China\end{tabular}\\
  \small\textsuperscript{3}Corvinus University of Budapest, Budapest, Hungary\\
  \small\textsuperscript{4}Institute of Economics, ELTE KRTK, Budapest, Hungary\\[0.6em]
  \small\texttt{wuzexiao@mail.nwpu.edu.cn}\qquad
  \small\texttt{xugenjiu@nwpu.edu.cn}\qquad
  \small\texttt{biro.peter@krtk.mta.hu}%
}
\date{}

\begin{document}
\maketitle

\begin{abstract}
We study capacity modification in the Hospitals/Residents problem with Couples, in which hospital capacities and a stable matching are jointly determined. 
Exact and Cutoff integer programming formulations are developed for the resulting problem.
We assume that, for each couple, either both members are matched or neither is.
For arbitrary strict preferences, every instance admits a stable matching after changing each hospital capacity by at most two.
Total capacity increases by at most one, and this bound is tight.
Under sub-responsive and sub-complete couple preferences, a stable matching
exists after adding at most one position to each hospital. The total number of
additions is at most the number of couples, and this bound is also tight.
Computational experiments show that Exact formulations are modestly faster at
fixed capacities, whereas Cutoff formulations perform substantially better in
the minimum repair experiments and attain
higher completion rates in markets with up to 30,000 doctors.
Across 96,000 synthetic and reconstructed markets, small capacity changes restore existence whenever no stable matching exists at the original capacities.
\end{abstract}

\begin{center}
\small\textbf{Keywords:} Stable matching; Matching with couples;
Capacity modification; Integer programming
\end{center}

\section{Introduction}
\label{sec:introduction}

Most matching models treat capacities as exogenous, even though market
organizers often determine them before or alongside the assignment process.
This disconnect is especially consequential in residency matching and other
centralized markets, where capacity decisions can determine whether a stable
matching exists. 
Stability is a central design objective in these markets. It requires that no
resident or couple can form a blocking coalition with one or more hospitals.
In the classical Hospitals/Residents problem,
deferred acceptance always yields a stable matching \citep{gale1962}.
Couples invalidate this guarantee because they rank pairs of positions
jointly.
These complementarities are important in medical matching
\citep{roth1984,rothperanson1999}, but they also create a
fundamental difficulty: a stable matching may not exist, and deciding whether
one exists is NP-complete \citep{ronn1990}.

Capacity modification provides one way to restore existence.
\citet{nguyen2018} show that every market with couples admits a stable
matching after changing each hospital capacity by at most two, while
increasing total capacity by at most four.
They also leave open whether the bound for each hospital
can be reduced from two to one. Under sub-responsive and sub-complete couple
preferences, \citet{csaji2026} obtain this unit bound. Their stated guarantee
permits both increases and reductions and does not provide an aggregate
capacity bound. These results leave distinct questions when capacities may
only increase or when total capacity must be preserved.

We study the Hospitals/Residents problem with Couples (HRC) under the
stability concept of \citet{biro2011}, termed BIS-stability by
\citet{delorme2021}. We distinguish three capacity policies. Under near
feasible capacity modification, each hospital may gain or lose at most two
positions while total capacity increases by at most one. Under balanced
capacity modification, each hospital may gain or lose at most one position
while total capacity is preserved. Under unit capacity expansion, each
hospital may gain at most one position and the total number of additions is
limited by a budget. For all three policies, we ask whether stability can be
restored and what the smallest modification is.

Our first contribution sharpens the near-feasibility guarantee of \citet{nguyen2018}. We assume that the two members of a couple are either jointly assigned to an acceptable hospital pair or both unmatched. Under this assumption, the result holds for arbitrary strict preferences. Each hospital capacity changes by at most two, as in their result, while total capacity increases by at most one rather than four. This bound is tight. A stable matching need not exist if total capacity must remain unchanged, even when each hospital capacity may change by at most two. The proof modifies their iterative-rounding argument at the step where the aggregate capacity constraint is discarded and uses the fact that every acceptable couple assignment occupies exactly two hospital positions. Capacity increases at some hospitals may be offset by reductions at others. We also provide a simple two-hospital counterexample showing that balanced unit modification does not guarantee the existence of a stable matching, even with a common master list and sub-responsive and sub-complete couple preferences.

Our second contribution is an existence guarantee for unit capacity expansion. 
Let \(\C\) denote the set of couples. Under sub-responsive and sub-complete
couple preferences, stability can always be restored by adding at most one
position to each hospital and at most \(|\C|\) positions in total.
The aggregate bound is tight even when all hospitals share a common master
list and every acceptable couple assignment uses distinct hospitals. By sharpening the analysis of the
Stable Fixtures rounding of \citet{csaji2026}, the stable matching and the required expansions can be found in
time linear in the compact input size.

Our third contribution is an integer programming framework for jointly
determining a stable matching and hospital capacities under the three policies.
Starting from the fixed capacity formulation of \citet{delorme2021}, we
develop integer programs for deciding existence and minimizing the required
capacity modification under each policy.
We extend the bounded capacity linearization used by \citet{bobbio2026} for
markets without couples to all three types of blocking coalition in the
stability definition. We also adapt the cutoff approach of
\citet{agoston2022} to allow capacities to be decision variables. We prove
that the Exact and Cutoff formulations yield exactly the same stable matchings
and capacity vectors, both for fixed capacities and under each modification
policy.
Table~\ref{tab:capacity-theory-map} summarizes the resulting existence
boundary. Write
\(\delta_h=q'_h-q_h\) and
\(\Delta(q')=\sum_{h\in\Hh}\delta_h\).

\begin{table}[H]
\centering
\begin{threeparttable}
\caption{Existence boundary under capacity modification}
\label{tab:capacity-theory-map}
\footnotesize
\setlength{\tabcolsep}{2.0pt}
\renewcommand{\arraystretch}{1.12}
\begin{tabular*}{\textwidth}{@{\extracolsep{\fill}}ccll@{}}
\toprule
\multicolumn{2}{c}{Capacity limits}
& \multicolumn{2}{c}{Existence boundary} \\
\cmidrule(lr){1-2}\cmidrule(lr){3-4}
\(\delta_h\)
& \(\Delta(q')\)
& Preference domain
& Result \\
\midrule
\multirow[c]{3}{*}{\([-2,2]\)}
& \(0\)
& General HRC
& \textbf{Counterexample} [Proposition~\ref{prop:a2b0-counterexample} (BIS, KPR, MM, \NV{})] \\
& \([0,1]\)
& General HRC
& \textbf{Guaranteed} [Theorem~\ref{thm:general-capacity-bound} (BIS); tight by Proposition~\ref{prop:a2b0-counterexample}] \\
& \([0,4]\)
& General HRC
& \textbf{Guaranteed} [\citet[Theorem~1]{nguyen2018} (BIS, \NV{})] \\
\cmidrule(l){2-4}
\multirow[c]{3}{*}{\([-1,1]\)}
& \(0\)
& General HRC
& \textbf{Counterexample} [Counterexample~\ref{ex:ml-responsive-bc-counterexample} (BIS, KPR, MM, \NV{})] \\
& Unrestricted
& Sub-responsive and sub-complete
& \textbf{Guaranteed} [\citet[Theorem~3.1]{csaji2026} (MM)] \\
& Unrestricted
& General HRC
& \textbf{Open} \\
\cmidrule(l){2-4}
\multirow[c]{2}{*}{\([0,1]\)}
& \([0,|\C|]\)
& Sub-responsive and sub-complete
& \textbf{Guaranteed} [Theorem~\ref{thm:one-sided-near-feasibility} (BIS); tight by Proposition~\ref{prop:tight-one-sided-budget}] \\
& Unrestricted
& General HRC
& \textbf{Open} \\
\bottomrule
\end{tabular*}
\end{threeparttable}
\end{table}

We complement these theoretical contributions with a computational study of
formulation performance, scalability, and capacity modification. Our
implementation reproduces all 360 archived outcomes reported by
\citet{delorme2021}. Across 510 variants with strict preferences and fixed
capacities, Exact and Cutoff return the same existence and maximum cardinality
results, with Exact modestly faster. For minimum repair under balanced
capacity modification and unit capacity expansion, Cutoff performs better
under all four stability notions.
Its advantage is largest under the MM and \NV{} notions of stability, where
mean running times fall by factors of 41 to 55.
A separate scalability experiment considers 500 independently generated
markets with 100 to 30,000 doctors. Under a 600-second limit shared by the
fixed capacity and minimum repair stages, Cutoff completes 93.4\% of the
balanced modification procedures and 95.4\% of the unit expansion procedures.
The corresponding rates for Exact are 71.4\% and 72.2\%.

The capacity experiments consider 84,000 markets with \(n=100\) and 12,000
reconstructed markets with \(n=500\). At the original capacities, 10,640
markets in the first sample and 2,190 markets in the second sample do not admit
a stable matching. Unit expansion repairs all 12,830 markets, and one added
position is sufficient in 96.4\% of them. Balanced modification repairs all but
19 markets, all of which belong to the \(n=100\) sample. In these markets,
allowing each hospital to gain or lose up to two positions restores stability
in all 10,640 markets without changing total capacity. Among the 60,000 markets
whose total capacity equals the number of doctors, all 6,087 markets without a
stable matching at the original capacities are repaired by a single balanced
transfer. For markets that are stable at the original capacities, unit
expansion also increases maximum matching cardinality, although the improvement
becomes smaller as the budget grows. These results do not provide a general
guarantee, but they suggest first considering a small capacity reallocation
when total capacity is sufficient and adding positions only when necessary.

The remainder of the paper is organized as follows.
Section~\ref{sec:related-work} reviews the related literature, and Section~\ref{sec:model} introduces the HRC model.
Section~\ref{sec:main-results} presents the structural results, and
Section~\ref{sec:ip} develops the integer programming models.
Section~\ref{sec:model-improvements} gives the cutoff formulations,
and Section~\ref{sec:computation} reports the computational
study.

\section{Related work}
\label{sec:related-work}

The literature on HRC with fixed capacities identifies both the limits of
general preferences and conditions under which stability can be recovered.
\citet{mcdermid2010} show that the existence problem remains hard even under
severe restrictions on preference list lengths and hospital capacities.
Stable matchings are guaranteed under restricted preference domains such as
weak responsiveness, and constructive procedures are available in this
setting \citep{klausklijn2005,klausklijn2007}. In large random markets, stable
matchings exist with high probability when couples are sufficiently scarce,
whereas nonexistence may persist when couples form a fixed share of the market
\citep{kojima2013,ashlagi2014}. \citet{biroklijn2013} provide a broader survey
of the economic and algorithmic literature.

The computational literature is closely connected to the operation of
centralized medical matching. The redesign of the National Resident Matching
Program combines market design theory, practical implementation, and
computational experiments \citep{rothperanson1999}. Subsequent work develops
heuristics, methods based on Scarf's lemma, and exact integer programming
formulations for HRC
\citep{biro2011,biro2016,biro2014ip,delorme2021}. These approaches treat
hospital capacities as fixed.
Integer programming has also been applied to other matching and
allocation problems
\citep{delorme2019,birogudmundsson2021,agoston2022,birogyetvai2023}.
Our integer programming formulation builds directly on
\citet{delorme2021}, while our main models make capacities part of the
optimization.

The closest work uses near feasibility to recover stability in HRC.
\citet{nguyen2018} show that every instance admits a stable matching after
changing each hospital's capacity by at most two, while aggregate capacity is
never reduced and increases by at most four. Motivated by the open question of \citet{nguyen2018} on unit
individual changes, \citet{csaji2026} reduce the individual bound to one for
sub-responsive and sub-complete couple preferences. Their stated guarantee
uses the MM-stability, permits both additions and removals, and
does not provide an aggregate capacity bound. \citet{cembrano2025} recover the
Nguyen--Vohra guarantee in a multidimensional allocation framework and combine
stability with approximate group fairness. Related rounding methods address
proportionality, multidimensional constraints, and other complex matching
structures \citep{nguyen2019,nguyen2021,csaji2025}.

Capacity expansion has also been studied in many-to-one markets without
couples. One strand asks how a limited pool of additional seats should be
allocated to improve assignments or obtain outcomes that respect priorities
\citep{afacan2024,bobbio2026}. Methodologically, \citet{bobbio2026} are closest
to our work: they jointly optimize school expansions and a student-optimal
stable assignment using exact formulations, cutting planes, and heuristics. A
second strand minimizes the augmentation needed to obtain complete or
efficient stable matchings, incorporates heterogeneous seat costs, or restores
strong stability when hospital preferences contain ties
\citep{chenCsaji2023,balasundaram2026,ranjan2025}.

A broader line studies capacity reductions as well as expansions and asks
which outcomes can be implemented or sustained through capacity changes.
\citet{gokhale2024} examine capacity changes that ensure a designated worker
and firm are matched or make a prescribed matching stable. They also compare
capacity modification with preference manipulation by firms.
\citet{bobbiovariation2026} analyze resident welfare under separate expansion
and reduction budgets. In Stable Fixtures, a
nonbipartite model in which agents may have multiple partners,
\citet{glitzner2026} studies minimal capacity changes together with deviation
incentives relative to the original capacities. Most of the many-to-one
studies start from markets in which a stable matching exists under the
ordinary definition at the original capacities. In HRC, ordinary stability itself
may fail, making restoration of existence the first capacity planning question.

\section{Matching markets with couples}
\label{sec:model}

We consider the Hospitals/Residents problem with Couples (HRC).
Let \(\D\) be the finite set of doctors and \(\Hh\) the finite set of hospitals.
The doctors are partitioned into a set \(\Sg\) of single doctors and a set
$\C=\{(d_k^1,d_k^2):k=1,\ldots,|\C|\}$
of couples. The two members of a couple are distinct, no doctor belongs to
more than one couple, and a doctor who belongs to a couple is called linked.
Each hospital \(h\in\Hh\) has an integer capacity \(q_h\geq0\).
Let \(q=(q_h)_{h\in\Hh}\) denote the
capacity vector. Throughout, a capacity vector has nonnegative integer
components.

Each doctor \(i\in\D\) has a strict preference order \(\succ_i\) over a set
\(H(i)\subseteq\Hh\) of acceptable hospitals. Every hospital in \(H(i)\) is
preferred to the unmatched outcome \(\outside\); hospitals outside \(H(i)\)
are unacceptable. Let
$\E=\{(i,h):i\in\D,\ h\in H(i)\}$
be the set of acceptable doctor--hospital pairs. Each hospital \(h\) has a
strict priority order \(\succ_h\) over the doctors who regard \(h\) as
acceptable. 

For each couple \(k\in\C\), let
\(\Pk_k\subseteq H(d_k^1)\times H(d_k^2)\)
be the set of acceptable ordered hospital pairs.
The couple has a strict preference order \(\succ_k\) over
\(\Pk_k\cup\{(\outside,\outside)\}\), and every pair in \(\Pk_k\) is preferred
to \((\outside,\outside)\). We assume that the two members of a couple are
either jointly assigned to an acceptable hospital pair or both unmatched.
This assumption captures
complementarity in joint employment and relocation decisions, under which the
value of one member's placement depends on the other member also receiving an
acceptable position. An assignment that places only one member is therefore
unacceptable to the couple.

A matching \(M\) assigns each single doctor \(i\in\Sg\) either to a hospital
in \(H(i)\) or to \(\outside\). It assigns each couple \(k=(a,b)\) either to
one pair in \(\Pk_k\) or to \((\outside,\outside)\). For a hospital \(h\), let
$M(h)=\{i\in\D:M(i)=h\}$.
The matching is \emph{feasible} under capacity vector \(q\) if
\(|M(h)|\leq q_h\) for every \(h\in\Hh\). Hospital \(h\) is full when
\(|M(h)|=q_h\) and undersubscribed when \(|M(h)|<q_h\).

We use the stability definition of \citet{biro2011}, called BIS-stability by
\citet{delorme2021}. It distinguishes deviations by a single doctor, a couple
to two distinct hospitals, and a couple to two positions at the same hospital.

\begin{definition}
\label{def:bis-blocking}
Let $\mathcal{I}$ be an instance of HRC and \(M\) be a feasible matching of $\mathcal{I}$.

\begin{enumerate}[label=\textnormal{(S\arabic*)},leftmargin=3.1em]
  \item A single doctor \(i\) and hospital \(h\in H(i)\) block \(M\) if
  \(i\) is unmatched or \(h\succ_i M(i)\), and either \(h\) is
  undersubscribed or \(i\succ_h j\) for some \(j\in M(h)\).

  \item A couple \(k=(a,b)\) and two distinct hospitals \(h\neq g\) block
  \(M\) if the couple is unmatched or
  \((h,g)\succ_k(M(a),M(b))\), and both proposed assignments can be
  accommodated. Hospital \(h\) can accommodate \(a\) if \(M(a)=h\), if
  \(h\) is undersubscribed, or if \(a\succ_h j\) for some \(j\in M(h)\).
  Hospital \(g\) can accommodate \(b\) under the analogous condition.

  \item A couple \(k=(a,b)\) and one hospital \(h\), with
  \((h,h)\in\Pk_k\), block \(M\) if the couple is unmatched or
  \((h,h)\succ_k(M(a),M(b))\), and one of the following conditions holds:
  \begin{enumerate}[label=\textnormal{(\roman*)},leftmargin=2.5em]
    \item \(h\) has at least two free positions;
    \item \(h\) has exactly one free position and either
    \(h\in\{M(a),M(b)\}\), or hospital \(h\) prefers both \(a\) and \(b\)
    to some member of \(M(h)\);
    \item \(h\) is full and at least one of the following applies:
    \begin{enumerate}[label=\textnormal{(\alph*)},leftmargin=2.5em]
      \item \(h\in\{M(a),M(b)\}\), and hospital \(h\) prefers both \(a\)
      and \(b\) to some member of \(M(h)\);
      \item hospital \(h\) prefers both \(a\) and \(b\) to a linked doctor
      \(j\in M(h)\) whose partner is also assigned to \(h\);
      \item hospital \(h\) prefers both \(a\) and \(b\) to at least two distinct
      members of \(M(h)\).
    \end{enumerate}
  \end{enumerate}
\end{enumerate}
A feasible matching is \emph{stable} if it admits no blocking coalition.
\end{definition}

The way S3 treats deviations in which both members of a couple apply to the
same hospital distinguishes BIS stability from other notions of stability
for HRC \citep{delorme2021}.
For comparison,
Appendix~\ref{app:alternative-stability} summarizes the KPR and MM conditions
and records the stronger condition satisfied by the Nguyen--Vohra
construction (\NV{}).
Unless stated otherwise, a matching described as stable is BIS-stable
throughout this paper.  
 
 \section{Existence results}
 \label{sec:main-results}
 
 The following result of \citet[Theorem~1]{nguyen2018} provides a benchmark
 for our aggregate result and for the more restrictive forms of capacity
 modification studied below.
 
 \begin{theorem}[Nguyen and Vohra, \citeyear{nguyen2018}]
 	\label{thm:nguyen-vohra-capacity-bound}
 	Suppose that single doctors and couples have strict preferences over their
 	feasible outcomes and that hospital preferences are responsive to strict
 	priority orders over doctors. For any initial capacity vector \(q\), there
 	exist a capacity vector \(q'\) and a matching that is stable under \(q'\) such
 	that
 	\begin{equation*}
 		|q'_h-q_h|\leq2\quad\forall h\in\Hh,
 		\qquad
 		0\leq\sum_{h\in\Hh}(q'_h-q_h)\leq4.
 	\end{equation*}
 \end{theorem}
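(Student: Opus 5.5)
The plan is to reconstruct the Nguyen--Vohra argument, which has two stages: a fractional stable point obtained from Scarf's lemma, followed by iterative rounding.

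\emph{Stage 1: the fractional point.} Build a generalized assignment polytope whose columns are the feasible assignments: single doctor to hospital, couple to a pair in \(\Pk_k\), and the outside option. The rows are the doctor constraints, each with right-hand side \(1\), and the hospital capacity constraints, with right-hand side \(q_h\). Order each row by the corresponding agent's preferences. For a hospital row, responsiveness to \(\succ_h\) induces an order over the columns that use \(h\); a column in which a couple places both members at \(h\) is ranked by the lower-priority member. Scarf's lemma then gives a dominating vertex \(x^*\). I then check that any integral point which ``agrees'' with the domination pattern of \(x^*\) is stable under the capacities actually used. The key observation is that a blocking coalition would be a column not dominated at some row. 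The domination conditions translate directly into S1 and S2; for S3, the same-hospital column is dominated by \(h\)'s row only if \(h\) is full with two worse doctors, which is exactly the negation of the S3 conditions.

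\emph{Stage 2: iterative rounding.} Let \(x^*\) live on the face defined by the dominated-tight constraints. Consider the polytope \(\{x\ge 0: \sum_{c\ni i}x_c=1\ \forall i,\ \sum_c a_{hc}x_c\le q_h\ \forall h,\ \sum_c \sum_h a_{hc} x_c = \sum_h\sum_c a_{hc}x^*_c\}\), restricted to the support of \(x^*\), with the tight hospital rows forced to equality. Here \(a_{hc}\in\{0,1,2\}\) is the number of positions column \(c\) takes at \(h\). Apply the iterative rounding theorem for column sums bounded by \(2\) (Kir\'aly--Lau--Singh style). At an extreme point, either some variable is integral and can be fixed, or some hospital row has total fractional coefficient mass at most \(3\) and can be dropped. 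Dropping a row with coefficients in \(\{1,2\}\) of this kind changes that hospital's load by at most \(2\) once everything is rounded. When only \(O(1)\) rows remain, the aggregate equality constraint is also dropped. This last step accounts for the total capacity increase: the standard analysis bounds the aggregate deviation by the coefficient sum in the final configuration, which gives at most \(4\). Nonnegativity of the total change follows because the aggregate row is kept as \(\ge\) until it is dropped.

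\emph{Stage 3: stability of the rounded point.} Let \(\bar x\) be the final integral point. Set \(q'_h\) equal to the load at \(h\) if the constraint at \(h\) was tight and dropped, and \(q'_h=q_h\) otherwise. Because \(\bar x\) lies in the support of \(x^*\), it preserves which rows dominate each column, so Stage~1 gives stability under \(q'\).

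\emph{Main obstacle.} The hardest step is the counting in the iterative rounding: showing that an extreme point always has a fixable variable or a droppable row, given the doubled coefficients from same-hospital couple columns. I would attempt the token-counting argument, assigning each fractional variable two tokens distributed across its rows, and argue that a row with more than \(3\) fractional mass receives more than its share, which contradicts linear independence.
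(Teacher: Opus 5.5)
Your route matches the one the paper recalls from Nguyen and Vohra. Scarf domination uses hospital rows ordered by the represented (lower-priority) doctor, and iterative rounding drops an equality hospital row once its fractional mass $\rho_h$ is at most $3$; this part gives $|q'_h-q_h|\leq2$ essentially correctly.

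The gap is the aggregate bound, which is the content of the theorem beyond the per-hospital estimate. Your aggregate row fixes the total \emph{load} $\sum_h H_hx^*$, not $K=\sum_h q_h$, and you keep $q'_h=q_h$ at hospitals undersubscribed at $x^*$. Nothing then prevents a full hospital whose row is dropped from rounding down while the lost load moves into an undersubscribed hospital whose row is still a slack inequality. The aggregate row stays satisfied, yet $\sum_h(q'_h-q_h)<0$. If such a slack row is itself dropped, $q'_h=q_h$ may even be infeasible.

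The construction recalled in the paper avoids this in three steps. It first adds dummy single doctors who accept only the hospitals in $H^0$, the hospitals undersubscribed at the Scarf point. It then passes to a dominating vertex at which every hospital is full, so every hospital row is an equality and the aggregate row reads $ax=K$. The dummies are deleted at the end, which is harmless because no original column was dominated at a hospital in $H^0$.

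Your remark that the aggregate row is ``kept as $\geq$ until it is dropped'' also says nothing about what happens afterwards. The lower bound comes from re-maximizing $ax$ over the current system after every removal and choosing an extreme maximizer. Since the previous point remains feasible, $ax$ never decreases.

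The upper bound $4$ is asserted rather than derived. While the aggregate equality is present, it is an extra tight row in the basis, so your dichotomy (an integral variable, or a hospital row with $\rho_h\leq3$) no longer follows from the token count. This is exactly when the aggregate row must be removed, not when ``only $O(1)$ rows remain''. What is needed is the Case~1 count of Nguyen and Vohra: if the aggregate row is present and every remaining hospital row has $\rho_h\geq4$, then at most two slack agent rows meet the fractional support. Every acceptable column of an agent uses the same number of positions: one for a single, two for a couple. Hence agents whose assignment row is tight keep their contribution to $ax$, and slack agents with no fractional variable are fixed at zero. After the removal, only those at most two agents, each occupying at most two positions, can change the total, which gives $ax^*\leq K+4$.

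Two smaller points in Stage~1:
\begin{itemize}
\item Scarf's lemma needs strict row orders. A tie at $h$ between two columns of the same couple with the same represented doctor must be broken by the couple's preference. Otherwise an S2 block in which one member already holds a position at the target hospital can be dominated at that hospital's row.
\item Domination of an $(h,h)$ column at $h$ is strictly stronger than the negation of S3. A single incumbent below both proposers already breaks domination without creating a BIS block. So only the implication ``S3 block $\Rightarrow$ undominated at $h$'' holds, and it is all you need.
\end{itemize}
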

 
 The Scarf construction of \citet{nguyen2018} uses a technical order over the
 assignment columns incident to each hospital. In the proof of
 Theorem~\ref{thm:general-capacity-bound}, we state this order explicitly and
 verify directly that domination excludes each of the three BIS blocking
 types. Thus their capacity guarantee, when specialized to our model, also
 yields a BIS-stable matching.
 
 \subsection{A tight near feasibility guarantee}
 
When the two members of every couple must either be assigned to an acceptable hospital pair or both remain unmatched, a stable matching can be obtained by changing each hospital’s capacity by at most two and increasing total capacity by at most one.
 
 \begin{theorem}
 	\label{thm:general-capacity-bound}
 	Every HRC instance
 	admits a stable matching under a capacity vector \(q'\) satisfying
 	\begin{equation*}
 		|q'_h-q_h|\leq2\quad\forall h\in\Hh,
 		\qquad
 		0\leq\sum_{h\in\Hh}(q'_h-q_h)\leq1.
 	\end{equation*}
 \end{theorem}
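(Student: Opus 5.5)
We follow the Scarf-plus-iterative-rounding scheme of \citet{nguyen2018}, specialized to the present model, and sharpen only the step that controls aggregate capacity.

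\textbf{Step 1: the polytope.} The columns are the acceptable assignments. These are one column per pair \((i,h)\in\E\) with \(i\in\Sg\), and one column per pair \((h,g)\in\Pk_k\) for each couple \(k\in\C\). We write \(A(h,\cdot)\) for the hospital rows, so \(A(h,(k,(h,g)))\) is \(2\) if \(h=g\), is \(1\) if \(h\) is exactly one of \(h,g\), and is \(0\) otherwise. Consider the polytope
\[
\{x\ge0:\ \textstyle\sum_{h}x_{ih}\le1\ (i\in\Sg),\ \sum_{p\in\Pk_k}x_{kp}\le1\ (k\in\C),\ Ax\le q\}.
\]
Each couple row appears once; this encodes the assumption that a couple is matched jointly or not at all.

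\textbf{Step 2: Scarf and domination.} Each doctor or couple row is ordered by its own preferences. Each hospital row \(h\) is ordered by the explicit technical order of \citet{nguyen2018}: columns are ranked by the \(\succ_h\)-worst of their members assigned to \(h\), with ties broken so that a column of type \((h,h)\) is compared via the lower of its two members, in the appropriate refinement. Scarf's lemma then gives a dominating vertex \(x^*\). For every integral \(\bar x\) with \(\bar x\) supported on \(\operatorname{supp}(x^*)\) that satisfies the agent rows, and with \(q'_h\) equal to the load \((A\bar x)_h\) on any hospital whose row is tight at \(x^*\), domination excludes S1, S2 and each subcase of S3. We verify this case by case, as announced before the theorem. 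The S3(iii)(b),(c) cases are where the technical order on \((h,h)\) columns is needed, and checking them is routine but fiddly.

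\textbf{Step 3: iterative rounding with one aggregate constraint.} Apply the Nguyen--Vohra iterative rounding to the face of the polytope containing \(x^*\). The rounding keeps the agent rows exact and keeps \((Ax)_h\) within \(\pm2\) of \(q_h\), and it also tracks the single aggregate row
\[
\textstyle\sum_h (Ax)_h=\sum_i x_i+2\sum_k x_k .
\]
Drop constraints in the usual way: a hospital row may be dropped once at most \(4\) nonzero coefficients remain in its support (roughly two fractional columns). At the final step, when Nguyen--Vohra discard the aggregate constraint, we use the fact that every couple column contributes exactly \(2\) to the aggregate and every single column contributes \(1\). The aggregate load is therefore \(L(x)=|\{\text{matched singles}\}|+2|\{\text{matched couples}\}|\), and it is an integer-affine function with small coefficients.

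Rather than discarding the aggregate row, we keep it as the last constraint. If it must be dropped, at that point only a constant number of fractional variables remain, and the value \(L(x)\) before dropping equals \(\sum_h q_h\), since it is preserved when it was tight. Hospital rows that were non-tight may lose capacity, so reductions at some hospitals can offset increases at others. This gives \(\sum_h q'_h\) in \(\{\lceil \sum q_h\rceil,\ \ldots\}\).

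\textbf{Step 4: the parity bound, the main obstacle.} The key claim is that the final integral point can be chosen with \(L\in\{\sum_h q_h,\ \sum_h q_h+1\}\). With coefficients \(1\) and \(2\) in a row whose fractional value is an integer, rounding the last one-dimensional face moves along an edge. The endpoints of that edge differ in \(L\) by the difference of the coefficients involved, and we want the endpoint with \(L\ge\sum q_h\) that overshoots by at most \(1\).

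If this fails directly, the first fallback is to choose which endpoint to keep by parity. We set \(q'_h\) equal to the actual load \((A\bar x)_h\) for tight hospitals, and lower \(q'_h\) to the load for non-tight hospitals where stability permits. Lowering is allowed because a non-tight row carries no domination constraint once its capacity equals the load and all its columns lie in the support.

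Finally, showing \(\sum_h(q'_h-q_h)\ge0\) comes from keeping every tight row at least at its original value; we adjust an undersubscribed hospital upward if needed. The per-hospital bound \(|q'_h-q_h|\le2\) is inherited from the row-dropping rule.
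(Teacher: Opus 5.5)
There is a genuine gap, and it sits exactly where this theorem improves on \citet{nguyen2018}. Your Step~4 does not prove that total capacity rises by at most one.

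\textbf{Why Step~4 fails.} After the aggregate row is dropped, the rounding still passes through several extreme points and further hospital-row removals. So there is no single edge whose endpoint you are free to pick, and nothing in your system prevents an overshoot of two. For instance, if two slack couples carry weights $y_1+y_2=1$ when the aggregate row goes (contributing $2$), nothing in your system prevents the later iterations from matching both of them (contributing $4$).

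\textbf{Why the fallbacks do not help.}
\begin{enumerate}
\item Lowering an undersubscribed hospital to its load can violate $|q'_h-q_h|\le 2$, since a hospital may be far from full at the Scarf vertex. It can also push the total below $K=\sum_h q_h$.
\item Raising an undersubscribed hospital can create an S3(i) block, because one free position becomes two.
\item A removed hospital row can end up to two below $q_h$, so ``keeping every tight row at least at its original value'' is false.
\item Your claim that the aggregate equals $K$ when it is dropped is unjustified. Some hospitals may be undersubscribed at the Scarf vertex, so the aggregate row can be slack from the start.
\end{enumerate}
Two smaller points. Dropping a row whose fractional coefficients sum to $4$ only gives $|q'_h-q_h|\le 3$; the rule must be $\rho_h(x)\le 3$. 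In Step~2 you need every agent row tight at $x^*$ to stay tight, not merely to be satisfied.

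\textbf{Missing ingredients.}
\begin{enumerate}
\item \emph{Fill and maximize.} Fill every undersubscribed hospital with dummy singles acceptable only there, and pass to a vertex of the minimal face, so every hospital row is tight and $ax=K$. Keep hospital rows as equalities until they are removed. After every change, choose an extreme maximizer of $ax$, so $ax$ never decreases. This gives $ax^*\ge K$, and $q'_h:=H_hx^*$ yields both the $\pm2$ bound and $\sum_h q'_h=ax^*$.
\item \emph{Integrality at the moment of removal.} When the aggregate row must go, the Case~1 token count leaves at most two slack agents $F$ carrying fractional weight. By the identity $a=\sum_r w_rA_r$, which is where joint assignment is used, every other agent's contribution is an integer frozen from then on. Hence $L=\sum_{r\in F}w_ry_r$ is a positive integer.
\item \emph{Case split on $Y=\sum_{r\in F}y_r$.} If $Y>1$, then $F$ is either two couples with $L=3$ or a couple and a single with $L=2$; two singles would force $L\in(1,2)$. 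Either way the final contribution of $F$ is at most $L+1$. If $Y\le 1$, you must replace the rows of $F$ by the grouped row $Gx=\sum_{r\in F}A_rx\le1$, which caps their final contribution at $2\le L+1$.
\item \emph{Termination.} The grouped row changes the system, so termination must be re-proved. The paper does this with a token count in which $G$ absorbs the agent shares of the columns of agents in $F$. In the extremal case every fractional column is a couple column, and
$\sum_{h\in\mathcal U}(H_h)_J=2\bigl(\sum_{c\in\mathcal C_J}(A_c)_J+G_J\bigr)$
is a linear dependence among the independent tight rows, a contradiction.
\end{enumerate}
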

 
 The proof is given in Appendix~\ref{app:general-capacity-bound-proof}. It
 retains the Scarf domination and hospital-row rounding of
 \citet{nguyen2018}, but replaces their final deletion of the aggregate
 capacity row by a grouped constraint on the at most two remaining slack
 applicant rows. The argument is existential and does not assert a
 polynomial-time algorithm. Proposition~\ref{prop:a2b0-counterexample} shows
 that the upper bound one is tight.
 
  \begin{proposition}
 	\label{prop:a2b0-counterexample}
 	There exists an HRC instance with three hospitals, four
 	couples, and two single doctors that admits no stable matching under any
 	capacity vector \(q'\) satisfying
 	\[
 	|q'_h-q_h|\leq2\quad\forall h\in\Hh,
 	\qquad
 	\sum_{h\in\Hh}q'_h=\sum_{h\in\Hh}q_h.
 	\]
 \end{proposition}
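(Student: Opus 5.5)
The plan is to exhibit an explicit instance and then verify nonexistence over the finite set of admissible capacity vectors. Fix three hospitals \(h_1,h_2,h_3\) with original capacities \(q\), four couples \(k_1,\dots,k_4\) and two singles \(s_1,s_2\). The admissible vectors \(q'\) satisfy \(|q'_h-q_h|\le 2\) and \(\sum_h q'_h=\sum_h q_h\). Since \(q'_{h_3}\) is then determined by \(q'_{h_1},q'_{h_2}\), there are at most \(25\) of them, and fewer after nonnegativity. The statement is therefore a finite claim. The proof will consist of (i) the instance, given as preference and priority lists in a table, and (ii) a case analysis over capacity vectors. I would also use the integer programs of Section~\ref{sec:ip} to confirm mechanically that each case is infeasible.

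\textbf{Design of the instance.} I would build it from the classical nonexistence gadget for couples. In that gadget a couple wants \((h_1,h_2)\), a single is ranked by \(h_1\) between the two couple members, and the resulting cycle is: the couple displaces the single, the single then blocks elsewhere, which frees a seat that makes the couple prefer to move. Since Theorem~\ref{thm:general-capacity-bound} says one extra unit of total capacity always suffices, the example must be a parity or tightness obstruction at total capacity exactly \(\sum_h q_h\).

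My concrete choice is to make the total original capacity odd. Singles and couples use one and two seats respectively, so an odd total forces an odd number of matched singles whenever all seats are filled. The two singles and four couples would be arranged with the following properties.
\begin{enumerate}[label=(\alph*)]
\item High-priority couples, placed at the top of every hospital's list and ranking same-hospital pairs \((h,h)\), soak up capacity. They exploit S3, so that any hospital left with two free positions, or holding low-priority doctors, is blocked.
\item The two singles and one low-priority couple realize the cyclic gadget on whatever seats the high-priority couples leave free.
\end{enumerate}
The intended effect is that, for every redistribution, the residual capacity seen by the gadget is either one of the original unstable configurations or leaves an idle seat that some couple or single uses to block.

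\textbf{Verification.} I would group the capacity vectors by the residual capacity available to the gadget after the high-priority couples are placed. First I show, via S1 and S3, that in any stable matching the high-priority couples must be matched to their top pairs whenever capacity permits; otherwise they block. This pins them down and reduces each case to a small residual market. For each residual market I then exhibit the blocking cycle, checking the matchings where the low-priority couple is assigned against those where it is unmatched.

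\textbf{Main obstacle.} The hard step is the design itself: choosing preferences so that \emph{every} reallocation within \(\pm2\) fails, including lopsided vectors such as moving two seats from \(h_3\) to \(h_1\). Such shifts can accidentally give a couple a comfortable \((h,h)\) slot and create a stable matching. My first attempt will use a computer search over small instances with exactly this agent count, using the fixed-capacity integer program to reject candidates, and then hand-simplify the instance found. Independently of that search, I would check that adding one unit of capacity does restore stability, as Theorem~\ref{thm:general-capacity-bound} guarantees; an instance failing that check means the verification has an error.
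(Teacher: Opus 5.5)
Your proposal identifies the right shape of proof: an explicit instance plus a finite check of the admissible vectors. There are at most 19 such vectors once the bound at $h_3$ is also imposed, not 25. But the proposal supplies neither ingredient. No instance is written down, no case is verified, and a planned computer search is not a proof.

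The design sketch is also not evidently realizable. Suppose the capacity-absorbing couples are at the top of every list and accept only pairs $(h,h)$.
\begin{itemize}
\item In any BIS-stable matching these couples are placed by serial dictatorship among themselves, because a higher such couple not at its best available $(h,h)$ blocks via S3.
\item They occupy an even number of seats at every hospital, so the residual capacities $r$ seen by your gadget satisfy $r_h\equiv q'_h\pmod{2}$.
\item Moving one seat from $h_1$ to $h_2$ flips both parities. The gadget therefore faces genuinely different residual markets for different admissible $q'$, and it must be unstable in every one of them.
\end{itemize}
That is the original robustness requirement in miniature. The classical gadget you invoke fails it: a couple accepting only $(h_1,h_2)$ and a single with $h_1\succ h_2$ become stable as soon as either hospital gains or loses a seat.

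The paper closes the gap without using S3 at all. It takes:
\begin{itemize}
\item $q=(3,3,3)$;
\item two singles, each with preferences $h_1\succ h_2\succ h_3$;
\item four couples whose acceptable pairs all use two distinct hospitals;
\item the master list $d_2\succ d_6\succ d_7\succ d_3\succ d_8\succ s_1\succ d_1\succ s_2\succ d_4\succ d_5$.
\end{itemize}
The step that makes your parity intuition bite is that both singles are matched in any matching without an S1 block. Otherwise all nine seats would have to be filled by the five (respectively seven) doctors ranked above the unmatched single. With total capacity nine, at most three couples then fit, so some couple is unmatched. For each of the 19 admissible vectors, the no-block conditions force the remaining assignment, and the couple left out forms an S2 block.

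Because no pair $(h,h)$ is acceptable, the same instance also works for KPR, MM and \NV{}. A construction relying on S3 blocks would not automatically give this.
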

 
 The counterexample is given in
 Appendix~\ref{app:a2b0-counterexample}. 
 Every acceptable couple assignment in the instance involves two distinct hospitals, so case S3 does not arise and all stability notions considered in this paper coincide for this instance.
 
 \subsection{Balanced capacity modification}
 \label{subsec:deterministic-theory}
 
 The result in the previous subsection applies to arbitrary strict preferences.
 The theorem below assumes that couple preferences are sub-responsive and
 sub-complete. A couple is sub-responsive if there are strict individual
 preference orders for its two members such that, whenever two acceptable
 hospital pairs differ only in one member's assignment, the couple's ranking
 of those pairs agrees with that member's order. Sub-completeness requires every
 pair of individually acceptable hospitals to be acceptable to the couple. Thus,
 for a couple \(k=(a,b)\), \(\Pk_k=H(a)\times H(b)\).
 
 \begin{theorem}[Cs\'aji et al., \citeyear{csaji2026}]
 	\label{thm:csaji-unit-capacity-bound}
 	Suppose that every couple has sub-responsive and sub-complete preferences.
 	For any initial capacity vector \(q\), there exist a capacity vector \(q'\)
 	and an MM-stable matching under \(q'\) such that
 	\(|q'_h-q_h|\leq1\) for every \(h\in\Hh\).
 \end{theorem}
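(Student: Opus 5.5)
This theorem restates \citet[Theorem~3.1]{csaji2026}; I would prove it by their route, a reduction to Stable Fixtures followed by rounding a stable fractional solution. The plan has four steps.

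\emph{Step 1 (split the couples).} Sub-responsiveness and sub-completeness let each couple \(k=(a,b)\) be viewed as two individual doctors \(a\) and \(b\). Each member ranks hospitals by its own individual order \(\succ_a\) or \(\succ_b\). Sub-completeness gives \(\Pk_k=H(a)\times H(b)\), so any combination of acceptable individual assignments is acceptable to the couple. The couple is then coupled only through the requirement that both members are matched or neither is.

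\emph{Step 2 (Stable Fixtures instance).} Encode the market as a nonbipartite Stable Fixtures instance, i.e.\ stable \(b\)-matching in a general graph with strict preferences:
\begin{itemize}
\item each hospital \(h\) becomes a vertex with capacity \(q_h\);
\item each single doctor becomes a vertex of capacity one;
\item each couple \(k=(a,b)\) becomes a single vertex of capacity two, adjacent to the hospitals in \(H(a)\cup H(b)\) via labelled edges \((a,h)\) and \((b,h)\), with a pair of edges between the couple and the hospital \(a\) and \(b\) share when \((h,h)\) is acceptable.
\end{itemize}
By Irving--Scott/Tan theory, a stable half-integral solution always exists, which can also be obtained as a Scarf-type dominating fractional point. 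Its fractional part decomposes into odd cycles carrying weight \(1/2\) on each edge.

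\emph{Step 3 (round the cycles).} Round each odd cycle by choosing alternating edges. Each hospital lies on a fractional cycle with total incident fractional weight at most one per cycle occurrence. Organising the rounding so that each hospital is touched by at most one parity defect, every hospital ends with load within \(\pm1\) of \(q_h\). Set \(q'_h\) equal to the realized load at each hospital whose load moved.

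\emph{Step 4 (verify MM-stability, the main obstacle).} This step splits into two parts.
\begin{itemize}
\item \emph{Both-or-neither.} The rounding must not leave exactly one member of a couple matched. I would enforce this by a parity choice on the couple vertex: its two incident edges are rounded together, either both up or both down. The odd-cycle structure still leaves one free choice per cycle to absorb the hospital-side imbalance.
\item \emph{No blocking coalition.} Every blocking coalition of the rounded matching must correspond to a blocking edge of the fractional stable solution. For singles and for a couple member improving individually, sub-responsiveness makes this a single-agent deviation, so it is excluded by stability of the half-integral solution. A couple's joint deviation that changes both members' assignments must be reduced to two individual improvements, using the responsive decomposition. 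An MM-stability (rather than BIS) deviation must be checked against the capacity change at \(h\), and that check is exactly why the result is stated for MM.
\end{itemize}
I expect the delicate point to be aligning the cycle rounding with the couple-pairing constraint while keeping hospital changes within one. My first attempt would be to contract each couple into a capacity-two vertex before computing the half-integral solution, so that cycles never separate partners.
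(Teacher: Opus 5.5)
\textbf{The gap is in Step~2, and Steps~3--4 cannot repair it.} Your auxiliary graph is bipartite: hospitals are on one side, and singles and couple vertices are on the other, parallel labelled edges included. Stable Fixtures on it therefore already has an integral stable solution. There are no odd cycles for Step~3 to round, and everything rests on the both-or-neither repair, which a capacity-two couple vertex does not control. If your encoding were faithful, that integral solution would be stable with no capacity change at all. The component in the proof of Proposition~\ref{prop:tight-one-sided-budget} has sub-complete, sub-responsive preferences and rules this out.

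The capacity-two vertex fails in three ways:
\begin{itemize}
\item It bounds only the total weight on its edges, so it can hold two $a$-labelled edges. If $a$'s hospitals are free and $b$'s reject $b$, stability forces both to value one.
\item Its single order compares $a$'s hospitals with $b$'s, which has no meaning.
\item Nothing forces it to be saturated, so couples get split.
\end{itemize}
An integral split cannot be rounded ``both up'', because the second member has no positive edge. Rounding ``both down'' can move one hospital by two. For example, let $q_h=2$, and let couples $(a_1,b_1),(a_2,b_2)$ accept only $(h,g)$. Let $q_g=1$, with a single $s$ who accepts only $g$ and $s\succ_g b_1,b_2$. Let a single $t$ accept only $h$, with $a_1,a_2\succ_h t$. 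The unique stable solution of your instance puts $a_1,a_2$ at $h$ and $s$ at $g$. After rounding both couples down, either $t$ blocks $h$ or $q'_h$ must fall by two.

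\textbf{The missing idea is the connected couple gadget of \citet{csaji2026},} which the paper recalls in its proof of Theorem~\ref{thm:one-sided-near-feasibility}. Keep $a$ and $b$ as capacity-one vertices with their individual orders. Add connectors $u_k^a,v_k^a,u_k^b,v_k^b$ with edges $au_k^a,u_k^av_k^a,v_k^ab,bu_k^b,u_k^bv_k^b,v_k^ba$. Doctor $a$ ranks $u_k^a$ first and $v_k^b$ last, symmetrically for $b$, with $v_k^a\succ_{u_k^a}a$, $b\succ_{v_k^a}u_k^a$, and so on.
\begin{itemize}
\item Each gadget vertex is another gadget vertex's first choice, hence saturated. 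The saturation equations then show that $a$ has a positive hospital edge exactly when $b$ does.
\item The gadget places $a$ and $b$ on opposite sides, so any doctor--hospital path between them closes an odd cycle. This is where half-edges, and hence capacity changes, come from.
\end{itemize}

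\textbf{Rounding is then not an alternating parity choice.} A parity defect at a doctor would give her two hospitals or none. Instead, the orientation argument makes the half-edges vertex-disjoint cycles through saturated vertices, each carrying exactly two half-edges. A hospital on such a cycle therefore has $q_h-1$ integral edges. Each doctor takes her best hospital in the support, and $\widehat q_h=|M(h)|$ at saturated hospitals. This gives $|\widehat q_h-q_h|\le1$ immediately, and in fact $0\le\widehat q_h-q_h\le1$.

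\textbf{For stability you need one rejection, not two individual improvements.}
\begin{itemize}
\item For a matched couple, sub-completeness and sub-responsiveness yield a member $i$ who strictly prefers her proposed hospital $r$. Then $x_{ir}=0$, and $ir$ is not dominated at $i$, because her first-choice connector has weight at most $\tfrac12$. So $ir$ is dominated at $r$, which is then full under $\widehat q$ with all assignees ranked above $i$.
\item This single rejection excludes S1, S2, and every same-hospital block, under MM and BIS alike. So MM is not what makes the argument work.
\item For an unmatched couple, suppose both proposed edges were dominated at the doctors. Then $x_{au_k^a}=x_{bu_k^b}=1$, so $u_k^av_k^a$ would block $x$.
\end{itemize}
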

 
Theorem~\ref{thm:csaji-unit-capacity-bound} allows capacity increases and reductions but does not bound the change in total capacity. The following counterexample shows that a stable matching need not exist when total capacity is unchanged.
 
 \begin{counterexample}
 	\label{ex:ml-responsive-bc-counterexample} 
 	Let \(\Hh=\{h_1,h_2\}\), let \(q=(2,2)\), and consider one couple
 	\(k=(d_1,d_2)\) and three single doctors \(s_1,s_2,s_3\). The individual
 	preference orders are \(d_1\colon h_2\succ h_1\),
 	\(d_2\colon h_1\succ h_2\), \(s_1\colon h_1\succ h_2\), and
 	\(s_2,s_3\colon h_2\succ h_1\). Both hospitals use the common priority order
 	\(d_1\succ s_1\succ s_2\succ s_3\succ d_2\). The couple's sub-responsive and sub-complete ranking of the four hospital pairs is
 	\((h_2,h_1)\succ_k(h_2,h_2)\succ_k(h_1,h_1)\succ_k(h_1,h_2)\).
 	
 	The three capacity vectors permitted by balanced
 	capacity modification are
 	\((1,3)\), \((2,2)\), and \((3,1)\). If the couple is assigned, at least
 	one single doctor is unmatched because total capacity is four. Every single
 	doctor accepts both hospitals and has higher priority than \(d_2\). An
 	unmatched single therefore forms an S1 block with the hospital to which
 	\(d_2\) is assigned, using an unfilled position if one is available and
 	otherwise displacing \(d_2\). Hence a matching without an S1 block must leave
 	the couple unmatched.
 	
 	If the couple is unmatched, all three singles must be assigned; otherwise an
 	unmatched single would block an acceptable hospital with an unfilled
 	position. The assignments below are forced by the singles' preferences and
 	the common priority order.
 
 	Under \((1,3)\), \(s_1\) must be assigned to \(h_1\); otherwise \(s_1\)
 	would block \(h_1\), which would either have an unfilled position or be
 	occupied by a lower priority single. Hence \(s_2,s_3\) are assigned to
 	\(h_2\). The couple then forms an S2 block with \((h_1,h_2)\), because
 	\(d_1\) can displace \(s_1\) at \(h_1\) and \(d_2\) can use the unfilled
 	position at \(h_2\).
 
 	Under \((2,2)\), assigning \(s_2\) to \(h_1\) creates an S1 block at
 	\(h_2\), where \(s_2\) can use an unfilled position or displace \(s_3\).
 	If \(s_3\) is assigned to \(h_1\), then either \(s_3\) blocks \(h_2\), or
 	\(h_2\) is full with \(s_1,s_2\), in which case \(s_1\) blocks the unfilled
 	position at \(h_1\). Thus \(s_1\) is assigned to \(h_1\) and \(s_2,s_3\)
 	are assigned to \(h_2\). The couple forms an S2 block with \((h_2,h_1)\),
 	because \(d_1\) can displace a single doctor at \(h_2\) and \(d_2\) can use
 	the unfilled position at \(h_1\).
 
 	Under \((3,1)\), the position at \(h_2\) must be assigned to \(s_2\). Leaving
 	it unfilled gives \(s_2\) an S1 block. If it is assigned to \(s_3\), then
 	\(s_2\) can displace \(s_3\); if it is assigned to \(s_1\), then \(s_1\)
 	prefers the unfilled position at \(h_1\). Hence \(s_1,s_3\) are assigned to
 	\(h_1\). The couple again forms
 	an S2 block with \((h_2,h_1)\), because \(d_1\) can displace \(s_2\) at
 	\(h_2\) and \(d_2\) can use the unfilled position at \(h_1\). Thus none of
 	the three capacity vectors admits a stable matching.
 \end{counterexample}
 
Because S1 and S2 are identical under all four definitions, the counterexample applies to all four stability notions considered in this paper.
 
 \subsection{Unit capacity expansion}
 \label{subsec:structured-main-results}
 
 Under the same preference restrictions, the next result establishes
 BIS-stability using only capacity increases and bounds the total number of
 positions added. Let \(L\) denote the size of the input used by the algorithm, including
 hospital capacities, single doctors' preferences, hospital priorities,
 and the two individual orders for each couple that establish
 sub-responsiveness.
 
 \begin{theorem}
 	\label{thm:one-sided-near-feasibility}
 	Consider any HRC instance in which couple preferences are sub-complete and
 	sub-responsive, and let \(q\) be its capacity vector. There exist a matching
 	\(M\) and a capacity vector \(\widehat q\) such that \(M\) is stable under
 	\(\widehat q\) and
 	\begin{equation*}
 		0\leq \widehat q_h-q_h\leq1\quad\forall h\in\Hh,
 		\qquad
 		\sum_{h\in\Hh}(\widehat q_h-q_h)\leq|\C|.
 	\end{equation*}
 	The matching and capacities can be found in \(O(L)\) time.
 \end{theorem}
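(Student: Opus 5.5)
We reduce the instance to Stable Fixtures, following the construction behind Theorem~\ref{thm:csaji-unit-capacity-bound} of \citet{csaji2026}, and then sharpen the rounding step so that only increases occur and at most one position per couple is added. Sub-responsiveness gives each couple $k=(a,b)$ individual orders $\succ_a,\succ_b$. Sub-completeness gives $\Pk_k=H(a)\times H(b)$. We treat $a$ and $b$ as individual applicants with these orders and run hospital-proposing or doctor-proposing deferred acceptance on the auxiliary instance in which couples are split. The result is a stable matching $M_0$ of the split instance under $q$.

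The obstacle is that $M_0$ may place exactly one member of a couple. The couple must then be repaired by either assigning the missing partner or releasing the matched member. Our plan is to use the Stable Fixtures rounding. We build a graph whose vertices are the hospitals, each with capacity $q_h$, together with one vertex per couple, with capacity $2$. We add edges encoding each member's acceptable hospitals, and singles are absorbed into hospital capacities via their deferred-acceptance positions. A half-integral stable fractional fixture exists by Tan/Irving-type arguments in linear time. The half-integral edges form cycles, and around each odd-type cycle we round toward a hospital by increasing its capacity by one. The key counting step is that every fractional cycle must pass through at least one couple vertex. This holds because the single-doctor/hospital part is bipartite and admits an integral stable solution. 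We then charge each unit of added capacity to a distinct couple, which gives $\sum_h(\widehat q_h-q_h)\le|\C|$. Rounding up each hospital in a cycle at most once gives $0\le \widehat q_h-q_h\le 1$; where a hospital lies on several cycles, we choose the rounding direction on each cycle so that this hospital is never the one rounded up twice.

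Next we verify BIS-stability of the resulting $M$ under $\widehat q$. For S1, stability of the fixture at the hospital–single edges rules out blocks. Capacity only grows, so no previously full hospital becomes able to displace anyone except through an edge that the fixture already covers. For S2, a couple $(h,g)$ preferred to $(M(a),M(b))$ differs in at least one coordinate. Sub-responsiveness lets us reduce this to a comparison of the individual orders, and we then use stability of the fixture edge at the better coordinate. Each member receives a hospital that is at least as good as one that can accommodate them. For S3, with $h=g$, the same argument applies to two edges at $h$. Conditions (ii) and (iii) need a check that the two seats at $h$ cannot be freed simultaneously; this follows because a fixture-stable couple vertex saturated at degree $2$ dominates both seats.

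We expect the main obstacle to be the rounding. We must show that each fractional cycle can be rounded so that all couples end up either fully matched or unmatched, that each couple is charged at most once, and that the S3 conditions survive the rounding. We will first try orienting each cycle so that the couple vertex takes both of its half-edges, which forces an increase at exactly one adjacent hospital. If that breaks stability for S3(iii)(b), we will instead drop the couple and argue via sub-completeness that no block arises. Finally, the $O(L)$ bound follows because deferred acceptance and the half-integral Stable Fixtures algorithm both run in time linear in the list lengths, and the cycle rounding is a single traversal.
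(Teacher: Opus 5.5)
\paragraph{The gadget.} The construction at the heart of your plan does not work. A graph whose vertices are hospitals and capacity-two couple vertices, with edges joining agents to hospitals, is bipartite. This holds whether the singles are absorbed or kept as further vertices. Stable Fixtures on it is therefore ordinary stable $b$-matching: it has an integral stable solution, there are no odd half-edge cycles, and your rounding has nothing to act on. More importantly, nothing ties $a$'s seat to $b$'s. A couple vertex can be saturated by two edges of the same member, or hold only one edge, which is exactly the partial assignment you must exclude. For the same reason, your justification of the counting step ("the single/hospital part is bipartite") is inconsistent: the whole graph is bipartite.

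The missing idea is the connected couple gadget of \citet{csaji2026}. Keep $a$ and $b$ as capacity-one doctor vertices and add capacity-one connectors $u_k^a,v_k^a,u_k^b,v_k^b$ on the six-cycle $a\,u_k^a\,v_k^a\,b\,u_k^b\,v_k^b\,a$. Doctor $a$ ranks $u_k^a$ first, her hospitals next and $v_k^b$ last, and symmetrically for $b$; the connectors have $v_k^a\succ_{u_k^a}a$, $b\succ_{v_k^a}u_k^a$, $v_k^b\succ_{u_k^b}b$, $a\succ_{v_k^b}u_k^b$. Every gadget vertex is some other gadget vertex's first choice, hence saturated. The saturation equations then give $x_{au_k^a}+x_{v_k^ba}=1$ whenever $b$ has no positive hospital edge. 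This yields both-or-neither, and it is also what makes the graph non-bipartite.

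\paragraph{The singles.} Freezing the singles at their deferred-acceptance positions breaks S1. When a couple member is later moved or released, a single previously rejected at that hospital may face a vacancy or a lower-priority incumbent, and she can no longer respond. The singles must remain capacity-one vertices of the fixture graph.

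\paragraph{Rounding and capacity bounds.} Even with the right gadget, these steps need arguments the proposal lacks. Round by giving each doctor her most preferred hospital with positive weight. Set $\widehat q_h=|M(h)|$ for saturated $h$ and $\widehat q_h=q_h$ otherwise. Now orient each half-edge towards an endpoint where it is dominated. Every vertex on a half-edge then has exactly one incoming and one outgoing half-edge, so half-edge cycles are vertex-disjoint. Your ad hoc choice of rounding directions is therefore unnecessary, and it would be unjustified anyway, since changing directions can destroy stability.

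The same orientation gives the no-reduction lemma. If $ih$ enters $h$, the other half-edge at $i$ is dominated at $i$, so it leads either to a worse hospital or to the last-choice connector. Hence $h$ keeps $i$, and $q_h\le|M(h)|\le q_h+1$. The total expansion is not one per cycle. It equals $n_1/2$, where $n_1$ counts doctors with exactly one hospital half-edge. Since a single doctor on a half-edge has two hospital half-edges, $n_1\le2|\C|$.

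\paragraph{Stability.} The uniform argument is to exhibit, for every preferred alternative, a proposed doctor $i$ and target $r$ with $x_{ir}=0$ and $ir$ dominated at $r$. Then $r$ is full under $\widehat q$ with every assignee above $i$, which excludes S1, S2 and every case of S3 at once.
\begin{itemize}
\item For a matched couple, sub-completeness and sub-responsiveness yield a strictly improving member $i$. The edge $ir$ is not dominated at $i$, because the first-choice connector carries at most $\tfrac12$.
\item For an unmatched couple, domination of both proposed edges at the doctors would force $x_{au_k^a}=x_{bu_k^b}=1$. That leaves $u_k^av_k^a$ undominated at both endpoints, contradicting stability of $x$.
\end{itemize}
Your S3 remark about a saturated couple vertex and your fallback of dropping the couple do not supply this reasoning.
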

 
 The proof is given in Appendix~\ref{app:one-sided-near-feasibility-proof}.
 For an instance \(\mathcal I\), let \(B_{\min}(\mathcal I)\) denote the
 smallest budget for which unit capacity
 expansion is feasible. The next result identifies
 the exact worst case budget of
 Theorem~\ref{thm:one-sided-near-feasibility}.
 
 \begin{proposition}
 	\label{prop:tight-one-sided-budget}
 	For every integer \(m\geq 1\), there exists an HRC instance
 	\(\mathcal I\) with \(m\) couples and sub-complete and sub-responsive
 	couple preferences such that $B_{\min}(\mathcal I)=m$.
 \end{proposition}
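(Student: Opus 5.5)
The plan is a disjoint-union construction. I will build a one-couple gadget that has no stable matching at its original capacities, take \(m\) copies of it on pairwise disjoint sets of hospitals and doctors, and argue that every copy needs at least one added position. The matching upper bound \(B_{\min}(\mathcal I)\le m=|\C|\) comes directly from Theorem~\ref{thm:one-sided-near-feasibility}, since the gadget will satisfy sub-completeness and sub-responsiveness. So the real content is the lower bound \(B_{\min}(\mathcal I)\ge m\).

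\textbf{The gadget.} For the gadget I will reuse the instance of Counterexample~\ref{ex:ml-responsive-bc-counterexample}. It has hospitals \(h_1,h_2\) with \(q=(2,2)\), one couple with sub-responsive and sub-complete preferences, and three single doctors. The case analysis there covers the capacity vector \((2,2)\). It shows that at the original capacities every feasible matching is blocked: if the couple is matched, an unmatched single gives an S1 block; if the couple is unmatched, the singles' assignment is forced and the couple gives an S2 block. So \(B_{\min}\) of the gadget is at least \(1\). If I want the stronger version in which every couple assignment uses distinct hospitals, I would try removing \((h_2,h_2)\) and \((h_1,h_1)\). However, that breaks sub-completeness, so for the proposition as stated I keep the original gadget.

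\textbf{The union instance.} Let \(\mathcal I\) consist of copies \(\mathcal I_1,\dots,\mathcal I_m\) of the gadget. Doctors in copy \(t\) find only the hospitals of copy \(t\) acceptable, and each hospital ranks only its own copy's doctors. If a common master list is desired, I concatenate the copies' priority orders. Couple preferences remain sub-responsive and sub-complete, because the acceptable sets are unchanged within each copy.

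\textbf{The decomposition step (main obstacle).} I need that for any capacity vector \(\widehat q\), a matching \(M\) of \(\mathcal I\) is stable under \(\widehat q\) if and only if each restriction \(M|_{\mathcal I_t}\) is stable in \(\mathcal I_t\) under \(\widehat q\) restricted to copy \(t\). Feasibility is hospital-wise, so it splits across copies. For stability, each blocking condition S1--S3 in Definition~\ref{def:bis-blocking} involves one doctor or couple, hospitals acceptable to it, and the occupants and free positions of those hospitals. All of these lie in a single copy, so blocking coalitions of \(\mathcal I\) are exactly the blocking coalitions of the individual copies.

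\textbf{Conclusion.} Suppose \(\widehat q\ge q\) with \(\sum_h(\widehat q_h-q_h)<m\). Then some copy \(t\) receives no additions, so \(\widehat q\) agrees with \(q\) on \(\mathcal I_t\). By the gadget property, \(\mathcal I_t\) has no stable matching at these capacities, and by decomposition neither does \(\mathcal I\). Hence \(B_{\min}(\mathcal I)\ge m\), and together with Theorem~\ref{thm:one-sided-near-feasibility} this gives \(B_{\min}(\mathcal I)=m\).
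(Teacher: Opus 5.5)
Your proof is correct, and its architecture matches the paper's. Both use disjoint one-couple components and the observation that blocking coalitions never cross components. Both then argue by pigeonhole that a budget below \(m\) leaves some component without additions.

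The difference lies in the gadget and in how the upper bound is obtained.

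You reuse Counterexample~\ref{ex:ml-responsive-bc-counterexample} at \(q=(2,2)\), whose nonexistence argument is already written and uses only S1 and S2 blocks. The paper instead builds a new component:
\begin{itemize}
\item hospitals \(h_k^1,h_k^2,h_k^3\) with capacities \((1,1,2)\);
\item a couple \((a_k,b_k)\) with \(H(a_k)=\{h_k^1\}\) and \(H(b_k)=\{h_k^2\}\);
\item singles \(s_k\colon h_k^1\succ h_k^2\) and \(t_k\colon h_k^3\).
\end{itemize}
Giving each member a single acceptable hospital is exactly the device that removes the obstacle you noted about dropping the same-hospital pairs. Sub-completeness then forces the unique option \((h_k^1,h_k^2)\), and sub-responsiveness holds vacuously. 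Every couple assignment therefore uses distinct hospitals, so the paper's instance also supports the stronger tightness claim in the introduction. Your construction does not give that.

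For the upper bound, the paper notes that Theorem~\ref{thm:one-sided-near-feasibility} gives \(B_{\min}\leq|\C|\) in general. It also verifies the bound for this instance directly: it raises each \(h_k^1\) to two and gives every agent a first choice. Your appeal to the theorem is valid. The same direct argument is available in your gadget, so you could avoid the Stable Fixtures construction entirely. Raise \(h_2\) to three, assign the couple to \((h_2,h_1)\), \(s_1\) to \(h_1\), and \(s_2,s_3\) to \(h_2\). Every agent then receives a first choice, so \(m\) additions suffice.
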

 
 The lower bound construction is given in
 Appendix~\ref{app:tight-one-sided-budget-proof}. Thus \(|\C|\) is not only
 sufficient; it is the smallest budget that guarantees feasibility throughout
 this preference domain.

\section{Integer programming formulations}
\label{sec:ip}

This section presents integer programming formulations for stable matching
at fixed capacities and with capacity modification.
Section~\ref{subsec:fixed-ip} gives the formulation for the fixed capacity
vector \(q\). Section~\ref{subsec:variable-ip} makes capacities decision
variables and derives linear stability constraints for bounded capacity
changes. Section~\ref{subsec:special-capacity-modifications} applies this
formulation to the three capacity policies.

For each \((i,h)\in\E\), let \(x_{ih}=1\) if doctor \(i\) is assigned to
hospital \(h\), and let \(x_{ih}=0\) otherwise. For each couple \(k\in\C\)
and pair \(p\in\Pk_k\), let \(y_{kp}=1\) if the couple is assigned to \(p\).
The basic feasibility constraints are
\begin{align}
  &\sum_{h\in H(i)}x_{ih}\leq1
  &&\forall i\in\D,                                      \label{eq:doctor}\\
  &\sum_{p\in\Pk_k}y_{kp}\leq1
  &&\forall k\in\C,                                      \label{eq:couple}\\
  &x_{d_k^1h}
  =\sum_{\substack{p=(h,g)\in\Pk_k}}y_{kp}
  &&\forall k\in\C,\ h\in H(d_k^1),                      \label{eq:link1}\\
  &x_{d_k^2h}
  =\sum_{\substack{p=(g,h)\in\Pk_k}}y_{kp}
  &&\forall k\in\C,\ h\in H(d_k^2).                      \label{eq:link2}
\end{align}
Constraints~\eqref{eq:link1}--\eqref{eq:link2} connect the individual and
couple variables. If \(y_{k,(h,g)}=1\), the first member of couple \(k\) is
assigned to \(h\) and the second member is assigned to \(g\). If
\(y_{kp}=0\) for every \(p\in\Pk_k\), neither member is assigned.

\subsection{IP with fixed capacities (IFC)}
\label{subsec:fixed-ip}

Under the capacity vector \(q\), the hospital feasibility constraint is
\begin{equation}
  \sum_{i:(i,h)\in\E}x_{ih}\leq q_h
  \qquad \forall h\in\Hh.
  \label{eq:fixed-capacity}
\end{equation}
Let $H_h^{\succeq}(i):=\{g\in H(i):g\succeq_i h\}$ and  $\D_i^{\succeq}(h):=\{j\in\D:(j,h)\in\E,\ j\succeq_h i\}$. The following inequalities exclude S1 blocks:
\begin{equation}
  q_h\left(1-
  \sum_{g\in H_h^{\succeq}(i)}x_{ig}\right)
  \leq
  \sum_{j\in\D_i^{\succeq}(h)}x_{jh}
  \qquad \forall i\in\Sg,\ h\in H(i).
  \label{eq:fixed-S1}
\end{equation}
If doctor \(i\) is not assigned to \(h\) or to a hospital that \(i\) prefers
to \(h\), then \(h\) must be full with doctors to whom it gives at least as
high a priority as to \(i\).

Let $\Pk_p^{\succeq}(k):=\{p'\in\Pk_k:p'\succeq_k p\}$. For every couple \(k=(d_k^1,d_k^2)\) and option
\(p=(h,g)\in\Pk_k\) with \(h\neq g\), S2 is excluded by
\begin{align}
  q_h\left(1-
  \sum_{p'\in\Pk_p^{\succeq}(k)}y_{kp'}
  -\alpha_{kp}^{1}\right)
  &\leq
  \sum_{j\in\D_{d_k^1}^{\succeq}(h)}x_{jh}-x_{d_k^1h},
  &&\forall k\in\C,\ p=(h,g)\in\Pk_k,\ h\neq g,
  \label{eq:fixed-S2a}\\
  q_g\left(1-
  \sum_{p'\in\Pk_p^{\succeq}(k)}y_{kp'}
  -\alpha_{kp}^{2}\right)
  &\leq
  \sum_{j\in\D_{d_k^2}^{\succeq}(g)}x_{jg}-x_{d_k^2g},
  &&\forall k\in\C,\ p=(h,g)\in\Pk_k,\ h\neq g,
  \label{eq:fixed-S2b}\\
  \alpha_{kp}^{1}+\alpha_{kp}^{2}&\leq1,
  &&\forall k\in\C,\ p=(h,g)\in\Pk_k,\ h\neq g.
  \label{eq:fixed-S2c}
\end{align}
When couple \(k\) prefers \(p\) to its current assignment,
constraint~\eqref{eq:fixed-S2c} ensures that at least one of
\eqref{eq:fixed-S2a} and~\eqref{eq:fixed-S2b} remains active. The active
constraint requires the corresponding hospital to be full with doctors it
ranks at least as highly as the proposed member.
It remains to exclude S3. Consider \(p=(h,h)\in\Pk_k\), and relabel the
members of \(k\) as \(i_1,i_2\) so that \(i_1\succ_h i_2\). Define
\begin{equation*}
  	\Kcross_h(k,i_2):=
  	\left\{
  	\ell\in\C\setminus\{k\}:
  	(h,h)\in\Pk_\ell,\
  	d_\ell^1\prec_h i_2\prec_h d_\ell^2
  	\ \text{or}\
  	d_\ell^2\prec_h i_2\prec_h d_\ell^1
  	\right\}.
\end{equation*}
Therefore, S3 is excluded by
\begin{equation}
  \begin{aligned}
  q_h\left(1-
  \sum_{p'\in\Pk_p^{\succeq}(k)}y_{kp'}\right)
  -1+x_{i_1h}+x_{i_2h}
  +\sum_{\ell\in\Kcross_h(k,i_2)}y_{\ell,(h,h)}
  \leq
  \sum_{j\in\D_{i_2}^{\succeq}(h)}x_{jh},
  &&\forall k\in\C,\ p=(h,h)\in\Pk_k,\ i_1\succ_h i_2.
  \end{aligned}
  \label{eq:fixed-S3}
\end{equation}
When both members of a couple apply to the same hospital, the BIS constraint
in \citet[Eq.~(18)]{delorme2021} reduces to \eqref{eq:fixed-S3} under strict
preferences.

IFC is given by
\begin{equation}
  \begin{aligned}
  \max\quad &\sum_{i\in\D}\sum_{h\in H(i)}x_{ih}\\
  \text{subject to}\quad
  &\eqref{eq:doctor}\text{--}\eqref{eq:fixed-S3},\\
  &x_{ih}\in\{0,1\}
  &&\forall (i,h)\in\E,\\
  &y_{kp}\in\{0,1\}
  &&\forall k\in\C,\ p\in\Pk_k,\\
  &\alpha_{kp}^1,\alpha_{kp}^2\in\{0,1\}
  &&\forall k\in\C,\ p=(h,g)\in\Pk_k,\ h\neq g.
  \end{aligned}
  \label{eq:fixed-model}
\end{equation}
This is the specialization of the fixed capacity BIS-formulation in
\citet[Section~4]{delorme2021} to strict preferences and couples that cannot
be partially assigned. Consequently, IFC is feasible if and only if a stable
matching exists under \(q\). When IFC is feasible, its objective selects a
stable matching with the maximum number of matched doctors.

\subsection{IP with capacity modification (ICM)}
\label{subsec:variable-ip}

ICM determines hospital capacities together with a matching that is stable
under those capacities. For each hospital \(h\), let \(u_h^+\) and
\(u_h^-\) be nonnegative integer variables for the numbers of positions
added and removed, respectively. The modified capacity is
\begin{equation}
  q'_h=q_h+u_h^+-u_h^-\geq0
  \qquad\forall h\in\Hh.
  \label{eq:general-capacity-change}
\end{equation}
Hospital feasibility becomes
\begin{equation}
  \sum_{i:(i,h)\in\E}x_{ih}
  \leq q'_h
  \qquad \forall h\in\Hh.
  \label{eq:variable-capacity}
\end{equation}

The stability constraints are obtained by replacing \(q\) with \(q'\) in
\eqref{eq:fixed-S1}--\eqref{eq:fixed-S3}. Thus the general ICM is
\begin{equation}
 \begin{aligned}
 \max\quad &\sum_{i\in\D}\sum_{h\in H(i)}x_{ih}\\
 \text{subject to}\quad
  &\eqref{eq:doctor}\text{--}\eqref{eq:link2},\
  \eqref{eq:general-capacity-change},\
  \eqref{eq:variable-capacity},\\
  &\eqref{eq:fixed-S1}\text{--}\eqref{eq:fixed-S3}
  \quad\text{with \(q\) replaced by \(q'\)},\\
  &x_{ih}\in\{0,1\}
  &&\forall (i,h)\in\E,\\
  &y_{kp}\in\{0,1\}
  &&\forall k\in\C,\ p\in\Pk_k,\\
  &\alpha_{kp}^1,\alpha_{kp}^2\in\{0,1\}
  &&\forall k\in\C,\ p=(h,g)\in\Pk_k,\ h\neq g,\\
  &u_h^+,u_h^-,q'_h\in\mathbb{Z}_{\geq0}
  &&\forall h\in\Hh.
  \end{aligned}
  \label{eq:variable-model}
\end{equation}

Suppose that the modified capacities satisfy \(0\leq q'_h\leq U_h\)
for every \(h\in\Hh\), where \(U_h\) is a given upper bound.
Each nonlinear stability inequality can be written as \(q'_h(1-z)\leq R\),
where \(z\) is a nonnegative integer expression
in the assignment and auxiliary variables and \(R\geq0\). It is equivalent
to \(q'_h-U_hz\leq R\). If \(z=0\), both inequalities require
\(q'_h\leq R\). If \(z\geq1\), both hold automatically because
\(0\leq q'_h\leq U_h\) and \(R\geq0\). 
It extends the bounded capacity linearization of
\citet{bobbio2026} to the three BIS blocking types.

For S1, the resulting inequality is
\begin{equation}
  q'_h
  -U_h\sum_{g\in H_h^{\succeq}(i)}x_{ig}
  \leq
  \sum_{j\in\D_i^{\succeq}(h)}x_{jh}
  \qquad \forall i\in\Sg,\ h\in H(i).
  \label{eq:var-S1}
\end{equation}
For S2, applying the same transformation to the two hospital
constraints gives
\begin{align}
  q'_h
  -U_h\left(
    \sum_{p'\in\Pk_p^{\succeq}(k)}y_{kp'}
    +\alpha_{kp}^{1}
  \right)
  &\leq
  \sum_{j\in\D_{d_k^1}^{\succeq}(h)}x_{jh}-x_{d_k^1h}
  &&
  \forall k\in\C,\ p=(h,g)\in\Pk_k,\ h\neq g,
  \label{eq:var-S2a}\\
  q'_g
  -U_g\left(
    \sum_{p'\in\Pk_p^{\succeq}(k)}y_{kp'}
    +\alpha_{kp}^{2}
  \right)
  &\leq
  \sum_{j\in\D_{d_k^2}^{\succeq}(g)}x_{jg}-x_{d_k^2g}
  &&
  \forall k\in\C,\ p=(h,g)\in\Pk_k,\ h\neq g.
  \label{eq:var-S2b}
\end{align}

For S3, consider \(p=(h,h)\in\Pk_k\), and relabel the members as
\(i_1,i_2\) so that \(i_1\succ_h i_2\). The linear constraint is
\begin{equation}
  q'_h
  -U_h\sum_{p'\in\Pk_p^{\succeq}(k)}y_{kp'}
  -1+x_{i_1h}+x_{i_2h}
  +\sum_{\ell\in\Kcross_h(k,i_2)}y_{\ell,(h,h)}
  \leq
  \sum_{j\in\D_{i_2}^{\succeq}(h)}x_{jh},
  \forall\, k\in\C,\quad
  p=(h,h)\in\Pk_k,\quad
  i_1\succ_h i_2.
  \label{eq:var-S3}
\end{equation}
The condition \(R\geq0\) is immediate for S1 and S2 because their
right-hand sides count assignees. For S3, each positive term on the left
of the following inequality identifies a distinct assignee whom \(h\)
ranks at least as highly as \(i_2\):
\[
  x_{i_1h}+x_{i_2h}
  +\sum_{\ell\in\Kcross_h(k,i_2)}y_{\ell,(h,h)}
  \leq
  \sum_{j\in\D_{i_2}^{\succeq}(h)}x_{jh}.
\]
Moving these correction terms to the right in the S3 constraint therefore
gives \(R\geq1\), as required for the transformation.

Under these bounds, the linear formulation is obtained from
\eqref{eq:variable-model} by replacing the constraints corresponding to
\eqref{eq:fixed-S1}, \eqref{eq:fixed-S2a}, \eqref{eq:fixed-S2b},
and \eqref{eq:fixed-S3} with
\eqref{eq:var-S1}, \eqref{eq:var-S2a}, \eqref{eq:var-S2b},
and \eqref{eq:var-S3}, respectively.

\subsection{Capacity modification policies}
\label{subsec:special-capacity-modifications}

The three policies below retain the assignment constraints, capacity
feasibility, linear stability constraints, and variable domains from
Section~\ref{subsec:variable-ip}. They differ only in their capacity
restrictions and objectives. The capacity upper bounds are \(U_h=q_h+1\)
for \ICMBC{} and \ICMUC{}, and \(U_h=q_h+2\) for \ICMNF{}.

Balanced capacity modification (\ICMBC{}) preserves total capacity and
allows each hospital to gain or lose at most one position. Add the restrictions
\begin{equation}
  |q'_h-q_h|\leq1\quad\forall h\in\Hh,
  \qquad
  \sum_{h\in\Hh}q'_h=\sum_{h\in\Hh}q_h.
  \label{eq:balanced-capacity-region}
\end{equation}
In the change variables, impose \(u_h^++u_h^-\leq1\) for every hospital
and \(\sum_hu_h^+=\sum_hu_h^-\). The objective is
\(\min\sum_hu_h^+\), which minimizes the number of transfers
\(T(q')=\frac12\sum_{h\in\Hh}|q'_h-q_h|\).
Each transfer removes one position from a hospital and adds it to another.

Unit capacity expansion (\ICMUC{}) instead prohibits reductions and allows
each hospital to gain at most one position. Given an expansion budget
\(B\in\{0,\ldots,|\Hh|\}\), the capacity restrictions are
\begin{equation}
  0\leq q'_h-q_h\leq1\quad\forall h\in\Hh,
  \qquad
  \sum_{h\in\Hh}(q'_h-q_h)\leq B.
  \label{eq:unit-expansion-capacity-region}
\end{equation}
These restrictions are imposed by setting \(u_h^-=0\),
\(u_h^+\in\{0,1\}\), and \(\sum_hu_h^+\leq B\).
For a given \(B\), retain the cardinality objective of
\eqref{eq:variable-model} to find a largest stable matching within the budget.
To find the fewest additions needed for stability, omit the budget
constraint and minimize \(\sum_hu_h^+\) instead.

Near feasible capacity modification (\ICMNF{}) permits each hospital to
gain or lose at most two positions while total capacity is preserved or
increases by one. Add the restrictions
\begin{equation}
  |q'_h-q_h|\leq2\quad\forall h\in\Hh,
  \qquad
  0\leq\sum_{h\in\Hh}(q'_h-q_h)\leq1.
  \label{eq:near-balanced-capacity-region}
\end{equation}
The objective is \(\min\sum_h(u_h^+-u_h^-)\).
Its optimal value is zero if stability can be achieved while preserving
total capacity, and one otherwise.
The following theorem establishes the exactness of these formulations.
\begin{theorem}
\label{thm:formulation-exactness}
ICM \eqref{eq:variable-model} is feasible if and only if the instance admits
a stable matching under some capacity vector \(q'\). The same equivalence
holds for \ICMBC{}, \ICMUC{}, and \ICMNF{} when \(q'\) is restricted to
\eqref{eq:balanced-capacity-region},
\eqref{eq:unit-expansion-capacity-region}, and
\eqref{eq:near-balanced-capacity-region}, respectively.
If the cardinality objective is used, it maximizes the number of matched
doctors over the corresponding stable matchings and capacity vectors.
\end{theorem}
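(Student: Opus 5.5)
The plan is to reduce everything to the exactness of IFC for a fixed capacity vector, which is stated in Section~\ref{subsec:fixed-ip} as the specialization of \citet[Section~4]{delorme2021}. Once the capacity variables are fixed, the nonlinear constraints are exactly those of IFC. The only new ingredients are therefore the capacity variables and the linearization. The argument proceeds in three steps.

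\emph{Step 1.} Show that the nonlinear model~\eqref{eq:variable-model} is exact. Fix any integer point \((u^+,u^-)\) with \(q'=q+u^+-u^-\geq0\). Then constraints \eqref{eq:variable-capacity} and \eqref{eq:fixed-S1}--\eqref{eq:fixed-S3}, with \(q\) replaced by \(q'\), coincide with IFC at capacity vector \(q'\).
\begin{itemize}
\item Feasibility: any feasible solution of \eqref{eq:variable-model} yields, by IFC exactness applied to \(q'\), a matching stable under \(q'\).
\item Conversely, a stable matching \(M\) under some \(q'\) lifts to a feasible point. Set \(u^+_h=\max(q'_h-q_h,0)\) and \(u^-_h=\max(q_h-q'_h,0)\). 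Take \(x,y\) to be the incidence vectors of \(M\). Choose the \(\alpha\)'s as in the IFC completeness argument: for each S2 option \(p=(h,g)\) that \(k\) prefers, pick \(\alpha^1_{kp}\) or \(\alpha^2_{kp}\) according to which hospital cannot accommodate the corresponding member.
\item Objective: the objective is the same cardinality function. Maximizing over the joint space is therefore maximizing over all pairs (stable matching, capacity vector).
\end{itemize}

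\emph{Step 2.} Justify the linearization under the bounds \(0\leq q'_h\leq U_h\). Each nonlinear inequality has the form \(q'_h(1-z)\leq R\). This requires two facts, checked for every feasible assignment:
\begin{itemize}
\item \(z\) is a nonnegative integer. For S1 and S3 this follows from \eqref{eq:doctor}, \eqref{eq:couple} and integrality. For S2, \(z\) is a sum of binaries.
\item \(R\geq0\). For S1 and S2 this holds because \(R\) counts assignees; in S2 the subtracted term \(x_{d_k^1h}\) is itself counted in the sum over \(\D^{\succeq}_{d_k^1}(h)\). For S3 it is the displayed counting inequality.
\end{itemize}
The main obstacle is that counting inequality for S3. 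The terms \(x_{i_1h}\), \(x_{i_2h}\) and \(y_{\ell,(h,h)}\) for \(\ell\in\Kcross_h(k,i_2)\) must correspond to distinct doctors at \(h\) ranked at least \(i_2\). The \(i_1,i_2\) terms are clearly distinct. For each \(\ell\), the partner of \(\ell\) ranked above \(i_2\) is counted, and different \(\ell\) give different doctors because couples are disjoint and \(\ell\neq k\). This yields \(R\geq1\).

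With these two facts the case analysis in the text applies. If \(z=0\), both the nonlinear and linear forms say \(q'_h\leq R\). If \(z\geq1\), the linear form holds because \(q'_h-U_hz\leq U_h-U_h\leq0\leq R\), and the nonlinear form holds because its left side is nonpositive. So for every integer point satisfying the capacity bounds, the linear and nonlinear systems have the same feasible set. For the general ICM I would take \(U_h\) large enough, for instance \(U_h\geq q_h+|\D|\). This loses no stable matchings: reducing any capacity above \(|\D|\) to \(|\D|\) keeps \(M\) stable, since such a hospital is undersubscribed both before and after the reduction.

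\emph{Step 3.} Handle the policies \ICMBC{}, \ICMUC{} and \ICMNF{}. Each adds only constraints on \((u^+,u^-)\), and I would check that their integer solutions project exactly onto the respective regions.
\begin{itemize}
\item \ICMBC{}: \(u^++u^-\leq1\) together with \(\sum u^+=\sum u^-\) gives exactly \eqref{eq:balanced-capacity-region}. Conversely, every \(q'\) in that region is represented with disjoint supports.
\item \ICMUC{}: the constraints are exactly \eqref{eq:unit-expansion-capacity-region}.
\item \ICMNF{}: I would impose \(u^+_h,u^-_h\leq2\) with at most one of them nonzero. Without that exclusivity, \(|q'_h-q_h|\leq2\) still follows from the bound on \(q'_h\) between \(q_h-2\) and \(U_h=q_h+2\). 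The total change \(\sum(u^+-u^-)\in[0,1]\) gives \eqref{eq:near-balanced-capacity-region}.
\end{itemize}
In each case the stated \(U_h\) upper-bounds every admissible \(q'_h\), so Step 2 applies. Steps 1 and 2 then give both directions of feasibility and the cardinality optimality.
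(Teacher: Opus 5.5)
Your proposal is correct and follows the paper's proof. Fix \(q'\) through the change variables, apply the fixed-capacity exactness of IFC at \(q'\) in both directions (with \(u_h^{\pm}\) the positive and negative parts of \(q'_h-q_h\)), check that the policy constraints on \(u^{\pm}\) give exactly the three capacity regions, and observe that the stated \(U_h\) bound every admissible \(q'_h\), so the linearization of Section~\ref{subsec:variable-ip} is exact; the paper cites the \(z\), \(R\geq0\), and S3 counting facts from that section rather than re-deriving them as you do.

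Your remark on choosing \(U_h\) for the general ICM is unnecessary, because the theorem's general ICM is the nonlinear model~\eqref{eq:variable-model}. If you keep it, note that reducing a capacity to \(|\D|\) can make \(h\) full when all doctors are assigned there. Justify the step instead by the fact that the blocking conditions only weaken as the number of free positions decreases.
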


The proof is given in Appendix~\ref{app:formulation-exactness-proof}.

\section{Cutoff formulations}
\label{sec:model-improvements}
\label{subsec:cutoff-formulations}

This section gives cutoff formulations of IFC, \ICMNF{}, \ICMBC{}, and
\ICMUC{}. 
For each hospital \(h\), assign consecutive positive integer scores
\(s_{ih}\) to its acceptable doctors, with a larger score indicating higher
priority. Thus, \(i\succ_h j\) if and only if \(s_{ih}>s_{jh}\). Scores are
compared only within the same hospital. Since \(s_{ih}\leq|\D|\), let
\(M_s:=|\D|+1\). For every hospital, introduce a continuous cutoff \(t_h\)
and a binary variable \(z_h\). In IFC, \(q'_h=q_h\) and \(U_h=q_h\).
In \ICMNF{}, \(q'_h\) is the modified capacity and \(U_h=q_h+2\);
in \ICMBC{} and \ICMUC{}, \(U_h=q_h+1\).

First, every doctor assigned to a hospital must meet its cutoff:
\begin{equation}
  t_h\leq s_{ih}+(M_s-s_{ih})(1-x_{ih})
  \qquad\forall(i,h)\in\E.
  \label{eq:cutoff-assignee}
\end{equation}
When \(x_{ih}=1\), this constraint reduces to \(t_h\leq s_{ih}\). When
\(x_{ih}=0\), it imposes no restriction beyond \(t_h\leq M_s\).

It remains to ensure that a hospital with an unfilled position cannot use a
positive cutoff. We impose
\begin{equation}
  q'_h-\sum_{i:(i,h)\in\E}x_{ih}
  \leq U_h(1-z_h),\qquad
  0\leq t_h\leq M_sz_h,\quad z_h\in\{0,1\},
  \qquad \forall h\in\Hh.
  \label{eq:cutoff-fullness}
\end{equation}
If \(h\) has an unfilled position, the first inequality forces \(z_h=0\),
and the second then gives \(t_h=0\). Hence such a hospital cannot reject an
acceptable doctor through a positive cutoff. When \(q'_h=0\), the hospital
has no position to fill, so \(z_h=1\) and \(t_h=M_s\) remain available.

For each single doctor \(i\), write the hospitals in \(H(i)\) as
$h_i^1\succ_i h_i^2\succ_i\cdots\succ_i h_i^{\ell_i}$,
and let \(\rho_i(h)\) be the position of \(h\) in this list. Introduce
continuous prefix variables \(b_{ir}\) and impose
\begin{equation*}
  b_{i0}=0,
  \qquad
  b_{ir}=b_{i,r-1}+x_{i h_i^r}
  \quad r=1,\ldots,\ell_i.
\end{equation*}
At every integer solution,
\(b_{i,\rho_i(h)}=1\) if and only if \(i\) receives \(h\) or a hospital
that \(i\) prefers to \(h\).

For a single doctor, S1 is imposed by
\begin{equation}
  s_{ih}+1
  \leq
  t_h+M_s b_{i,\rho_i(h)}
  \qquad\forall i\in\Sg,\ h\in H(i).
  \label{eq:cutoff-S1}
\end{equation}
If doctor \(i\) has not obtained \(h\) or a better hospital,
\(b_{i,\rho_i(h)}=0\) and the cutoff of \(h\) must be higher than \(i\)'s
score. Otherwise, the constraint is automatically satisfied.

Write the options of couple \(k\) as
$p_k^1\succ_k p_k^2\succ_k\cdots\succ_k p_k^{m_k}$.
Introduce continuous prefix variables \(a_{kt}\) and impose
\begin{equation*}
  a_{k0}=0,
  \qquad
  a_{kt}=a_{k,t-1}+y_{k p_k^t}
  \quad t=1,\ldots,m_k.
\end{equation*}
At every integer solution, \(a_{kt}=1\) if and only if couple \(k\)
receives \(p_k^t\) or a better option.

Now consider a couple \(k=(d_k^1,d_k^2)\) and
\(p=p_k^t=(h,g)\) with \(h\neq g\). The two hospital conditions for S2 are
\begin{align}
  s_{d_k^1h}+1
  &\leq
  t_h+M_s\left(
    a_{kt}+\alpha_{kp}^{1}
  \right),
  &&\forall k\in\C,\ t=1,\ldots,m_k,\ p=p_k^t=(h,g),\ h\neq g,
  \label{eq:cutoff-S2a}\\
  s_{d_k^2g}+1
  &\leq
  t_g+M_s\left(
    a_{kt}+\alpha_{kp}^{2}
  \right),
  &&\forall k\in\C,\ t=1,\ldots,m_k,\ p=p_k^t=(h,g),\ h\neq g.
  \label{eq:cutoff-S2b}
\end{align}
These inequalities are used together with \eqref{eq:fixed-S2c}. If the couple
prefers \((h,g)\) to its current assignment, \(a_{kt}=0\).
Constraint~\eqref{eq:fixed-S2c} ensures that at least one of the two
\(\alpha\) variables must then be zero, so the corresponding hospital must
have a cutoff higher than the score of the proposed member. If that member is
already assigned to the hospital, \eqref{eq:cutoff-assignee} prevents the
hospital from rejecting the member.

For S3, list the acceptable doctors of each hospital \(h\) according to its
priority order,
$i_h^1\succ_h i_h^2\succ_h\cdots\succ_h i_h^{m_h}$,
and let \(r_h(i)\) be the position of doctor \(i\) in this order. Introduce
continuous tail variables \(w_{hr}\) satisfying
\begin{equation*}
	w_{h0}=q'_h,
	\qquad
	w_{hr}=w_{h,r-1}-x_{i_h^r h}
	\quad r=1,\ldots,m_h.
\end{equation*}
It follows that
\begin{equation*}
	w_{h,r_h(i)}
	=q'_h-\sum_{j\succeq_h i}x_{jh}.
\end{equation*}
This is the number of unfilled positions plus the number of assignees whom
\(h\) ranks below \(i\). If unfilled positions are viewed as zero-score
positions, \(w_{h,r_h(i)}\geq r\) means that \(i\) passes the \(r\)th lowest
marginal cutoff.

Consider \(p_k^t=(h,h)\), and relabel its members so that
\(i_1\succ_h i_2\). The following condition excludes a blocking coalition
through any of conditions~\textnormal{(S3)(i)}, \textnormal{(S3)(ii)},
\textnormal{(S3)(iii)(a)}, and \textnormal{(S3)(iii)(c)} in
Definition~\ref{def:bis-blocking}:
\begin{equation}
	w_{h,r_h(i_2)-1}+x_{i_1h}
	\leq 1+(U_h-1)a_{kt}.
	\label{eq:cutoff-S3-base}
\end{equation}
The remaining condition, \textnormal{(S3)(iii)(b)}, depends on whether a
doctor ranked below \(i_2\) is assigned to \(h\) together with that doctor's
partner. To exclude this condition, introduce a continuous variable \(t_h^L\)
for each hospital, with \(0\leq t_h^L\leq M_s\). For every couple \(\ell\)
having the option
\((h,h)\), let
$\underline s_{\ell h}=\min\{s_{d_\ell^1h},s_{d_\ell^2h}\}$,
and impose
\begin{align}
	&t_h^L
	\leq \underline s_{\ell h}
	+(M_s-\underline s_{\ell h})(1-y_{\ell,(h,h)}),
	\label{eq:linked-cutoff-assignee}\\
	&s_{i_2h}+1
	\leq t_h^L+(s_{i_2h}+1)a_{kt}
	\qquad\forall k\in\C,\ t=1,\ldots,m_k,\ p_k^t=(h,h).
	\label{eq:linked-cutoff-candidate}
\end{align}

Together with constraints \eqref{eq:doctor}\text{--}\eqref{eq:link2},
\eqref{eq:fixed-S2c}, \eqref{eq:variable-capacity}, and
\eqref{eq:cutoff-assignee}--\eqref{eq:linked-cutoff-candidate}, these
constraints define the cutoff formulation. Setting \(q'=q\) gives
\IFCCUT{}. Combining the same stability constraints with the near feasible
capacity modification, balanced capacity modification, or unit capacity
expansion constraints gives \ICMCUTNF{}, \ICMCUTBC{}, or \ICMCUTUC{},
respectively.

\begin{theorem}
\label{thm:cutoff-equivalence}
For each of IFC, \ICMNF{}, \ICMBC{}, and \ICMUC{}, the original and cutoff
formulations admit exactly the same matching and capacity vector.
\end{theorem}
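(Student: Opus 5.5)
The plan is to show that, for a fixed integer pair $(x,y)$ satisfying the assignment constraints \eqref{eq:doctor}--\eqref{eq:link2} together with a capacity vector $q'$ in the relevant policy region, the original linear stability constraints \eqref{eq:var-S1}--\eqref{eq:var-S3} (with suitable $\alpha$) are satisfiable if and only if the cutoff constraints \eqref{eq:cutoff-assignee}--\eqref{eq:linked-cutoff-candidate} (with suitable $t,z,t^L,\alpha$) are satisfiable. The prefix and tail variables $b,a,w$ are determined by $(x,y,q')$, so they need no separate treatment. By Theorem~\ref{thm:formulation-exactness} and the exactness of IFC, the original side is equivalent to BIS-stability of the matching under $q'$. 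It is therefore enough to prove that the cutoff system is feasible exactly when the matching is stable under $q'$. The capacity restrictions and objectives are identical in both formulations, so the statement follows for all four models at once.

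\emph{Soundness (cutoff feasible $\Rightarrow$ stable).} Fix a feasible cutoff solution. By \eqref{eq:cutoff-assignee}, every assignee $j$ of $h$ has $s_{jh}\ge t_h$. By \eqref{eq:cutoff-fullness}, $t_h>0$ forces $h$ to be full.

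For S1, suppose $i$ strictly prefers $h$ to $M(i)$. Then $b_{i,\rho_i(h)}=0$, so $t_h\ge s_{ih}+1>0$. Hence $h$ is full and every assignee outranks $i$, so there is no S1 block.

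For S2, take $(h,g)$ preferred by $k$. Then $a_{kt}=0$, and \eqref{eq:fixed-S2c} gives one index, say $\alpha^1=0$, with $t_h\ge s_{d_k^1h}+1$. Thus $h$ is full with doctors outranking $d_k^1$, and $d_k^1\notin M(h)$ by \eqref{eq:cutoff-assignee}. So $h$ cannot accommodate $d_k^1$.

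For S3 with $a_{kt}=0$, \eqref{eq:cutoff-S3-base} reads $w_{h,r_h(i_2)-1}+x_{i_1h}\le 1$. Here $w_{h,r_h(i_2)-1}$ counts free seats plus assignees ranked at or below $i_2$. I then check the cases in turn:
\begin{itemize}
\item (i) needs $w\ge2$;
\item (ii) needs one free seat plus either a member already at $h$ or an assignee below $i_2$, giving $w+x_{i_1h}\ge2$;
\item (iii)(c) needs two assignees below $i_2$;
\item (iii)(a) needs one assignee below $i_2$ together with $i_1$ or $i_2$ at $h$.
\end{itemize}
Each case therefore violates the inequality. Only the subcase where $i_2$ itself is at $h$ needs care, and it is handled because $i_2$ is counted in $w$. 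Finally, (iii)(b) would require a couple $\ell$ assigned to $(h,h)$ with $\underline s_{\ell h}<s_{i_2h}$. Then \eqref{eq:linked-cutoff-assignee} gives $t_h^L\le s_{i_2h}$, contradicting \eqref{eq:linked-cutoff-candidate}.

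\emph{Completeness (stable $\Rightarrow$ cutoff feasible).} Given a stable matching, I will construct the cutoff variables explicitly:
\begin{itemize}
\item if $h$ is full and $q'_h\ge1$, set $z_h=1$ and $t_h=\min_{j\in M(h)}s_{jh}$;
\item if $q'_h=0$, set $z_h=1$ and $t_h=M_s$;
\item otherwise set $z_h=0$ and $t_h=0$;
\item set $t_h^L$ to the minimum of $\underline s_{\ell h}$ over couples $\ell$ assigned to $(h,h)$, or $M_s$ if there is none;
\item take $\alpha$ from the original formulation, choosing $\alpha^1=0$ exactly on the side whose hospital cannot accommodate the member.
\end{itemize}
Then \eqref{eq:cutoff-assignee}, \eqref{eq:cutoff-fullness} and \eqref{eq:linked-cutoff-assignee} hold by construction. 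For \eqref{eq:cutoff-S1} and \eqref{eq:cutoff-S2a}--\eqref{eq:cutoff-S2b}, absence of S1/S2 blocks means the relevant hospital is full with all assignees strictly above the doctor, so $t_h\ge s+1$; this uses consecutive integer scores. For \eqref{eq:cutoff-S3-base} with $a_{kt}=0$, I invert the case analysis above: $w+x_{i_1h}\ge2$ would produce one of (i), (ii), (iii)(a), (iii)(c). When $a_{kt}=1$, I need $w+x_{i_1h}\le U_h$, which holds since $w\le q'_h\le U_h$ and $x_{i_1h}=1$ implies that one assignee at or above $i_2$ is subtracted. For \eqref{eq:linked-cutoff-candidate}, the absence of (iii)(b) gives $t_h^L>s_{i_2h}$.

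\textbf{Main obstacle.} The delicate step is the S3 equivalence: showing that the single counting inequality \eqref{eq:cutoff-S3-base} captures exactly the union of conditions (i), (ii), (iii)(a) and (iii)(c), including the boundary cases where $i_2$ or $i_1$ already sits at $h$ and the case $q'_h\le1$. I will do an exhaustive case split on the number of free seats (0, 1, or at least 2) and on $x_{i_1h},x_{i_2h}$. I also need to confirm the big-$M$ slack bound $U_h-1$ when $a_{kt}=1$.
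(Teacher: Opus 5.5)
Your route is the paper's. You show that the cutoff system is feasible exactly when \(M\) is BIS-stable under \(q'\), using the same choice of \(z,t,t^L,\alpha\). You use the same counting identity \(w_{h,r_h(i_2)-1}+x_{i_1h}=F_h+L_h(i_2)+x_{i_1h}+x_{i_2h}\), where \(F_h=q'_h-|M(h)|\) and \(L_h(i_2)\) counts the assignees of \(h\) ranked below \(i_2\). You then invoke Theorem~\ref{thm:formulation-exactness} for the original side. Two steps of your completeness direction need repair, however.

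The first is your justification of \eqref{eq:linked-cutoff-candidate}. This constraint is imposed whether or not \(h\) is full, but \textnormal{(S3)(iii)(b)} is a blocking condition only when \(h\) is full. So ``the absence of (iii)(b) gives \(t_h^L>s_{i_2h}\)'' proves nothing in one case. Suppose \(h\) has a vacancy and some couple \(\ell\) assigned to \((h,h)\) has a member \(j\) with \(i_2\succ_h j\). You must exclude this separately. Here \(F_h\ge1\) and \(L_h(i_2)\ge1\), so \textnormal{(S3)(i)} or \textnormal{(S3)(ii)} holds. Equivalently, the already verified \eqref{eq:cutoff-S3-base} with \(a_{kt}=0\) forces \(L_h(i_2)=0\) whenever \(F_h\ge1\). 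This is why the paper appeals to ``the preceding inequality together with the absence of (iii)(b).'' You also need \(\ell\neq k\), which follows from \(a_{kt}=0\). Then strict priorities give \(\underline s_{\ell h}\neq s_{i_2h}\), and integrality gives \(s_{i_2h}+1\le t_h^L\).

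The second concerns inverting your S3 case split for \eqref{eq:cutoff-S3-base}. You must use that \(x_{i_1h}=x_{i_2h}=1\) is impossible when \(a_{kt}=0\), since the couple would then hold \((h,h)\). Without this fact, the implication fails. Take a full \(h\) holding both members and no assignee below \(i_2\). Then \(w_{h,r_h(i_2)-1}+x_{i_1h}=2\), yet none of \textnormal{(i)}, \textnormal{(ii)}, \textnormal{(iii)(a)}, \textnormal{(iii)(c)} holds. The paper states this exclusion explicitly. With these two points added, your argument is complete and coincides with the paper's.
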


The proof is given in Appendix~\ref{app:cutoff-equivalence-proof}.
Appendix~\ref{app:alternative-S3-cutoffs} gives the formulations for the other three stability notions.

\section{Computational study}
\label{sec:computation}

We use Exact as the computational label for IFC and for ICM models using the
exact linearization, and Cutoff for their cutoff counterparts. 
The timing and scalability comparisons cover fixed capacities,
balanced capacity modification, and unit capacity expansion.
Near feasible capacity modification is examined separately in
Section~\ref{subsec:n100-experiments}.

We first validate IFC against the archive of \citet{delorme2021}
and compare Exact and Cutoff at fixed capacities, with Enumeration
as an additional benchmark for minimum capacity modification.
We also examine whether the formulation comparison changes under
the other stability notions.
Using Cutoff, we then study existence and the capacity modifications
required to restore stability across market types, followed by the
effect of expansion on maximum matching cardinality in initially
stable markets. Finally, we compare the solution rates of Exact
and Cutoff as market size increases.

All computations reported in this section were performed on a MacBook Pro with a 10-core Apple M2 Pro processor and 16 GB of memory, running macOS 26.5.2. The models were implemented in MATLAB R2026a and solved with Gurobi Optimizer 13.0.2 through its MATLAB interface.

\subsection{Formulation comparison}
\label{subsec:delorme-capacity-comparison}

We use the instances and computational records released by
\citet{delorme2021}. The \(F_{\mathrm{BIS}}\) output archive covers twenty
instance families with thirty instances each. Eight instance families permit a member of
a couple to be assigned while the other remains unmatched and therefore fall
outside our model. The remaining twelve instance families contain 360 source instances with
\(n\in\{20,750,1500,7500\}\).

We first validate our MATLAB implementation of IFC by comparing its results with the archived results generated by the original C++ implementation of \(F_{\mathrm{BIS}}\).
For all 360 instances, the numbers of variables, constraints, and
nonzero coefficients match the archived records exactly.
We reproduce all 334 reported optimal objective values.
For the twelve archived runs with unresolved optimality gaps,
our solution values lie within the reported bounds.
We also prove infeasibility for exactly the fourteen instances
recorded as \((-1,-1)\) in the archive.

For the timing comparison, hospital priorities that are already strict are
retained unchanged. When the source data contain hospital ties, we obtain
strict priorities in three ways. The first orders doctors within each tie by
identifier. The second uses one random lottery consistently at every
hospital, and the third uses an independent lottery at each hospital.
Strict comparisons between tie classes are never changed. 

The source instances construct each couple's joint list as the Cartesian
product of the two individual lists.
Options with individual
ranks \(r\) and \(s\) are ordered first by \(r+s\) and then by
\(\max\{r,s\}\), both in increasing order. Options equal on both
criteria are tied. Families restricted to assignments at the same
hospital retain only pairs \((h,h)\).
For the timing comparison, we break remaining couple ties in source
order, preserving all strict comparisons in the original preferences.
Duplicate source markets are counted only once in the timing
sample.\footnote{The families \texttt{ML\_10\_AS} and
	\texttt{ML\_10} contain identical instance files, so their thirty
	instances are counted once.}
This leaves 240 instances with strict hospital priorities and 90
with hospital ties. Applying the three hospital tie treatments
to the latter gives \(240+3\times90=510\) strict variants.

Before building either fixed capacity formulation for the Delorme
instances, we apply the \(F_{\mathrm{BIS}}\) preference list reduction
of \citet{delorme2021}. The hospital offer and resident application
rules alternate until no further deletions occur, removing rejected
doctor--hospital pairs and the couple options containing them.
Enumeration repeats this reduction for each candidate capacity vector.
This preprocessing is applied only when hospital capacities are fixed.

\begin{table}
	\centering
	\begin{threeparttable}
		\caption{Fixed capacity comparison of IFC and \IFCCUT{}}
		\label{tab:delorme-fixed-comparison}
		\scriptsize
		\setlength{\tabcolsep}{1.4pt}
		\begin{tabular*}{\textwidth}{@{\extracolsep{\fill}}rlllrrrrrrr@{}}
			\toprule
			& & & & & \multicolumn{2}{c}{Mean total time (s)}
			& & \multicolumn{3}{c}{\IFCCUT{} relative to IFC} \\
			\cmidrule(lr){6-7}\cmidrule(lr){9-11}
			\(n\) & Hospital priorities & Couple options & Tie breaking & Stable/inf.
			& IFC & \IFCCUT{} & Ratio & Variables & Constraints & Nonzeros \\
			\midrule
			20 & Master list & One hospital & None & 30/0
			& 0.006 & 0.009 & 0.77 & \(+4.2\%\) & \(+40.3\%\) & \(+41.5\%\) \\
			20 & Master list & One hospital & ID & 30/0
			& 0.011 & 0.017 & 0.73 & \(+7.2\%\) & \(+38.6\%\) & \(+42.7\%\) \\
			20 & Master list & One hospital & Common lottery & 30/0
			& 0.013 & 0.016 & 0.80 & \(+5.3\%\) & \(+39.5\%\) & \(+41.6\%\) \\
			20 & Master list & One hospital & Independent & 30/0
			& 0.014 & 0.020 & 0.74 & \(+4.3\%\) & \(+39.7\%\) & \(+40.6\%\) \\
			\addlinespace
			750 & Hospital specific & General & None & 24/6
			& 2.118 & 2.016 & 1.05 & \(-22.9\%\) & \(+9.9\%\) & \(-2.2\%\) \\
			750 & Hospital specific & General & ID & 28/2
			& 2.181 & 1.983 & 1.10 & \(-22.9\%\) & \(+9.9\%\) & \(-2.2\%\) \\
			750 & Hospital specific & General & Common lottery & 30/0
			& 1.934 & 1.952 & 0.98 & \(-22.8\%\) & \(+10.0\%\) & \(-2.0\%\) \\
			750 & Hospital specific & General & Independent & 30/0
			& 2.044 & 1.937 & 1.03 & \(-22.8\%\) & \(+10.0\%\) & \(-2.1\%\) \\
			750 & Master list & General & None & 30/0
			& 1.417 & 1.727 & 0.84 & \(-22.7\%\) & \(+10.1\%\) & \(-1.9\%\) \\
			750 & Master list & General & ID & 26/4
			& 1.240 & 1.286 & 0.97 & \(-22.5\%\) & \(+10.2\%\) & \(-1.7\%\) \\
			750 & Master list & General & Common lottery & 27/3
			& 1.444 & 1.504 & 0.97 & \(-22.7\%\) & \(+10.1\%\) & \(-1.9\%\) \\
			750 & Master list & General & Independent & 27/3
			& 1.388 & 1.493 & 0.94 & \(-22.7\%\) & \(+10.1\%\) & \(-1.9\%\) \\
			750 & Master list & One hospital & None & 30/0
			& 0.264 & 0.306 & 0.86 & \(-7.3\%\) & \(+40.9\%\) & \(+26.3\%\) \\
			\addlinespace
			1,500 & Hospital specific & General & None & 30/0
			& 3.830 & 5.098 & 0.75 & \(-22.6\%\) & \(+9.5\%\) & \(-2.3\%\) \\
			1,500 & Master list & General & None & 28/2
			& 3.379 & 3.682 & 0.93 & \(-22.4\%\) & \(+9.8\%\) & \(-2.0\%\) \\
			7,500 & Hospital specific & General & None & 29/1
			& 30.081 & 35.997 & 0.84 & \(-22.4\%\) & \(+9.3\%\) & \(-2.4\%\) \\
			7,500 & Master list & General & None & 25/5
			& 25.728 & 32.672 & 0.79 & \(-22.3\%\) & \(+9.5\%\) & \(-2.1\%\) \\
			\midrule
			\multicolumn{4}{l}{All 510 variants} & 484/26
			& 4.535 & 5.395 & 0.88 & \(-22.3\%\) & \(+9.9\%\) & \(-1.9\%\) \\
			\bottomrule
		\end{tabular*}
		\begin{tablenotes}[flushleft]
			\footnotesize
			\item Stable/inf.\ gives the common existence classification. Ratio is the
			paired geometric mean of the IFC to \IFCCUT{} running-time ratio.
		\end{tablenotes}
	\end{threeparttable}
\end{table}

We compare IFC and \IFCCUT{} at fixed capacities, with both formulations
maximizing the number of matched doctors. Reported times include
preprocessing, model construction, and optimization.
The formulations agree on the optimal value for each of the 484
feasible variants and prove the remaining 26 infeasible.
Table~\ref{tab:delorme-fixed-comparison} groups the results by
hospital priorities in the source data, couple options, and the
tie-breaking rule, with thirty variants per row.
Master list means that
hospitals share a ranking of doctors, whereas Hospital specific means
that each hospital has its own ranking.
One hospital means that each couple accepts only assignments placing
both partners at the same hospital.
Tie breaking specifies how ties in hospital priorities are resolved.

Compared with IFC, \IFCCUT{} uses, on average, 22.3\% fewer variables
and 1.9\% fewer nonzero coefficients, but 9.9\% more constraints.
IFC is modestly faster overall, with a paired geometric mean ratio
of IFC time to \IFCCUT{} time of 0.88.
At the group level, \IFCCUT{} is faster in three of the four groups
with \(n=750\), hospital specific priorities, and general couple
options, with ratios of 1.03--1.10. IFC is faster in all remaining
groups.
Hospital tie breaking can affect existence. In the tied \(n=750\)
family with hospital specific priorities, identifier ordering yields
28 variants with stable matchings, compared with 30 under either
lottery. In the corresponding master list family with general couple
options, the counts are 26 and 27, respectively.

We next consider the 26 variants proved infeasible at the original
capacities: 18 with \(n=750\), two with \(n=1500\), and six with
\(n=7500\). The objective is to minimize the number of transfers
under balanced modification and the number of added positions
under unit expansion.

We compare Enumeration, Exact, and Cutoff.
Enumeration searches capacity vectors in increasing order of the
number of added positions or transfers.
For unit expansion, it tests \(q'=q+\mathbf{1}_S\) for every
\(S\subseteq\Hh\) with \(|S|=B\), increasing \(B\) from one.
For balanced modification, it tests
\(q'=q-\mathbf{1}_D+\mathbf{1}_R\) for disjoint sets
\(D,R\subseteq\Hh\) with \(|D|=|R|=T\) and \(q_h>0\) for all
\(h\in D\), increasing \(T\) from one.
At each candidate vector, IFC is solved with a zero objective
to test feasibility.
Enumeration proceeds to a larger repair level only after all
candidates at the current level are proved infeasible.
The first independently verified stable matching therefore certifies
the minimum repair.

Exact uses the linearized \ICMBC{} and \ICMUC{} models, whereas
Cutoff uses \ICMCUTBC{} and \ICMCUTUC{}. Both use the same valid
inequalities. Each method has a total time limit of 3,600 seconds
per instance and policy. All runs use one Gurobi thread, Method~2,
zero MIP gap, and the default Seed setting. Total time includes
preprocessing, model construction, optimization, and verification.

We use penalized average runtime \citep{hutter2009paramils},
with a penalty factor of two (PAR-2).
For each method and policy, this averages over all 26 runs,
using the observed total time when the minimum repair is certified
and \(7{,}200\) seconds otherwise.
Table~\ref{tab:delorme-minimum-modification} summarizes the 156 runs.
One balanced transfer suffices in every variant. Under unit expansion,
25 variants require one added position, and one \(n=7500\) master
list variant requires two.

\begin{table}[H]
	\centering
	\begin{threeparttable}
		\caption{Minimum capacity modification on the 26 fixed capacity infeasible
			variants}
		\label{tab:delorme-minimum-modification}
		\scriptsize
		\setlength{\tabcolsep}{3.4pt}
		\begin{tabular*}{\textwidth}{@{\extracolsep{\fill}}llrrrrrrr@{}}
			\toprule
			& & & &
			\multicolumn{2}{c}{Minimum modification}
			& \multicolumn{3}{c}{Total time (s)} \\
			\cmidrule(lr){5-6}\cmidrule(lr){7-9}
			Policy & Method & Optimal & Unknown & 1 & 2
			& Mean & Median & PAR-2 \\
			\midrule
			\ICMBC{} & Enumeration & 24 & 2 & 24 & 0 & 193.40 & 6.07 & 732.37 \\
			& Exact       & 25 & 1 & 25 & 0 & 80.08  & 40.57 & 353.92 \\
			& Cutoff      & 26 & 0 & 26 & 0 & 65.17  & 10.48 & 65.17 \\
			\addlinespace
			\ICMUC{} & Enumeration & 24 & 2 & 24 & 0 & 194.64 & 9.75 & 733.51 \\
			& Exact       & 26 & 0 & 25 & 1 & 47.55  & 9.72 & 47.55 \\
			& Cutoff      & 26 & 0 & 25 & 1 & 35.86  & 5.85 & 35.86 \\
			\bottomrule
		\end{tabular*}
		\begin{tablenotes}[flushleft]
			\footnotesize
			\item Optimal denotes a verified stable matching together with proof that no
			smaller modification is feasible. 
		\end{tablenotes}
	\end{threeparttable}
\end{table}

Cutoff certifies the minimum repair in all 26 variants under both
policies. Exact certifies 25 under balanced modification and 26 under
unit expansion, while Enumeration certifies 24 under each policy.
The ratios of mean PAR-2 time for Exact relative to Cutoff are
5.43 under balanced modification and 1.33 under unit expansion.
For Enumeration relative to Cutoff, the corresponding ratios are
11.24 and 20.45.
Unlike the fixed capacity comparison, these results favor Cutoff
for minimum capacity modification on this benchmark, although it
is not faster on every instance.

\subsection{Existence and capacity modification across stability conditions}
\label{subsec:alternative-stability-experiment}

To examine how the stability notion affects existence and capacity
modification, we use three types of markets released by
\citet{delorme2021}. Each type contains 30 markets, for a total of 90.
Each market has 750 doctors, 50 hospitals,
and 75 couples. The first type uses hospital specific priorities
and general joint couple lists. The second uses a master list and
general joint lists. The third also uses a master list, but every
acceptable couple option places both partners at the same hospital.
Hospital priorities are strict in all three types. In the first
two types, ties in couple preferences are broken in source order.
The third type already has strict couple preferences.

Using the same markets, we compare existence and computational
performance under BIS, KPR, MM, and \NV{} stability.
We first compare IFC and \IFCCUT{} at fixed capacities, examining
how their relative performance varies with the stability notion
and market type. When nonexistence is proved under a given notion,
we compare Enumeration, Exact, and Cutoff in solving the
corresponding capacity modification problems. The objectives
minimize the number of balanced transfers under \ICMBC{} and
the number of added positions under \ICMUC{}. We evaluate the
methods by their ability to certify these minima and their
running times.

All runs use one Gurobi thread, Method~2, zero MIP gap, and default
Seed setting.
Each method has a time limit of 3,600 seconds per problem.
In the first stage, IFC and \IFCCUT{} maximize matching cardinality
for each of the 90 markets under all four stability notions,
yielding 720 runs and 360 paired comparisons. All runs end with
a proof of optimality or infeasibility. The formulations agree
on existence in every comparison and on maximum cardinality
whenever a stable matching exists.
Table~\ref{tab:alternative-stability-fixed} summarizes existence
and running times.

\begin{table}[H]
	\centering
	\begin{threeparttable}
		\caption{Fixed capacity existence and running times across stability conditions}
		\label{tab:alternative-stability-fixed}
		\scriptsize
		\setlength{\tabcolsep}{2.0pt}
		\begin{tabular*}{\textwidth}{@{\extracolsep{\fill}}lllrrrrrr@{}}
			\toprule
			& & & & \multicolumn{2}{c}{Existence} &
			\multicolumn{2}{c}{Mean total time (s)} & \\
			\cmidrule(lr){5-6}\cmidrule(lr){7-8}
			Stability & Hospital priorities & Couple options & Markets
			& Stable & Infeasible & IFC & \IFCCUT{} & Ratio \\
			\midrule
			BIS & Hospital specific & General      & 30 & 24 & 6  & 2.118 & 2.016 & 1.05 \\
			& Master list       & General      & 30 & 30 & 0  & 1.417 & 1.727 & 0.84 \\
			& Master list       & One hospital & 30 & 30 & 0  & 0.264 & 0.306 & 0.86 \\
			\cmidrule(lr){2-9}
			& All               &              & 90 & 84 & 6  & 1.266 & 1.349 & 0.91 \\
			\addlinespace
			KPR & Hospital specific & General      & 30 & 24 & 6  & 2.390 & 2.285 & 1.02 \\
			& Master list       & General      & 30 & 30 & 0  & 1.537 & 1.747 & 0.90 \\
			& Master list       & One hospital & 30 & 30 & 0  & 0.277 & 0.329 & 0.84 \\
			\cmidrule(lr){2-9}
			& All               &              & 90 & 84 & 6  & 1.401 & 1.454 & 0.91 \\
			\addlinespace
			MM  & Hospital specific & General      & 30 & 24 & 6  & 4.111 & 2.366 & 1.08 \\
			& Master list       & General      & 30 & 30 & 0  & 1.583 & 1.862 & 0.86 \\
			& Master list       & One hospital & 30 & 0  & 30 & 0.257 & 0.312 & 0.83 \\
			\cmidrule(lr){2-9}
			& All               &              & 90 & 54 & 36 & 1.984 & 1.513 & 0.92 \\
			\addlinespace
			\NV{}                 & Hospital specific & General      & 30 & 24 & 6  & 2.428 & 2.299 & 1.03 \\
			& Master list       & General      & 30 & 29 & 1  & 1.505 & 1.718 & 0.89 \\
			& Master list       & One hospital & 30 & 0  & 30 & 0.223 & 0.309 & 0.72 \\
			\cmidrule(lr){2-9}
			& All               &              & 90 & 53 & 37 & 1.385 & 1.442 & 0.87 \\
			\bottomrule
		\end{tabular*}
	\end{threeparttable}
\end{table}

The main difference in existence concerns the 30 master list markets
where couples accept only assignments to the same hospital.
All 30 admit a stable matching under BIS and KPR, but none does
under MM or \NV{}. One master list market with general joint lists
is also infeasible under \NV{}, but feasible under the other three
notions. The same six markets with hospital specific priorities
are infeasible under all four notions. Overall, BIS and KPR each
admit a stable matching in 84 markets, compared with 54 under MM
and 53 under \NV{}.

Within each market type, the relative ranking of IFC and \IFCCUT{}
is unchanged across the four stability notions.
With hospital specific priorities, \IFCCUT{} is modestly faster,
with paired geometric mean ratios of 1.02--1.08.
Under a master list, IFC is faster, with ratios of 0.84--0.90
for general joint lists and 0.72--0.86 when couples accept only
assignments to the same hospital.
Across all 360 comparisons, the paired geometric mean ratio is
0.90, indicating a modest advantage for IFC under this measure.

In the second stage, we consider the 85 market and stability
combinations for which both fixed capacity formulations prove
infeasibility: six under BIS, six under KPR, 36 under MM, and
37 under \NV{}.
For each combination, we minimize the number of transfers under
balanced modification and the number of added positions under
unit expansion, using Enumeration, Exact, and Cutoff
with a limit of 3,600 seconds per method and repair problem.
Applying three methods to 85 combinations under two capacity
policies gives \(85\times2\times3=510\) computational runs.

Exact and Cutoff certify the minimum repair in all 170 problems
and obtain identical minimum values. All returned matchings pass
the independent stability check.
Enumeration certifies 56 minima: all 24 under BIS and KPR,
15 of 72 under MM, and 17 of 74 under \NV{}.
The remaining 114 runs reach the time limit without finding
a stable matching.

For every selected BIS and KPR case, the minimum is one transfer
under balanced modification and one added position under unit
expansion. Under MM, the corresponding means are 2.56 transfers
and 5.17 added positions. Under \NV{}, they are 2.59 and 5.22.
For both MM and \NV{}, the largest minima are four transfers
and ten added positions. Certifying these larger minima requires
Enumeration to exclude all capacity vectors at smaller repair levels.

\begin{figure}
	\centering
	\includegraphics[width=0.90\textwidth]{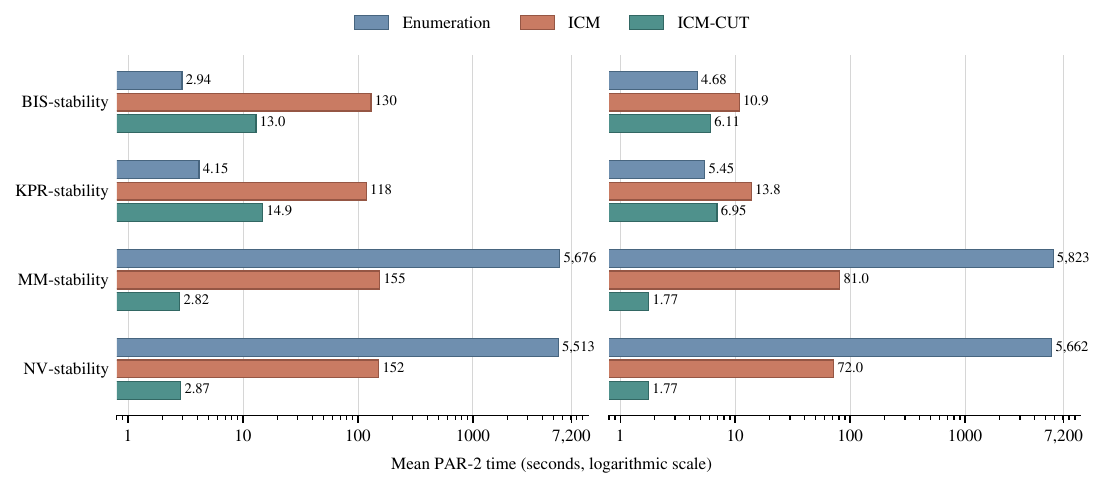}
	\vspace{-1.2em}
	\caption{Mean PAR-2 running times for minimum capacity modification under
		\ICMBC{} (left) and \ICMUC{} (right).}
	\label{fig:alternative-stability-capacity-runtime}
\end{figure}

Figure~\ref{fig:alternative-stability-capacity-runtime}
compares the three methods using mean PAR-2 times.
Enumeration has the lowest mean PAR-2 time under BIS and KPR,
finding a stable matching after testing at most three capacity
vectors under balanced modification and four under unit expansion.
Cutoff has the lowest mean PAR-2 time under MM and \NV{}.
For MM and \NV{}, respectively, the Exact to Cutoff ratios are
approximately 55 and 53 under balanced modification, and 46 and
41 under unit expansion.
Given its more consistent computational performance, we adopt
Cutoff as the primary formulation for subsequent experiments,
all of which use BIS stability.

\subsection{Existence and capacity modification across market types}
\label{subsec:n100-experiments}

This experiment examines how market structure affects the existence
of stable matchings and the minimum capacity modification needed
to restore stability. We vary hospital priorities, couple preferences,
whether partners may be assigned to the same hospital, the couple
proportion, and total capacity.

We generate 1,000 base profiles, each with 100 doctors and ten
hospitals. Each doctor independently draws a uniformly random
ordered list of six distinct hospitals and finds the remaining
four unacceptable. The couple proportion, defined as the share
of doctors who belong to a couple, takes target values of
5\%, 15\%, 25\%, 50\%, 75\%, and 100\%.
Within each profile, couples are formed from a common sequence
of disjoint doctor pairs. Increasing the proportion therefore
adds couples without changing existing pairs.

For each couple proportion and replication, we construct eight
market types with total capacity \(Q=100\). These types combine
two hospital priority structures (hospital specific priorities
or a master list), two couple preference rules (rank sum or
random order), and two treatments of assignments to the same
hospital (allowed or excluded).
The eight types share a capacity vector constructed by giving
each hospital one position and allocating the remaining 90
positions independently and uniformly across hospitals.

Each couple's initial joint options are obtained by filtering
the Cartesian product of the partners' individual lists through
a symmetric compatibility matrix. Every hospital is compatible
with itself, and each unordered pair of distinct hospitals is
independently declared compatible with probability 0.75.

The two preference rules order this same set of joint options.
Under rank sum preferences, an option \((h,g)\) for couple
\((a,b)\) is ranked by increasing
\(\operatorname{rank}_a(h)+\operatorname{rank}_b(g)\), where ranks
refer to the partners' individual lists.
Equal sums are broken by the smaller value of
\(\max\{\operatorname{rank}_a(h),\operatorname{rank}_b(g)\}\).
Any remaining tie favors the option giving the better individual
rank to the partner with the higher independently generated
auxiliary score.
Under random preferences, each couple instead receives a uniformly
random strict ordering of the same options.
When assignments to the same hospital are excluded, all options
\((h,h)\) are removed without changing the relative order of
the remaining options.

The design is paired. Within each replication, individual
preference lists are shared across all six couple proportions.
For each proportion, the eight types differ only in the three
experimental factors, with couple membership, compatibility,
capacities, and all other inputs held fixed.
Types~1--8 in Table~\ref{tab:n100-experiments-summary} therefore
contain \(6\times2\times2\times2\times1{,}000=48{,}000\) instances.
Types~9--14 extend the master list and rank sum cases to total
capacities \(Q\in\{90,95,99\}\), retaining both treatments of
assignments to the same hospital.
Each instance preserves all noncapacity inputs from its
corresponding \(Q=100\) market.
Each hospital still receives one initial position, and the
remaining positions are allocated using the first \(Q-10\)
draws from the same sequence of 90 hospital draws.
Thus \(q_h(90)\leq q_h(95)\leq q_h(99)\leq q_h(100)\)
for every hospital.
The three capacity levels, six couple proportions, two treatments,
and 1,000 replications yield another 36,000 instances,
bringing the total to 84,000.

\begin{table}
	\centering
	\begin{threeparttable}
		\caption{Existence and minimum capacity modification across
			the fourteen \(n=100\) instance types}
		\label{tab:n100-experiments-summary}
		\scriptsize
		\setlength{\tabcolsep}{1.4pt}
		\begin{tabular*}{\textwidth}{@{\extracolsep{\fill}}crllcr*{9}{r}@{}}
			\toprule
			& & & & & & \multicolumn{6}{c}{\makebox[0pt][c]{\shortstack{\ICMCUTBC{}\\Minimum transfers}}}
			& \multicolumn{3}{c}{\makebox[0pt][c]{\shortstack{\ICMCUTUC{}\\Minimum added positions}}} \\
			\cmidrule(lr){7-12}\cmidrule(lr){13-15}
			Type & \(Q\) & Hospital priorities & Couple preferences & \((h,h)\)
			& \shortstack{\IFCCUT{}\\stable} & \(T=1\) & \(T=2\) & \(T=3\)
			& \(T=4\) & \(T=5\) & Inf. & \(B=1\) & \(B=2\) & \(B=3\) \\
			\midrule
			1  & 100 & Hospital specific & Random order & No  & 5,793 & 207   & 0   & 0  & 0 & 0 & 0  & 207   & 0   & 0 \\
			2  & 100 & Hospital specific & Random order & Yes & 5,817 & 183   & 0   & 0  & 0 & 0 & 0  & 179   & 4   & 0 \\
			3  & 100 & Hospital specific & Rank sum     & No  & 5,800 & 200   & 0   & 0  & 0 & 0 & 0  & 200   & 0   & 0 \\
			4  & 100 & Hospital specific & Rank sum     & Yes & 5,834 & 166   & 0   & 0  & 0 & 0 & 0  & 165   & 1   & 0 \\
			5  & 100 & Master list       & Random order & No  & 5,152 & 848   & 0   & 0  & 0 & 0 & 0  & 834   & 14  & 0 \\
			6  & 100 & Master list       & Random order & Yes & 5,011 & 989   & 0   & 0  & 0 & 0 & 0  & 976   & 13  & 0 \\
			7  & 100 & Master list       & Rank sum     & No  & 5,380 & 620   & 0   & 0  & 0 & 0 & 0  & 610   & 10  & 0 \\
			8  & 100 & Master list       & Rank sum     & Yes & 5,316 & 684   & 0   & 0  & 0 & 0 & 0  & 672   & 12  & 0 \\
			\addlinespace
			9  & 90  & Master list       & Rank sum     & No  & 4,902 & 1,051 & 36  & 5  & 3 & 0 & 3  & 1,013 & 85  & 0 \\
			10 & 90  & Master list       & Rank sum     & Yes & 4,865 & 1,070 & 47  & 10 & 5 & 0 & 3  & 1,000 & 135 & 0 \\
			11 & 95  & Master list       & Rank sum     & No  & 4,708 & 1,182 & 100 & 10 & 0 & 0 & 0  & 1,264 & 28  & 0 \\
			12 & 95  & Master list       & Rank sum     & Yes & 4,480 & 1,309 & 167 & 24 & 6 & 1 & 13 & 1,484 & 35  & 1 \\
			13 & 99  & Master list       & Rank sum     & No  & 5,198 & 789   & 12  & 1  & 0 & 0 & 0  & 794   & 8   & 0 \\
			14 & 99  & Master list       & Rank sum     & Yes & 5,104 & 883   & 12  & 1  & 0 & 0 & 0  & 878   & 18  & 0 \\
			\midrule
			\multicolumn{5}{l}{All types} & 73,360 & 10,181 & 374 & 51 & 14 & 1 & 19
			& 10,276 & 363 & 1 \\
			\bottomrule
		\end{tabular*}
		\begin{tablenotes}[flushleft]
			\footnotesize
			\item The modification columns contain only instances for which
			\IFCCUT{} proves nonexistence. Inf. denotes proved infeasibility within the
			balanced unit modification region.
		\end{tablenotes}
	\end{threeparttable}
\end{table}

We first solve \IFCCUT{} to maximize matching cardinality at the
original capacities. For instances proved infeasible, we solve
\ICMCUTBC{} and \ICMCUTUC{} to minimize the number of balanced
transfers and added positions, respectively.
Each model has a time limit of 600 seconds.

For the same 10,640 instances, we also allow each hospital to
gain or lose up to two positions.
Using the model in
Section~\ref{subsec:special-capacity-modifications},
we minimize \(\sum_h(u_h^+-u_h^-)\), requiring this sum to be
nonnegative but removing the aggregate upper bound of one.
This determines whether stability can be restored while preserving
total capacity. The Cutoff formulation, solver settings, and
stability checks remain unchanged.

All 115,920 runs terminate with a proof of optimality or
infeasibility. At the original capacities, \IFCCUT{} finds a
maximum cardinality stable matching in 73,360 of the 84,000
instances and proves nonexistence in the remaining 10,640.
Its mean running time is 2.169 seconds.
For the 10,640 instances requiring repair, mean running times
are 2.073 seconds for \ICMCUTBC{} and 1.069 seconds for
\ICMCUTUC{}. The longest run takes 281.1 seconds.

Across the fourteen market types, the pooled nonexistence rate
increases with the target couple proportion, from 2.18\% at
5\% to 21.61\% at 100\%.
At \(Q=100\), nonexistence is more frequent under a master list
than under hospital specific priorities, at 13.09\% and 3.15\%,
respectively. The differences are smaller for the other two
design factors: 9.28\% under random couple preferences versus
6.96\% under rank sum preferences, and 8.43\% when assignments
to the same hospital are allowed versus 7.81\% when they
are excluded.

The capacity comparison is nonmonotone.
For the master list markets with rank sum preferences, the
numbers of infeasible instances are 2,233, 2,812, 1,698, and
1,304 at \(Q=90,95,99,\) and 100, respectively.
Each capacity level contains 12,000 paired instances.
In particular, 2,166 instances admit a stable matching at
\(Q=90\) but not at \(Q=95\), while 1,587 exhibit the reverse.

Across the eight \(Q=100\) market types, one balanced transfer
repairs all 3,897 instances infeasible at the original capacities.
At the lower capacity levels, \ICMCUTBC{} repairs 6,724 instances.
One transfer suffices in 6,284, while the remaining 440 require
two to five transfers.
Balanced unit modification is proved infeasible in 19 instances:
six at \(Q=90\) and thirteen at \(Q=95\).
By contrast, \ICMCUTUC{} repairs all 10,640 instances while
adding at most one position per hospital.
The minimum number of added positions is one in 10,276 instances,
two in 363, and three in one.
Each of the 19 instances infeasible under balanced modification
requires only one added position.

When each hospital may gain or lose up to two positions,
the minimum increase in total capacity is zero in all 10,640
instances. Thus, every tested instance can be repaired while
preserving total capacity, including the 19 instances infeasible
under balanced unit modification.

\subsection{Existence and capacity modification across methods}
\label{subsec:existence-comparison}

We next examine how the number of couples affects stable matching
existence in larger markets and determine the minimum capacity
modification required when stability fails.
We use the archived \texttt{data500} collection of
\citet{biro2011}. Each market contains 500 doctors, 50 hospitals,
and 500 positions. Every doctor lists six acceptable hospitals,
and distinct doctor scores induce a common master list.
The collection contains 1,000 instances for each of
0, 12, 25, 50, 75, 100, 125, 150, 175, 200, 225, and 250 couples,
giving 12,000 markets.
The authors generated 1,000 base markets with randomly allocated
hospital capacities and progressively paired more doctors into
couples. We retain the archived individual preferences, capacity
vectors, and couple identities.

The archived files do not contain couples' joint preferences.
We construct each couple's joint options from the Cartesian
product of its members' individual lists, retaining a hospital
pair only when its entry in the shared compatibility matrix
is one.
We order the retained options using the rank sum rule in
Section~\ref{subsec:n100-experiments}.
The final tie breaking uses the archived doctor scores.
The main experiment uses a time limit of 600 seconds
per model.

\begin{table}
	\centering
	\begin{threeparttable}
		\caption{Fixed capacity outcomes and certified minimum capacity
			modifications in the reconstructed \texttt{data500} markets}
		\label{tab:algorithm-comparison}
		\scriptsize
		\setlength{\tabcolsep}{1.6pt}
		\begin{tabular*}{\textwidth}{@{\extracolsep{\fill}}l*{11}{r}@{}}
			\toprule
			& \multicolumn{11}{c}{Number of couples} \\
			\cmidrule(lr){2-12}
			Method and outcome
			& 12 & 25 & 50 & 75 & 100 & 125 & 150 & 175 & 200 & 225 & 250 \\
			\midrule
			Roth--Peranson\textsuperscript{a}
			& 952 & 897 & 701 & 547 & 395 & 277 & 170 & 83 & 41 & 9 & 3 \\
			Best heuristics in B-I-S\textsuperscript{b}
			& 976 & 958 & 911 & 870 & 815 & 759 & 716 & 628 & 414 & 154 & 15 \\
			Scarf (int.\ solution)\textsuperscript{c}
			& 895 & 813 & 649 & 532 & 426 & 356 & 316 & 261 & 202 & 174 & 158 \\
			\IFCCUT{}
			& 975 & 959 & 911 & 869 & 821 & 781 & 755 & 725 & 695 & 672 & 647 \\
			\midrule
			\ICMCUTBC{} [\(T=1\)]
			& 25 & 41 & 89 & 131 & 179 & 219 & 245 & 275 & 305 & 328 & 353 \\
			\ICMCUTUC{} [\(B=1\)]
			& 25 & 41 & 89 & 121 & 169 & 211 & 231 & 264 & 294 & 316 & 335 \\
			\ICMCUTUC{} [\(B=2\)]
			& 0 & 0 & 0 & 10 & 9 & 8 & 14 & 11 & 11 & 12 & 18 \\
			\ICMCUTUC{} [\(B=3\)]
			& 0 & 0 & 0 & 0 & 1 & 0 & 0 & 0 & 0 & 0 & 0 \\
			\bottomrule
		\end{tabular*}
		\begin{tablenotes}[flushleft]
			\footnotesize
			\item[a] Values from Table II of \citet{biro2011}; the Roth--Peranson variant
			used a time limit of 5 seconds.
			\item[b] Best single heuristic variant in each column of Table VIII of
			\citet{biro2011}; each variant used a time limit of 50 seconds.
			\item[c] Integral Scarf solutions in Table 1 of \citet{biro2016}.
			\item The four capacity modification rows contain only markets for which
			\IFCCUT{} proves nonexistence.
		\end{tablenotes}
	\end{threeparttable}
\end{table}

Table~\ref{tab:algorithm-comparison} reports our fixed capacity
existence results and certified minimum repairs, alongside the
published counts of markets in which a stable matching was
found.
The published counts provide context rather than a direct
comparison, as the computational procedures and time limits
also differ.

At the original capacities, \IFCCUT{} finds a maximum cardinality
stable matching in 9,810 markets and proves nonexistence in the
remaining 2,190.
All markets without couples admit a stable matching.
The nonexistence rate rises from 2.5\% with 12 couples to
35.3\% with 250 couples.\footnote{For 12 and 75 couples, the published heuristic
	counts are 976 and 870, compared with 975 and 869 in our
	reconstruction. These small differences may reflect the
	construction of couples' joint preferences, which are not
	recorded in the archived files. We use the rank sum rule
	described in \citet{biro2011}, whereas an accompanying MATLAB
	script orders hospital pairs primarily by the larger of the
	partners' individual ranks. These rules can produce different
	joint rankings and thus change whether a stable matching exists.}
Both capacity policies restore stability in all 2,190 markets.
The minimum balanced repair is one transfer in every case.
Under unit expansion, the minimum number of added positions
is one in 2,096 markets, two in 93, and three in one.\footnote{
	Three \ICMCUTBC{} runs and one \ICMCUTUC{} run, all with
	250 couples, did not prove optimality within 600 seconds.
	Increasing the limit to 3,600 seconds with all other settings
	unchanged certified \(T=1\) for the three BC cases and \(B=2\)
	for the UC case. All timing statistics use the original
	600-second runs.}

Mean \IFCCUT{} time rises from 0.14 seconds without couples
to 13.57 seconds with 250 couples.
Across the 2,190 repair markets, mean PAR-2 times are
17.87 seconds for \ICMCUTBC{} and 4.88 seconds for
\ICMCUTUC{}, giving a BC to UC ratio of 3.66.
The UC model has a lower PAR-2 time in 2,172 of the 2,190
paired comparisons.

\subsection{Matching cardinality with additional positions}
\label{subsec:unit-expansion-cardinality}

The preceding experiment studies the minimum capacity modification
needed to restore stability. We now examine how a bounded expansion
budget can increase the maximum number of matched doctors in markets
that already admit a stable matching.
For market \(r\) with initial capacity vector \(q^r\), define
\[
m_r(B)=\max\Bigl\{
|M|:\ M\text{ is stable under }q^r+u,\;
u\in\{0,1\}^{|\Hh|},\;
\sum_{h\in\Hh}u_h\leq B
\Bigr\}.
\]
The cumulative increase in matching cardinality is
\(\Delta_r(B)=m_r(B)-m_r(0)\), and the marginal increase is
\(d_r(B)=m_r(B)-m_r(B-1)\) for \(B\geq1\).

We retain the 9,810 markets that admit a stable
matching at their original capacities, as established in
Section~\ref{subsec:existence-comparison}.
These markets cover all twelve couple count groups.
We reuse the exact values at \(B=0\) and solve \ICMCUTUC{}
for \(B=1,\ldots,10\), giving 98,100 new optimization runs.
Each run has a time limit of 600 seconds.

The solver proves optimality in 97,999 runs, while 101 reach the
time limit. We tighten the bounds using the fact that maximum matching
cardinality cannot decrease as the budget increases.
When all doctors belong to couples, we also use the fact
that the number of matched doctors must be even. This certifies eight additional
optima, leaving 93 values of \(m_r(B)\) known only within certified
bounds. Mean and median running times are 7.61 and 0.60 seconds.

\begin{figure}
	\centering
	\includegraphics[width=0.90\textwidth]
	{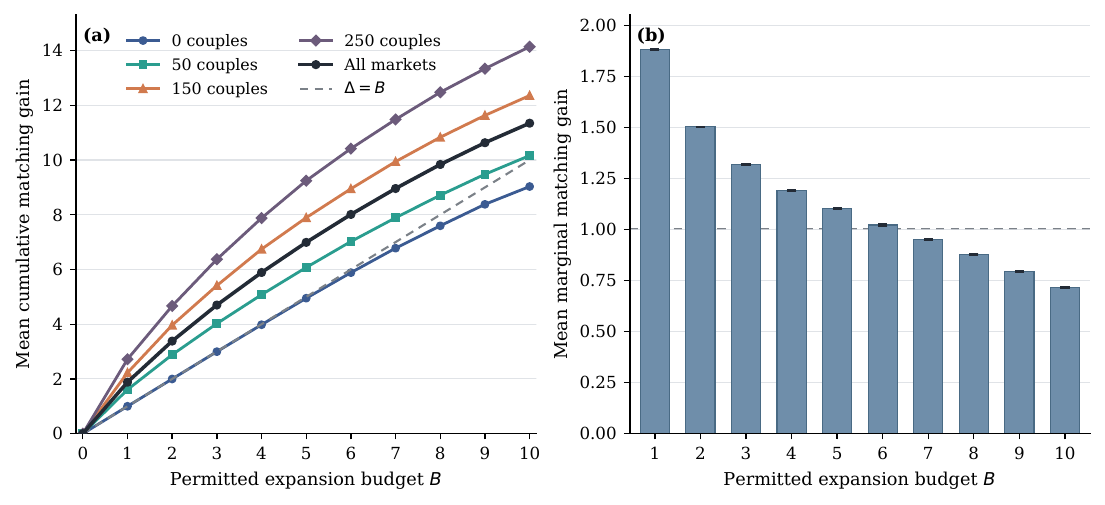}
	\caption{Certified matching gains under unit capacity expansion in the
		9,810 markets that admit a stable matching at the original capacities.}
	\label{fig:icm-uc-cardinality}
\end{figure}

Figure~\ref{fig:icm-uc-cardinality} reports mean cumulative
increases for four couple count groups and the full cohort in
panel~\textnormal{(a)}, and the pooled mean marginal increase
in panel~\textnormal{(b)}.
The pooled means include all 9,810 markets across the twelve
groups. Curves and bars show interval midpoints, while bands
and error bars indicate certified bounds.
The dashed lines mark \(\Delta=B\) and \(d=1\), respectively,
with \(B\) denoting the permitted budget rather than actual
additions.

The pooled mean marginal increase declines from 1.881--1.882
doctors for the first budget unit to 0.711--0.719 for the tenth.
At \(B=10\), the mean cumulative increase reaches
11.342--11.348 doctors.
Already at \(B=1\), 63.9\% of markets gain more than one
matched doctor. At \(B=10\), the mean cumulative increase is
9.031 doctors in markets without couples and
14.096--14.182 in markets with 250 couples.

Matching cardinality can increase by more than \(B\) because
expansion may support a different stable matching that also
fills previously unused positions at other hospitals.
The results show diminishing average marginal benefits over
the tested budget range. Choosing an optimal expansion budget
requires weighing these benefits against capacity costs,
which are not specified in this experiment.

\subsection{Scalability of the Exact and Cutoff formulations}
\label{subsec:scalability-experiment}

This experiment compares the scalability of Exact and Cutoff
for fixed capacity optimization and, when necessary, minimum
capacity repair. Balanced modification and unit expansion
are evaluated separately.

We independently generate fifty markets for each size
\(n\in\{100,\allowbreak 500,\allowbreak 1000,\allowbreak
2500,\allowbreak 5000,\allowbreak 10000,\allowbreak
15000,\allowbreak 20000,\allowbreak 25000,\allowbreak 30000\}\),
giving 500 markets in total.
Each market contains \(n\) doctors, \(0.10n\) hospitals, and
\(\lfloor0.05n/2\rfloor\) randomly formed couples,
corresponding to a target share of 5\% of doctors in couples.
Each doctor independently draws a uniformly random ordered
list of ten distinct hospitals.
A shared symmetric compatibility matrix filters the Cartesian
product of each couple's individual lists.
Each draw makes an unordered pair of distinct hospitals
compatible independently with probability 0.75, while every
hospital is compatible with itself.
The retained options follow the rank sum rule and tie breaking
described in Section~\ref{subsec:n100-experiments}.

At each size, 25 markets use independently generated hospital
priorities and 25 use a common master list.
Within each group, twenty markets have total capacity \(Q=n\)
and five have \(Q=0.95n\).
Each hospital first receives one position.
The remaining \(Q-|\Hh|\) positions are allocated independently
and uniformly across hospitals.

For each formulation, the fixed capacity model maximizes
matching cardinality. Repair is attempted only if that model
proves infeasibility.
BC minimizes the number of transfers, and UC minimizes the
number of added positions.
Neither repair model uses matching cardinality as a secondary
objective.
Each formulation and capacity policy receives 600 seconds per
market for the fixed capacity stage and any required repair.
BC and UC each receive the time remaining after their
formulation's fixed capacity stage, independently of the time
used by the other repair model.
A procedure is counted as complete if it certifies a maximum
cardinality stable matching at the original capacities or,
after proving infeasibility there, certifies a minimum repair
or proves that no admissible repair exists.

\begin{figure}[H]
  \centering
  \includegraphics[width=0.90\textwidth]{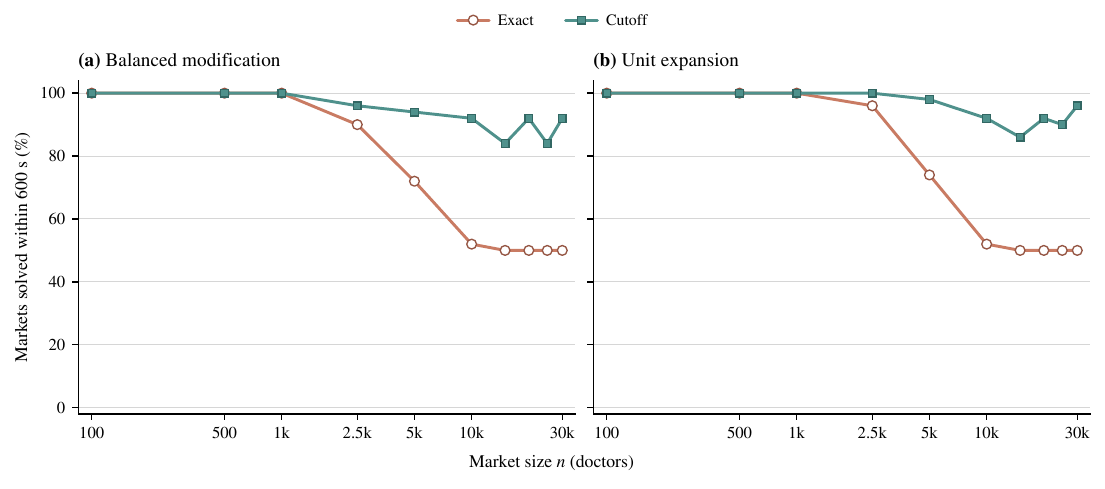}
  \caption{Percentage of markets for which Exact and Cutoff complete fixed
  capacity optimization and any required repair within a shared 600-second
  limit, under balanced modification (left) and unit expansion (right).}
  \label{fig:scalability-solution-rates}
\end{figure}

Figure~\ref{fig:scalability-solution-rates} shows that all four
procedures achieve a completion rate of 100\% for \(n\leq1000\).
At \(n=30000\), Cutoff completes 92\% of cases under balanced
modification and 96\% under unit expansion, compared with
50\% under either policy for Exact.
Across all 500 markets, the corresponding completion rates
are 93.4\% and 95.4\% for Cutoff, and 71.4\% and 72.2\%
for Exact.

The advantage of Cutoff arises primarily in the fixed capacity
stage. It certifies maximum cardinality stable matchings in
458 markets and proves infeasibility in the remaining 42.
Exact certifies maximum cardinality in 345 markets and proves
infeasibility in eighteen, leaving 137 unresolved.

Among the 42 markets entering repair under Cutoff, minimum
repairs are certified in nine under BC and nineteen under UC.
For Exact, the corresponding counts are twelve and sixteen
among eighteen markets.
All other repair runs reach the time limit.
Every certified minimum balanced repair uses one transfer,
while the certified minimum expansion budgets are one or two.
These repair results are not directly comparable because
the formulations enter this stage on different market sets
and with different remaining time limits.
The completion rates therefore compare the complete
solution procedures.

\section{Conclusion}
\label{sec:conclusion}

This paper shows how limited changes in hospital capacities can restore
stability in markets with couples. For arbitrary strict preferences in our
model, which excludes partially matched couples, a BIS-stable matching always
exists when each hospital can gain or lose at most two positions and total
capacity is preserved or increases by one. This net capacity bound is tight and
allows additions at some hospitals to be offset by reductions elsewhere.
Under sub-responsive and sub-complete couple preferences, reductions can be
avoided: at most one additional position per hospital and no more positions
added in total than there are couples suffice, with a tight aggregate bound.
Balanced unit modification, however, can fail even within this restricted
preference domain.

The equivalent Exact and Cutoff formulations allow the organizer to test
whether an admissible capacity modification restores stability and to
minimize its size. Their relative performance depends on the task.
Exact is modestly faster in the fixed capacity variants derived from the
archive, whereas Cutoff is more effective in the paired minimum repair experiments. Cutoff
also certifies more outcomes in the scalability experiment with up to
30,000 doctors and 600 seconds for the
fixed capacity and repair stages together. Much of this advantage arises
before capacity modification is attempted, and solving the minimum repair
problem remains difficult at larger scales.

The capacity experiments suggest that small modifications are often
sufficient in the tested markets. Among the 96,000 synthetic and reconstructed
markets, all 6,087 fixed capacity failures with initial total capacity equal
to the number of doctors are repaired by a single balanced transfer.
This regularity suggests considering a small
reallocation before expansion in comparable markets, but provides no general
existence guarantee. Expansion can also increase maximum
matching cardinality in initially stable markets, although average marginal
improvements decline over the tested budgets. Because the experiments do not
model adjustment costs, they do not determine an economically optimal policy
or expansion budget.

It remains open whether allowing each hospital to gain or lose at most one
position guarantees a BIS-stable matching under arbitrary strict preferences
in our model, without an aggregate capacity constraint. Whether expansion by
at most one position per hospital suffices is also open. Further directions
include identifying conditions that
guarantee balanced modification and developing efficient constructions for
the tight net capacity bound. Incorporating heterogeneous modification costs
and uncertainty about preferences would connect these existence and
optimization results more closely to capacity planning in practice.

\section*{Acknowledgements}
Zexiao Wu and Genjiu Xu acknowledge support from the National Key R\&D Program of China
(Grant No.~2021YFA1000402), the National Natural Science Foundation of China
(Grant No.~72571215), and the Guangdong Basic and Applied Basic Research
Foundation (Grant No.~2024A1515011241).

\appendix
\makeatletter
\renewcommand{\@seccntformat}[1]{%
  \ifstrequal{#1}{section}{\appendixname\space}{}%
  \csname the#1\endcsname.\hskip 0.5em}
\makeatother
\numberwithin{table}{section}
\renewcommand{\theHtable}{\Alph{section}.\arabic{table}}

\section{Alternative stability conditions}
\label{app:alternative-stability}

\subsection{Definitions and relationships}
\label{app:alternative-stability-definitions}

KPR-stability, MM-stability, and \NV{} retain S1 and S2 from
Definition~\ref{def:bis-blocking}; only the condition
in S3 changes. Let \(M\) be a feasible matching and consider a couple
\(k=(a,b)\) and an acceptable option \((h,h)\in\Pk_k\), where the couple
is unmatched or strictly prefers \((h,h)\) to \((M(a),M(b))\).
For a set of doctors \(A\), let \(\operatorname{Ch}^{q}_h(A)\) denote
the doctors in \(A\) who are acceptable to \(h\) and have the highest
priority at \(h\), up to \(q_h\) doctors.

\emph{KPR-stability.} Both members must be retained when the hospital
chooses from its incumbents and the two applicants:
\[
  \{a,b\}\subseteq
  \operatorname{Ch}^{q}_h\bigl(M(h)\cup\{a,b\}\bigr).
\]

\emph{MM-stability.} Any proposing member already assigned to \(h\)
retains that position. Each remaining member must be accommodated in a
distinct position that is either vacant or occupied by a doctor outside
\(\{a,b\}\) whom that member outranks.

\emph{\NV{}-stability.} The hospital must select each member when that
member is considered separately:
\[
  a\in\operatorname{Ch}^{q}_h\bigl(M(h)\cup\{a\}\bigr)
  \qquad\text{and}\qquad
  b\in\operatorname{Ch}^{q}_h\bigl(M(h)\cup\{b\}\bigr).
\]
Unlike MM, the two individual acceptance conditions may rely on
the same vacant position or the same displaced doctor. Under strict
hospital priorities, the condition in
\citet[Definition~2]{nguyen2018} coincides with the KPR condition.
\citet[Online Appendix~A.4]{nguyen2018} further show that the
matching constructed for their Theorem~1 satisfies the stronger
requirement that the hospital would select each member when that
member is considered separately alongside its current assignees.
We refer to this requirement as \NV{}-stability.

For a fixed instance, let \(\mathcal S(X)\) be the set of
\(X\)-stable matchings. Then
\[
  \mathcal S(\NV)\subseteq
  \mathcal S(\mathrm{BIS})\cap\mathcal S(\mathrm{MM}),
  \qquad
  \mathcal S(\mathrm{BIS})\cup\mathcal S(\mathrm{MM})
  \subseteq\mathcal S(\mathrm{KPR}).
\]
The BIS, MM, and KPR relationships are established in
\citet[Section~3.2]{delorme2021}. Figure~\ref{fig:stability-relationships} and the examples below illustrate
the relationships among all four stability notions.

\begin{figure}[H]
  \centering
  \includegraphics[width=0.50\linewidth]{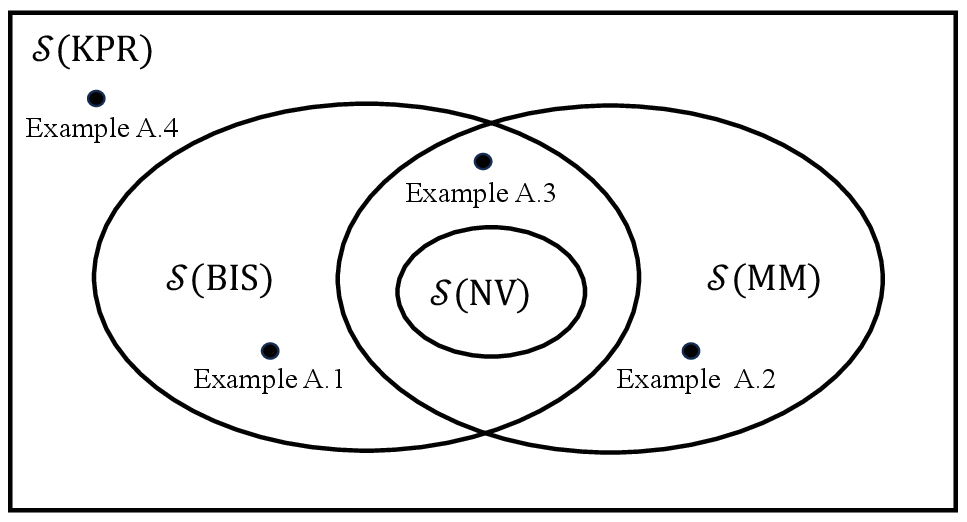}
  \caption{Relationships among the four stability concepts.}
  \label{fig:stability-relationships}
\end{figure}

Consider one hospital \(h\) with \(q_h=2\).
In all four examples, \(a,b\) are single doctors, and \((d_1,d_2)\)
and \((d_3,d_4)\) are couples. The singles find only \(h\) acceptable,
and each couple lists only \((h,h)\).
Table~\ref{tab:stability-relationship-examples} gives \(M(h)\)
and the hospital's priority order.

\begin{table}[H]
  \centering
  \caption{Four examples for Figure~\ref{fig:stability-relationships}.}
  \label{tab:stability-relationship-examples}
  \begin{tabular}{@{}cllcccc@{}}
    \toprule
    Example & \(M(h)\) & Hospital priority & KPR & BIS & MM & \NV{} \\
    \midrule
    \refstepcounter{example}\label{ex:stability-bis-only}\theexample
      & \(\{a,b\}\) & \(d_1\succ_h a\succ_h d_2\succ_h b\succ_h d_3\succ_h d_4\)
      & \(\checkmark\) & \(\checkmark\) & \(\times\) & \(\times\) \\
    \refstepcounter{example}\label{ex:stability-mm-only}\theexample
      & \(\{d_3,d_4\}\) & \(d_3\succ_h d_1\succ_h d_2\succ_h d_4\succ_h a\succ_h b\)
      & \(\checkmark\) & \(\times\) & \(\checkmark\) & \(\times\) \\
    \refstepcounter{example}\label{ex:stability-bis-mm}\theexample
      & \(\{a,b\}\) & \(a\succ_h d_1\succ_h d_2\succ_h b\succ_h d_3\succ_h d_4\)
      & \(\checkmark\) & \(\checkmark\) & \(\checkmark\) & \(\times\) \\
    \refstepcounter{example}\label{ex:stability-kpr-only}\theexample
      & \(\{d_3,d_4\}\) & \(d_1\succ_h d_3\succ_h d_2\succ_h d_4\succ_h a\succ_h b\)
      & \(\checkmark\) & \(\times\) & \(\times\) & \(\times\) \\
    \bottomrule
  \end{tabular}
\end{table}

\subsection{Exact and Cutoff formulations}
\label{app:alternative-S3-cutoffs}

Only the S3 constraints change across stability notions.
For every \(k\in\C\) and \(p=(h,h)\in\Pk_k\), relabel the members so that
\(i_1\succ_h i_2\). For MM, let
\(\mathcal R_h(i_1,i_2):=\{j:(j,h)\in\E,\ i_1\succeq_h j\succ_h i_2\}\).
This priority interval contains \(i_1\) and excludes \(i_2\), so it is
nonempty. Replace the S3 constraint by the appropriate constraint below:
\begin{align}
  q_h\left(1-\sum_{p'\in\Pk_p^{\succeq}(k)}y_{kp'}\right)
  -1+x_{i_1h}
  &\leq
  \sum_{j\in\D_{i_2}^{\succeq}(h)}x_{jh}-x_{i_2h},
  &&\textnormal{(KPR)},
  \label{eq:alternative-KPR-fixed-S3}\\*
  q_h\left(1-\sum_{p'\in\Pk_p^{\succeq}(k)}y_{kp'}\right)-1
  +\frac{\sum_{j\in\mathcal R_h(i_1,i_2)}x_{jh}}
  {|\mathcal R_h(i_1,i_2)|}
  &\leq
  \sum_{j\in\D_{i_2}^{\succeq}(h)}x_{jh}-x_{i_2h},
  &&\textnormal{(MM)},
  \label{eq:alternative-MM-fixed-S3}\\*
  q_h\left(1-\sum_{p'\in\Pk_p^{\succeq}(k)}y_{kp'}\right)
  &\leq
  \sum_{j\in\D_{i_2}^{\succeq}(h)}x_{jh}-x_{i_2h},
  &&\textnormal{(\NV{})}.
  \label{eq:alternative-NV-fixed-S3}
\end{align}
At fixed capacities, the Exact formulations for KPR, MM, and \NV{} are
obtained from IFC~\eqref{eq:fixed-model} by replacing
\eqref{eq:fixed-S3} with \eqref{eq:alternative-KPR-fixed-S3},
\eqref{eq:alternative-MM-fixed-S3}, or
\eqref{eq:alternative-NV-fixed-S3}, respectively. All other constraints and
the objective are unchanged.
The KPR and MM constraints are the strict preference specializations of
constraints~(17) and~(20), respectively, in
\citet[Section~4]{delorme2021}. Proposition~\ref{prop:alternative-NV-exactness}
proves the exactness of the \NV{} constraint.

Directly replacing
\(q_h\) by \(q'_h\) in the preceding constraints creates products of \(q'_h\) and
the couple preference prefix. Using \(0\leq q'_h\leq U_h\), the bounded
linearization in Section~\ref{subsec:variable-ip} gives
\begin{align}
  q'_h-U_h\sum_{p'\in\Pk_p^{\succeq}(k)}y_{kp'}
  -1+x_{i_1h}
  &\leq
  \sum_{j\in\D_{i_2}^{\succeq}(h)}x_{jh}-x_{i_2h},
  &&\textnormal{(KPR)},
  \label{eq:alternative-KPR-variable-S3}\\*
  q'_h-U_h\sum_{p'\in\Pk_p^{\succeq}(k)}y_{kp'}-1
  +\frac{\sum_{j\in\mathcal R_h(i_1,i_2)}x_{jh}}
  {|\mathcal R_h(i_1,i_2)|}
  &\leq
  \sum_{j\in\D_{i_2}^{\succeq}(h)}x_{jh}-x_{i_2h},
  &&\textnormal{(MM)},
  \label{eq:alternative-MM-variable-S3}\\*
  q'_h-U_h\sum_{p'\in\Pk_p^{\succeq}(k)}y_{kp'}
  &\leq
  \sum_{j\in\D_{i_2}^{\succeq}(h)}x_{jh}-x_{i_2h},
  &&\textnormal{(\NV{})}.
  \label{eq:alternative-NV-variable-S3}
\end{align}
At bounded variable capacities, the Exact formulations for KPR, MM, and
\NV{} are obtained from the linearized ICM by replacing
\eqref{eq:var-S3} with \eqref{eq:alternative-KPR-variable-S3},
\eqref{eq:alternative-MM-variable-S3}, or
\eqref{eq:alternative-NV-variable-S3}, respectively. All other constraints
and variable domains are unchanged. 

For MM and \NV{}, consider every option \(p_k^t=(h,h)\), relabel its
members so that \(i_1\succ_h i_2\), and impose, respectively,
\begin{align}
  \bigl[\min\{U_h,r_h(i_2)-r_h(i_1)\}-1\bigr]
  w_{h,r_h(i_2)-1}+w_{h,r_h(i_1)-1}
  &\leq
  \min\{U_h,r_h(i_2)-r_h(i_1)\}
  \bigl[1+(U_h-1)a_{kt}\bigr],
  \label{eq:cutoff-MM-S3}\\*
  w_{h,r_h(i_2)-1}&\leq U_ha_{kt}.
  \label{eq:cutoff-NV-S3}
\end{align}
All three Cutoff IPs retain
\eqref{eq:doctor}--\eqref{eq:link2}, \eqref{eq:fixed-S2c},
\eqref{eq:variable-capacity},
\eqref{eq:cutoff-assignee}--\eqref{eq:cutoff-S2b}, and the prefix and tail
recursions in Section~\ref{subsec:cutoff-formulations}. KPR additionally uses
\eqref{eq:cutoff-S3-base}; MM uses \eqref{eq:cutoff-MM-S3}; and \NV{} uses
\eqref{eq:cutoff-NV-S3}. 

\subsection{Proofs for \NV{} and Cutoff}
\label{app:alternative-formulation-proofs}

\begin{proposition}
\label{prop:alternative-NV-exactness}
Replacing \eqref{eq:fixed-S3} in IFC~\eqref{eq:fixed-model} by
\eqref{eq:alternative-NV-fixed-S3} yields feasible integer solutions that induce
exactly the \NV{}-stable matchings under \(q\). Under
\(0\leq q'_h\leq U_h\), replacing \eqref{eq:var-S3} by
\eqref{eq:alternative-NV-variable-S3} in the linearized formulation obtained
from ICM~\eqref{eq:variable-model} yields feasible integer solutions that induce
exactly the pairs \((M,q')\) in which \(M\) is \NV{}-stable under \(q'\).
Imposing
\eqref{eq:balanced-capacity-region},
\eqref{eq:unit-expansion-capacity-region}, or
\eqref{eq:near-balanced-capacity-region} restricts this correspondence to
the capacity vectors permitted by the selected policy.
\end{proposition}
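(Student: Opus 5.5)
The plan is to reuse the exactness of the BIS formulation for every constraint except S3, and then check that the single inequality \eqref{eq:alternative-NV-fixed-S3} encodes exactly the \NV{} condition for the option \((h,h)\). Since \NV{} retains S1 and S2 from Definition~\ref{def:bis-blocking}, the constraints \eqref{eq:doctor}--\eqref{eq:fixed-S2c} keep the roles described for IFC. In particular, \eqref{eq:doctor}--\eqref{eq:fixed-capacity} characterize feasible matchings with couples jointly assigned. The inequality \eqref{eq:fixed-S1} excludes exactly the S1 blocks. Together with a suitable choice of \(\alpha\), the inequalities \eqref{eq:fixed-S2a}--\eqref{eq:fixed-S2c} exclude exactly the S2 blocks. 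So it remains to show, for each couple \(k\) and option \(p=(h,h)\) with members relabelled so that \(i_1\succ_h i_2\), that \eqref{eq:alternative-NV-fixed-S3} holds at the integer point induced by \(M\) if and only if \((k,h)\) does not block \(M\) in the \NV{} sense.

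\textbf{Step 1: the case where the couple does not prefer \((h,h)\).} If \(\sum_{p'\succeq_k p}y_{kp'}=1\), the left side of \eqref{eq:alternative-NV-fixed-S3} is zero. The right side counts assignees of \(h\) ranked at least \(i_2\), other than \(i_2\) itself, so it is nonnegative. The constraint is therefore slack, and no block is possible.

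\textbf{Step 2: the case where the couple prefers \((h,h)\).} If the sum is zero, the couple is unmatched or strictly prefers \((h,h)\). The constraint then reads \(q_h\leq|\{j\in M(h): j\succeq_h i_2,\ j\neq i_2\}|\). The right side is at most \(|M(h)|\leq q_h\), so the constraint is equivalent to two conditions: \(h\) is full, and every assignee other than \(i_2\) has priority at least that of \(i_2\). Given fullness, this means no member of \(M(h)\setminus\{i_2\}\) is ranked below \(i_2\).

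The key point is to translate \(b\in\operatorname{Ch}^{q}_h(M(h)\cup\{b\})\). If \(b\in M(h)\), this holds trivially. Otherwise it holds if and only if \(h\) has a vacancy or ranks \(b\) above some assignee. So the couple can block only when both individual acceptance conditions hold, and I expect the condition for \(i_2\) to imply the condition for \(i_1\). Indeed, if \(i_2\) is accepted through a vacancy, or by outranking some \(j\), then \(i_1\succ_h i_2\) outranks that same \(j\). The care here is with \(i_1\in M(h)\): then \(i_1\) is trivially accepted.

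The condition for \(i_2\) fails exactly when two things hold: \(h\) is full, and either \(i_2\in M(h)\) or all assignees outrank \(i_2\). This is what the constraint says once \(x_{i_2h}\) is subtracted. When \(i_2\in M(h)\), \(h\) is full and \(i_2\) occupies a seat, so the right side is \(q_h-1\) minus the number of assignees below \(i_2\), which is strictly less than \(q_h\). I will recheck this subcase carefully, since it is the main obstacle. By the \NV{} definition, \(i_2\in M(h)\) makes \(i_2\) trivially selected, so a block then requires \(i_1\) to be accepted. If the couple prefers \((h,h)\) while \(i_2\) is already at \(h\) and \(i_1\) elsewhere, the inequality forbids this configuration whenever \(i_1\) would be selected. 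I must confirm that it does not forbid it when \(i_1\) would not be selected. Since \(i_1\succ_h i_2\in M(h)\), \(i_1\) outranks an assignee and is always selected, so the constraint correctly forbids the configuration. With this check, the fixed-capacity equivalence follows.

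\textbf{Step 3: variable capacities and the policies.} For variable capacities, I apply the bounded linearization argument of Section~\ref{subsec:variable-ip}. Write the \NV{} constraint as \(q'_h(1-z)\leq R\) with \(z\) the preference-prefix sum and \(R\geq0\) its right side. Then it is equivalent to \eqref{eq:alternative-NV-variable-S3} under \(0\leq q'_h\leq U_h\). The other constraints are handled exactly as in the proof of Theorem~\ref{thm:formulation-exactness}. Finally, the policy restrictions \eqref{eq:balanced-capacity-region}, \eqref{eq:unit-expansion-capacity-region} and \eqref{eq:near-balanced-capacity-region} are constraints on \(q'\) alone, so they simply restrict the correspondence.
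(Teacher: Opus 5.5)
Your proposal is correct and follows essentially the same route as the paper. Like the paper, you keep the S1/S2 exactness, show that with a zero prefix sum the \NV{} constraint holds exactly when \(h\) rejects \(i_2\) (which suffices because acceptance of \(i_2\) implies acceptance of \(i_1\)), and then apply the bounded linearization. One sentence in Step~2 is misstated: \(i_2\in M(h)\) makes \(i_2\) selected, so the \(i_2\)-condition fails exactly when \(h\) is full, \(i_2\notin M(h)\), and every assignee outranks \(i_2\). Your subsequent subcase check nonetheless reaches the right conclusion, and the paper avoids this split altogether through the identity \(\sum_{j\in\D_i^{\succeq}(h)}x_{jh}-x_{ih}=|\{j\in M(h):j\succ_h i\}|\), which is smaller than \(q'_h\) precisely when \(h\) selects \(i\).
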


\begin{proof}
The S1 and S2 conditions are unchanged under \NV{}. They are excluded by
\eqref{eq:fixed-S1}--\eqref{eq:fixed-S2c} under \(q\), and by
\eqref{eq:var-S1}--\eqref{eq:var-S2b} together with
\eqref{eq:fixed-S2c} under \(q'\). It remains to consider an option that
assigns both members of a couple to the same hospital.

Fix \(k\in\C\) and \(p_k^t=(h,h)\), and relabel the members as \(i_1,i_2\)
so that \(i_1\succ_h i_2\). Strict hospital priorities give, for either
member \(i\),
\[
  \sum_{j\in\D_i^{\succeq}(h)}x_{jh}-x_{ih}
  =\bigl|\{j\in M(h):j\succ_h i\}\bigr|.
\]
The subtraction of \(x_{ih}\) makes the equality valid also when
\(i\in M(h)\). Hospital \(h\) selects \(i\) from \(M(h)\cup\{i\}\) exactly
when this count is smaller than \(q'_h\). Since \(i_1\succ_h i_2\), the set
on the right for \(i_1\) is contained in that for \(i_2\). Thus \(h\)
selects both members when each is considered separately exactly when
\(\sum_{j\in\D_{i_2}^{\succeq}(h)}x_{jh}-x_{i_2h}<q'_h\).

At an integer solution,
\(\sum_{p'\in\Pk_{p_k^t}^{\succeq}(k)}y_{kp'}\) is zero exactly when
couple \(k\) is unmatched or receives an option worse than \(p_k^t\).
Hence \(p_k^t\) is an \NV{} block exactly when this prefix sum is zero and
the preceding strict inequality holds. Excluding the block is therefore
equivalent to
\[
  q'_h\left(
    1-\sum_{p'\in\Pk_{p_k^t}^{\succeq}(k)}y_{kp'}
  \right)
  \leq
  \sum_{j\in\D_{i_2}^{\succeq}(h)}x_{jh}-x_{i_2h}.
\]
Setting \(q'_h=q_h\) gives
\eqref{eq:alternative-NV-fixed-S3}.

Under \(0\leq q'_h\leq U_h\), the displayed inequality is equivalent to
\eqref{eq:alternative-NV-variable-S3} at every integer solution. If the
prefix sum is zero, the two inequalities coincide. If it is one, their
left-hand sides are zero and \(q'_h-U_h\leq0\), respectively, while the
right-hand side is nonnegative; both are therefore redundant. The case
\(q'_h=0\) is included, since the hospital then selects neither member.

Applying this argument to every \(p_k^t=(h,h)\) excludes exactly the
\NV{} S3 blocks. Together with the S1 and S2 constraints identified above,
this proves both statements.
\end{proof}

\begin{proposition}
\label{prop:alternative-cutoff-equivalence}
For each of KPR-stability, MM-stability, and \NV{}, the Exact and Cutoff
formulations in Appendix~\ref{app:alternative-S3-cutoffs} admit exactly
the same matchings and capacity vectors.
\end{proposition}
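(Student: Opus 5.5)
The plan is to reduce the claim to the already established facts. The Exact formulations for KPR, MM and \NV{} induce exactly the stable pairs \((M,q')\): for KPR and MM this follows from \citet{delorme2021}, and for \NV{} from Proposition~\ref{prop:alternative-NV-exactness}, with the bounded linearization of Section~\ref{subsec:variable-ip}. The Cutoff formulations share the S1 and S2 machinery with the BIS cutoff formulation. The S1/S2 part of Theorem~\ref{thm:cutoff-equivalence} can therefore be reused. Specifically, an integer solution of \eqref{eq:cutoff-assignee}--\eqref{eq:cutoff-S2b} with \eqref{eq:fixed-S2c} exists, for some choice of \(t,z,\alpha\), if and only if \((M,q')\) admits no S1 or S2 block. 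The only new work is to show that, for each fixed \((x,y,q')\) and each option \(p_k^t=(h,h)\), the notion-specific cutoff S3 constraint holds exactly when the corresponding S3 block is absent.

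The main tool is the tail variables. At an integer solution, \(w_{h,r_h(i)-1}=q'_h-|\{j\in M(h):j\succ_h i\}|\). This is the number of free seats plus the number of assignees ranked at or below \(i\), i.e.\ the number of positions \(i\) could claim. When \(a_{kt}=1\), the couple weakly prefers its assignment, and each constraint must be shown to be slack. For \eqref{eq:cutoff-NV-S3} this follows from \(w\le q'_h\le U_h\). For \eqref{eq:cutoff-MM-S3}, both \(w\)'s are at most \(U_h\), and the left side is at most \(\min\{\cdot\}U_h\). When \(a_{kt}=0\), each condition is checked directly against its definition.

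For \NV{}, \(h\) selects \(i_2\) separately iff \(w_{h,r_h(i_2)-1}\ge1\), and selection of \(i_1\) is implied since \(w_{h,r_h(i_1)-1}\ge w_{h,r_h(i_2)-1}\). So the constraint \(w_{h,r_h(i_2)-1}\le0\) is exactly the absence of an \NV{} block.

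For KPR, the constraint is \eqref{eq:cutoff-S3-base}: \(w_{h,r_h(i_2)-1}+x_{i_1h}\le1\). Under strict priorities, the KPR choice set \(\operatorname{Ch}_h(M(h)\cup\{a,b\})\) retains both members iff fewer than \(q'_h-1\) outsiders outrank \(i_2\), after accounting for whether \(i_1\) already occupies a seat. This count equals \(q'_h-w+\ldots\), and I will verify the cases \(i_1\in M(h)\) or not, and \(i_2\in M(h)\) or not, to match \eqref{eq:alternative-KPR-fixed-S3}.

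The main obstacle is MM, via \eqref{eq:cutoff-MM-S3}. Write \(A=w_{h,r_h(i_1)-1}\), \(B=w_{h,r_h(i_2)-1}\), so \(A-B=|M(h)\cap\mathcal R_h(i_1,i_2)|\), and let \(m=\min\{U_h,r_h(i_2)-r_h(i_1)\}\). The MM block condition amounts to a matching of the two members into distinct claimable seats. Since \(i_1\) can claim everything \(i_2\) can, and Hall's condition for the pair reduces to \(B\ge1\) and \(A\ge2\), the block exists iff \(B\ge1\) and \(A\ge2\). I will check that \((m-1)B+A\le m\) holds iff \(B=0\), or \(B\ge1\) with \(A\le1\), using \(A\ge B\) and \(A-B\le m\). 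The case \(B\ge1\), \(A\ge2\) must violate the inequality. If \(B\ge 2\), then \((m-1)B+A\ge 2m-2+2>m\) whenever \(m\ge1\). If \(B=1\), the left side is \(m-1+A>m\) iff \(A\ge2\). If \(B=0\), the inequality reduces to \(A\le m\), which holds since \(A=A-B\le\min\{r_h(i_2)-r_h(i_1),q'_h\}\le m\).

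Finally, the restrictions from \eqref{eq:balanced-capacity-region}--\eqref{eq:near-balanced-capacity-region} act identically on both formulations. Combined with the equivalence on S1 and S2, this proves the claim, and the auxiliary values \((t,z,t^L,w,a,b)\) are constructed as in Theorem~\ref{thm:cutoff-equivalence}.
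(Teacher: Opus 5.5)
Your proposal is correct, but it goes through the stability definitions instead of comparing the two constraint systems directly, as the paper does.

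The paper rewrites both the Exact and the Cutoff S3 inequalities in terms of $W_\nu=w_{h,r_h(i_\nu)-1}$:
\begin{itemize}
\item For $a_{kt}=0$, the KPR versions literally coincide, and the \NV{} versions coincide for all $a_{kt}$.
\item For MM, it proves that $(c-1)W_2+W_1\le c$ is equivalent to ``$W_2=0$ or $W_1\le1$'' whenever $0\le W_2\le W_1$ and $W_1-W_2\le c$. It applies this once with $c=d$ (Exact) and once with $c=\mu$ (Cutoff).
\item For $a_{kt}=1$, all six inequalities are shown to be redundant.
\end{itemize}

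You prove the same MM lemma only for $c=\mu$. For the Exact side you substitute a combinatorial characterization of the blocks: the choice-function count for KPR, separate selection for \NV{}, and the Hall-type condition $W_2\ge1$, $W_1\ge2$ for MM. You then import exactness of the Exact formulations from \citet{delorme2021} and Proposition~\ref{prop:alternative-NV-exactness}.

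Your route gives a stronger standalone fact: each Cutoff formulation captures exactly its stability notion, independently of the Exact model. The cost is reliance on the external exactness of the KPR and MM Exact constraints. You also need the bounded linearization to carry over to them, which it does because their right-hand sides stay nonnegative. The paper's route needs no block analysis and proves the two formulations equivalent whether or not either is exact.

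When you write it up, three small fixes are needed:
\begin{itemize}
\item Make the MM Hall argument explicit when one member already holds a seat at $h$ (the retention clause). There a swap argument justifies treating the neighbourhoods as nested.
\item For KPR, add slackness of \eqref{eq:cutoff-S3-base} when $a_{kt}=1$. It follows from $W_2+x_{i_1h}\le W_1\le q'_h\le U_h$.
\item Drop $t^L$, which does not appear in these three formulations.
\end{itemize}
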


\begin{proof}
The S1 and S2 constraints do not depend on the choice among KPR, MM, and
\NV{}. At fixed capacities they are
\eqref{eq:fixed-S1}--\eqref{eq:fixed-S2c}; with variable capacities they are
\eqref{eq:var-S1}--\eqref{eq:var-S2b} together with
\eqref{eq:fixed-S2c}. The corresponding Cutoff constraints are
\eqref{eq:cutoff-assignee}--\eqref{eq:cutoff-S2b}, together with the prefix
recursions and \eqref{eq:fixed-S2c}. The S1 and S2 arguments in the proof of
Theorem~\ref{thm:cutoff-equivalence} show that these constraints exclude
exactly the same blocks. It remains to compare the S3 constraints.

Fix the binary assignment variables and an admissible integer capacity vector
\(q'\). Consider \(p_k^t=(h,h)\), and relabel the members so that
\(i_1\succ_h i_2\). For this proof, write
\[
  A:=a_{kt},\qquad
  W_\nu:=w_{h,r_h(i_\nu)-1}\quad(\nu=1,2),\qquad
  d:=r_h(i_2)-r_h(i_1),\qquad
  \mu:=\min\{U_h,d\}.
\]
The prefix and tail recursions imply
\[
  A=\sum_{p'\in\Pk_{p_k^t}^{\succeq}(k)}y_{kp'}\in\{0,1\},
  \qquad
  W_\nu=q'_h-
  \left(\sum_{j\in\D_{i_\nu}^{\succeq}(h)}x_{jh}-x_{i_\nu h}\right),
\]
and
\[
  0\leq W_2\leq W_1\leq q'_h\leq U_h,
  \qquad
  W_1-W_2=\sum_{j\in\mathcal R_h(i_1,i_2)}x_{jh}\leq\mu.
\]
Here \(d=|\mathcal R_h(i_1,i_2)|\geq1\), by strict hospital priorities.
Moreover, \(A=0\) exactly when the couple is unmatched or receives an option
worse than \(p_k^t\). Using the preceding identities, the Exact S3
constraints for KPR, MM, and \NV{} are equivalent, respectively, to
\[
\begin{array}{rcl}
  W_2+x_{i_1h}
  &\leq& 1+U_hA,\\
  (d-1)W_2+W_1
  &\leq& d(1+U_hA),\\
  W_2
  &\leq& U_hA.
\end{array}
\]
The corresponding Cutoff constraints are
\[
\begin{array}{rcl}
  W_2+x_{i_1h}
  &\leq& 1+(U_h-1)A,\\
  (\mu-1)W_2+W_1
  &\leq& \mu\bigl[1+(U_h-1)A\bigr],\\
  W_2
  &\leq& U_hA.
\end{array}
\]

Suppose first that \(A=0\). The KPR inequalities coincide, as do the \NV{}
inequalities. For MM, the two inequalities reduce to
\[
  (d-1)W_2+W_1\leq d
  \qquad\text{and}\qquad
  (\mu-1)W_2+W_1\leq\mu.
\]
If \(\mu=0\), both hold because \(U_h=W_1=W_2=0\). Suppose
\(\mu\geq1\). For either \(c=d\) or \(c=\mu\), the preceding bounds give
\(0\leq W_2\leq W_1\) and \(W_1-W_2\leq c\). Since \(W_1\) and \(W_2\)
are integers,
\[
  (c-1)W_2+W_1\leq c
  \quad\Longleftrightarrow\quad
  W_2=0\ \text{or}\ W_1\leq1.
\]
Indeed, conditional on \(W_2=0\), the difference bound gives
\(W_1\leq c\); if \(W_1\leq1\), the inequality also holds. Conversely,
\(W_2\geq1\) and \(W_1\geq2\) make the left side at least \(c+1\).
Applying this equivalence with \(c=d\) and \(c=\mu\) proves that the Exact
and Cutoff MM constraints impose the same condition. This condition is the
negation of the MM requirement that \(h\) provide one admissible position for
\(i_2\) and a second, distinct admissible position for \(i_1\).

Now suppose that \(A=1\). All six inequalities are automatic. Because
\(i_1\in\mathcal R_h(i_1,i_2)\),
\[
  W_2+x_{i_1h}
  \leq W_2+\sum_{j\in\mathcal R_h(i_1,i_2)}x_{jh}
  =W_1\leq U_h.
\]
This proves the KPR Cutoff inequality, and the Exact inequality is weaker.
If \(\mu\geq1\), then
\[
  (\mu-1)W_2+W_1\leq\mu U_h,
\]
which proves the MM Cutoff inequality; the Exact MM inequality follows from
\((d-1)W_2+W_1\leq dU_h\). If \(\mu=0\), then \(U_h=W_1=W_2=0\).
Finally, both \NV{} inequalities reduce to \(W_2\leq U_h\).

Thus the Exact and Cutoff S3 constraints agree for every same hospital option.
For fixed capacities, take \(q'_h=U_h=q_h\). For any bounded variable capacity
policy, the argument applies to every admissible \(q'\). Together with the S1
and S2 equivalence and the identical assignment and capacity constraints, this
proves that the two formulations admit the same matchings and capacity vectors.
\end{proof}

\section{Proofs}
\label{app:structured-existence}

\subsection{Proof of Theorem~\ref{thm:general-capacity-bound}}
\label{app:general-capacity-bound-proof}

We modify the iterative rounding procedure of \citet{nguyen2018}. We retain
their dominating fractional assignment and constraint deletion rule, but keep
every active hospital constraint as an equality and change the treatment of
slack assignment constraints when the aggregate capacity constraint is
removed. The modification uses our assumption that a couple is either
assigned jointly or left entirely unmatched: every acceptable couple
assignment uses exactly two hospital positions. We first recall the
fractional construction and then prove that the modified iteration terminates
with an increase in total capacity of at most one.

\paragraph{The fractional construction.}
Treat each single doctor and each couple as one agent and write
\(\mathcal R:=\Sg\cup\C\). For \(r\in\mathcal R\), let \(\Omega_r\) be its
set of acceptable assignment columns and set
\(\Omega:=\bigcup_{r\in\mathcal R}\Omega_r\). Let \(A_{r\omega}=1\) if
\(\omega\in\Omega_r\) and zero otherwise, so \(A_rx\leq1\) is the
assignment constraint for \(r\). For each \(h\in\Hh\) and
\(\omega\in\Omega\), let \(H_{h\omega}\in\{0,1,2\}\) be the number of doctors
assigned to \(h\) by \(\omega\), and write
\(H_h=(H_{h\omega})_{\omega\in\Omega}\). The fractional assignment polytope is
\[
 P(q):=\{x\geq0:A_rx\leq1\ (r\in\mathcal R),\quad
 H_hx\leq q_h\ (h\in\Hh)\}.
\]

Order the columns in each agent row by that agent's reported preferences. For
a column \(\omega\) incident to hospital \(h\), let \(d_h(\omega)\) be the
doctor assigned to \(h\) by \(\omega\); if \(\omega\) assigns both members of
a couple to \(h\), let \(d_h(\omega)\) be the lower-priority member at \(h\).
Following \citet[Online Appendix~A3]{nguyen2018}, hospital \(h\) orders its
incident columns by \(d_h(\omega)\) according to \(\succ_h\), breaking a tie
between columns of the same couple by that couple's preference. This is a
technical row order for the domination argument, not a change in the
hospital's priority over doctors.

With these row orders, Scarf's lemma \citep{scarf1967} gives a dominating
extreme point \(\widetilde x\) of \(P(q)\). Let
\(H^0:=\{h\in\Hh:H_h\widetilde x<q_h\}\). The dummy doctor construction of
\citet[Online Appendix~B2]{nguyen2018} adds single doctors who find only the
hospitals in \(H^0\) acceptable and extends \(\widetilde x\) to a dominating
vector that fills every hospital. Taking a vertex of the minimal face of the
augmented polytope containing this vector preserves its support and all tight
rows. We may therefore denote this vertex by \(\bar x\) and take it to be
extreme, dominating, and such that \(H_h\bar x=q_h\) for every hospital. From
this point onward, the agent set and assignment columns include the dummy
single doctors.

Lemma~3 and Online Appendix~B1 of \citet{nguyen2018} give the conditions that
preserve domination during rounding. It suffices to obtain an integral vector
\(x^*\) that respects the individual assignment constraints and satisfies
\[
 \bar x_\omega=0\ \Longrightarrow\ x^*_\omega=0,
 \qquad
 A_r\bar x=1\ \Longrightarrow\ A_rx^*=1.
\]
With \(q'_h:=H_hx^*\), these conditions make \(x^*\) dominating in the
augmented market. We verify BIS-stability directly below. Our task is to
preserve these conditions while obtaining the stronger aggregate bound.

Set \(w_r=1\) for a single doctor, including a dummy doctor, and \(w_r=2\)
for a couple, and define
\(a_\omega=w_r\) for \(\omega\in\Omega_r\). Every single column uses one
hospital position. Because partial couple assignments are excluded, every
couple column uses two, including a column that assigns both members to the
same hospital. Thus the total capacity row has the identity
\begin{equation}
 a=\sum_{h\in\Hh}H_h=\sum_{r\in\mathcal R}w_rA_r.
 \label{eq:connected-total-row}
\end{equation}
This identity is the restriction on which the improvement relies; it need
not hold when a couple may accept an assignment of only one member.
Let \(K:=\sum_{h\in\Hh}q_h\).
Since every such hospital is full at \(\bar x\), \(a\bar x=K\).

\paragraph{The modified iteration.}
Start from \(\bar x\). Once a variable becomes zero or one, keep it fixed.
Once an agent constraint becomes tight, retain it as an equality. Let
\(\mathcal B:=\{r:A_r\bar x=1\}\) initially, and add \(r\) to
\(\mathcal B\) whenever \(A_rx=1\) at a later extreme point. Thus
\(A_rx=1\) is retained for \(r\in\mathcal B\), while the other agent
constraints remain inequalities unless they are replaced below. This rule
applies only to individual constraints still present in the system; a
constraint replaced by the grouped row below is not reintroduced separately.

Let \(\mathcal U=\Hh\) initially. Retain \(H_hx=q_h\) for
\(h\in\mathcal U\) until that hospital constraint is removed, and retain
\(ax\leq K\) until the aggregate constraint is removed. After any removal or
replacement, maximize \(ax\) over the resulting system and choose an extreme
maximizer. The preceding point remains feasible after every change. Fix any
newly integral variables and retain any newly tight agent constraint as an
equality. Consequently, \(ax\) cannot decrease. Since \(a\bar x=K\), every
current point satisfies \(ax=K\) while the aggregate constraint remains.

For a current point \(x\), let
\(J(x):=\{\omega\in\Omega:0<x_\omega<1\}\) and
\(\rho_h(x):=\sum_{\omega\in J(x)}H_{h\omega}\). If \(x\) is integral,
stop. If some \(h\in\mathcal U\) satisfies \(\rho_h(x)\leq3\), remove its
capacity equality.

Suppose no hospital constraint can be removed and the aggregate constraint is
still present. Let
\(F:=\{r\in\mathcal R\setminus\mathcal B:(A_r)_{J(x)}\neq0\}\). These agent
constraints are slack. The token count in
\citet[Online Appendix~B3, Case~1]{nguyen2018}, which remains valid when the
active hospital constraints are kept as equalities, gives \(|F|\leq2\). The
count uses only that the retained hospital rows are tight and that none
satisfies \(\rho_h(x)\leq3\), both of which hold here.
Put \(y_r:=A_rx\) for \(r\in F\) and
\(Y:=\sum_{r\in F}y_r\).

If \(F=\varnothing\), remove the aggregate constraint. If \(Y>1\), remove it
and retain \(A_rx\leq1\) for \(r\in F\). If \(F\neq\varnothing\) and
\(Y\leq1\), replace these individual constraints by
\(Gx:=\sum_{r\in F}A_rx\leq1\) and remove the aggregate constraint. The
current point satisfies the new constraint because \(Gx=Y\leq1\). Moreover,
nonnegativity gives
\(A_ru\leq Gu\leq1\) for every \(r\in F\) and every feasible vector \(u\),
so the replacement preserves individual assignment feasibility. Retain
\(Gx\leq1\) for the rest of the procedure. Unlike deletion alone, this
replacement may restrict the feasible region, but the preceding point
remains feasible. This is all that is needed for \(ax\) to remain
nondecreasing.

\paragraph{Termination.}
The only new issue is termination after the grouped row has been introduced.
Before the aggregate constraint is removed, progress follows from the
Case~1 token count above. After its removal, the argument of
\citet[Online Appendix~B3, Case~2]{nguyen2018} applies if no constraint
\(Gx\leq1\) was introduced. It also applies when \(G\) was introduced but
\(G_J=0\), where \(J:=J(x)\) and a subscript \(J\) denotes restriction to
the columns in \(J\). Indeed, \(G=\sum_{r\in F}A_r\) and all entries are
nonnegative, so \(G_J=0\) implies \((A_r)_J=0\) for every \(r\in F\).
After restriction to \(J\) and omission of these zero rows, the remaining
fractional system is the one considered in their Case~2.

It remains to consider \(G_J\neq0\). Suppose, for a contradiction, that the
current point is fractional and that no hospital constraint can be removed.
Then \(\rho_h(x)\geq4\) for every \(h\in\mathcal U\). The rank argument in
\citet[Online Appendix~B3]{nguyen2018} gives a set \(\mathcal T\) of
\(|J|\) linearly independent tight rows after restriction to \(J\). These
are active hospital rows, tight agent rows, and, if it is tight, the row
\(G\).

Give each variable in \(J\) one token and use the distribution in
\citet[Online Appendix~B3]{nguyen2018}. A couple column gives
\(H_{h\omega}/4\) token to each hospital row \(h\) and the remaining
\(1/2\) token to its couple row. A single column gives \(1/4\) token to its
hospital row and \(3/4\) token to its single doctor row. For a column
belonging to an agent in \(F\), send the share intended for its agent row to
\(G\) instead. Each fractional column still distributes exactly one token.

Every active hospital row receives at least one token because
\(\rho_h(x)\geq4\). Every nonzero tight couple row receives at least one
token, since a tight agent row containing a fractional variable contains at
least two such variables. A nonzero tight single row receives at least
\(3/2\) tokens. If \(G\) is tight, then \(Gx=1\). The condition
\(G_J\neq0\) rules out a variable fixed at one in this row, since
nonnegativity would then leave no positive fractional variable in \(G\). At least two
fractional variables are therefore needed to make the entries in the row sum
to one. Thus \(G\) also receives at least one token.

If \(G\) is slack, the rows in \(\mathcal T\) receive at least \(|J|\)
tokens, while \(G_J\neq0\) sends an additional positive amount to \(G\).
This exceeds the total supply of \(|J|\) tokens. Suppose instead that \(G\)
is tight. All tokens must then be sent to rows in \(\mathcal T\), every such
row must receive exactly one, and every nonzero tight row must belong to
\(\mathcal T\). No token can be sent to a slack row or to a hospital row
that has been removed.

A fractional single column is impossible. If it does not belong to \(G\),
its own row either is slack and receives a token outside \(\mathcal T\), or
is tight and receives at least \(3/2\) tokens. If it belongs to \(G\), it
gives \(3/4\) token to \(G\); another fractional column is needed for
\(Gx=1\), so \(G\) receives at least \(3/4+1/2>1\) tokens. Hence every
column in \(J\) is a couple column.

Token exhaustion also implies that every hospital row used by such a column
is active and that its agent incidence is represented either by \(G\) or by a
tight couple row. Let
\(\mathcal C_J:=\{c\in\C\setminus F:(A_c)_J\neq0\}\). Since every couple
column uses two hospital positions, equality holds column by column in
\[
 \sum_{h\in\mathcal U}(H_h)_J
 =2\left(\sum_{c\in\mathcal C_J}(A_c)_J+G_J\right).
\]
Every nonzero row in this equality is tight and belongs to \(\mathcal T\).
The equality is therefore a nontrivial linear dependence among the rows of
\(\mathcal T\), contradicting their independence. Some hospital constraint
must consequently satisfy \(\rho_h(x)\leq3\) and can be removed.

Every fractional iteration therefore removes a hospital constraint or, once,
the aggregate constraint. Each hospital constraint is removed at most once,
so the procedure terminates at an integral vector \(x^*\).

\paragraph{The aggregate bound.}
If the procedure terminates while the aggregate constraint remains, then
\(ax^*=K\). Otherwise, let \(\widehat x\) be the point at which that
constraint is removed. With \(F\), \(y_r\), and \(Y\) defined at that point,
set \(L:=\sum_{r\in F}w_ry_r\) and
\(C:=\sum_{r\notin F}w_rA_r\widehat x\).
The contribution of every agent outside \(F\) is fixed from this point
onward. An agent in \(\mathcal B\) continues to contribute \(w_r\). An
agent outside \(\mathcal B\cup F\) has no fractional variable and has a
slack constraint, so all of its variables are fixed at zero. Thus \(C\) is
an integer and remains unchanged. Since \(a\widehat x=K\),
$K=C+L$.
Hence \(L\) is an integer. If \(F\neq\varnothing\), then
\(0<y_r<1\) for every \(r\in F\), so \(L\) is a positive integer.

Suppose \(Y>1\). Since \(|F|\leq2\) and each \(y_r<1\), \(F\) contains
exactly two agents. If both are couples, then \(L=2Y\in(2,4)\), so \(L=3\),
while their final contribution is at most four. If one is a couple and one
is a single doctor, then \(L=2y_c+y_s=Y+y_c\in(1,3)\), so \(L=2\), while
their final contribution is at most three. Two single doctors would give
\(L=Y\in(1,2)\), contradicting the integrality of \(L\). Thus the
contribution of \(F\) can increase by at most one.

If \(F\neq\varnothing\) and \(Y\leq1\), the combined constraint remains in
the system. Since \(w_r\leq2\), its final contribution satisfies
\(\sum_{r\in F}w_rA_rx^*\leq2Gx^*\leq2\). The contribution of \(F\) when
the aggregate constraint is removed is the positive integer \(L\), so
\(L\geq1\), and it can again increase by at most one. The case
\(F=\varnothing\) gives no increase. Therefore
\(ax^*\leq K+1\).

For the reverse inequality, the point \(\widehat x\) has objective value
\(K\) and remains feasible after the aggregate constraint is removed or
replaced. At every subsequent iteration, \(ax\) is maximized over a system
containing the preceding point, so it never decreases. Consequently,
\begin{equation}
 K\leq ax^*\leq K+1.
 \label{eq:net-one-rounded-total}
\end{equation}

\paragraph{Capacity changes and stability.}
Set \(q'_h:=H_hx^*\) in the augmented market. The hospital deletion estimate
is the one in the proof of \citet[Theorem~1, p.~3167]{nguyen2018}. If the
row for \(h\) is never removed, then \(q'_h=q_h\). If it is removed at a
point \(x\), then \(H_hx=q_h\), \(\rho_h(x)\leq3\), and
\(H_h\lfloor x\rfloor\leq q'_h\leq H_h\lceil x\rceil\).
The two integer deviations from \(q_h\) have sum at most three. If the row
contains a fractional variable, both are positive and hence at most two;
otherwise both are zero. Thus \(|q'_h-q_h|\leq2\) for every \(h\in\Hh\).

Every variable that is zero at \(\bar x\) remains zero, and every agent
constraint that is tight at \(\bar x\) remains tight. Individual assignment
feasibility also holds. The individual constraints for \(F\) remain in the
system when \(Y>1\); when they are replaced, nonnegativity gives
\(A_rx^*\leq Gx^*\leq1\) for every \(r\in F\). The support and tightness
argument in \citet[Online Appendix~B1]{nguyen2018} therefore shows that
\(x^*\) dominates every assignment column in the augmented market under the
capacity vector \(q'\).

Remove the dummy doctors from \(x^*\), but keep \(q'\) unchanged, and let
\(M\) be the resulting matching of the original doctors. No original column
was initially dominated at a hospital in
\(H^0\); its domination certificate was instead at its original agent row or
at a hospital outside \(H^0\). The support condition prevents a dummy doctor
from being assigned outside \(H^0\), so all such certificates survive the
removal of the dummy doctors, as in
\citet[Online Appendix~B2]{nguyen2018}. Hence every original assignment
column remains dominated under \(M\) and \(q'\).

We now verify BIS-stability directly. Suppose that \(M\) admits a blocking
coalition and let \(\omega\) be its assignment column. The proposing single
doctor or couple strictly prefers \(\omega\) to its assignment under \(M\),
so its agent row does not dominate \(\omega\). For an S1 block, the hospital
row is slack if the target hospital is undersubscribed. Otherwise, the
proposing doctor has higher priority than some incumbent \(j\). The selected
column containing \(j\) is represented at that hospital by \(j\), or by a
doctor with still lower priority when both members of an incumbent couple are
assigned there. It is therefore ranked below \(\omega\), so the hospital row
does not dominate \(\omega\).

For an S2 block, apply the same argument at each target hospital. If a
proposed member is already assigned to that hospital, the selected column and
\(\omega\) either have the same representative doctor, in which case the
couple's strict preference ranks \(\omega\) first, or the selected column
assigns both members there and has a weakly lower representative. Thus neither
target hospital row dominates \(\omega\).

Finally, consider an S3 block by a couple \((a,b)\) at hospital \(h\). In
cases \textnormal{(S3)(i)} and \textnormal{(S3)(ii)}, hospital \(h\) is
undersubscribed, so its row is slack. In case \textnormal{(S3)(iii)}, the
blocking condition provides an incumbent \(j\in M(h)\) below both \(a\) and
\(b\) in \(h\)'s priority order. The selected column containing \(j\) is
represented at \(h\) by \(j\), or by a still lower-priority doctor, whereas
\(\omega\) is represented by the lower-priority member of \(\{a,b\}\).
Consequently, \(\omega\) is ranked above that selected column and the hospital
row does not dominate it. In every case, \(\omega\) is dominated at none of
its incident rows, contradicting its surviving domination certificate.
Therefore \(M\) is BIS-stable under \(q'\).

Therefore, \eqref{eq:connected-total-row} and
\eqref{eq:net-one-rounded-total} give
\(0\leq\sum_{h\in\Hh}(q'_h-q_h)=ax^*-K\leq1\).
Together with the individual capacity
bound, this proves the theorem.

\subsection{Proof of Proposition~\ref{prop:a2b0-counterexample}}
\label{app:a2b0-counterexample}

Let \(\Hh=\{h_1,h_2,h_3\}\), \(q=(3,3,3)\), and let the doctors comprise
four couples \(k_1=(d_1,d_2)\), \(k_2=(d_3,d_4)\),
\(k_3=(d_5,d_6)\), \(k_4=(d_7,d_8)\), and two singles \(s_1,s_2\).
The preference lists and the common hospital priority order are as follows.

\begin{center}
\small
\setlength{\tabcolsep}{4pt}
\renewcommand{\arraystretch}{1.12}
\begin{tabular}{@{}ll@{}}
\toprule
Agent(s) & Preference list or priority order \\
\midrule
\(s_1,s_2\) & \(h_1\succ h_2\succ h_3\) \\
\(k_1=(d_1,d_2)\) & \((h_2,h_1)\succ(h_3,h_1)\) \\
\(k_2=(d_3,d_4)\) & \((h_3,h_1)\succ(h_2,h_1)\succ(h_2,h_3)\succ(h_3,h_2)\) \\
\(k_3=(d_5,d_6)\) & \((h_3,h_1)\succ(h_2,h_1)\succ(h_3,h_2)\) \\
\(k_4=(d_7,d_8)\) & \((h_1,h_3)\succ(h_2,h_3)\succ(h_2,h_1)\) \\
\midrule
\multicolumn{2}{@{}l@{}}{Common hospital priority order: \(d_2\succ d_6\succ d_7\succ d_3\succ d_8\succ s_1\succ d_1\succ s_2\succ d_4\succ d_5\)} \\
\bottomrule
\end{tabular}
\end{center}

All unlisted pairs are unacceptable. Thus every acceptable pair assigns the
members of a couple to distinct hospitals.

Write \(p_{jt}\) for the \(t\)th listed option of couple \(k_j\). A hospital
rejects doctor \(d\) if it is full and every current assignee precedes \(d\)
in the master list. Thus, when a couple does not receive a preferred option,
at least one of the two proposed doctors must be rejected at the corresponding
hospital. As every acceptable pair uses distinct hospitals, an S3 deviation
cannot occur.

Both singles must be assigned in any matching without an S1 block. If \(s_1\)
were unmatched, every hospital would have to be full with doctors preceding
\(s_1\), which would require nine positions to be occupied by only five such
doctors. The corresponding argument for \(s_2\) would require nine positions
to be occupied by the seven doctors preceding \(s_2\).

Assume, for a contradiction, that \(M\) has neither an S1 nor an S2 block.
We now apply the rejection condition successively to the preferred options of
the couples. At each step, an option that is not assigned must be rejected at
one of its two hospitals. The resulting cases are as follows.

Suppose first that \(q'_{h_1}=1\). At \((1,3,5)\), the rejection
conditions force \(s_1\) to \(h_2\), \(s_2\) to \(h_3\),
\(k_1\) to \(p_{12}\), \(k_2\) to \(p_{23}\), and \(k_3\) to \(p_{33}\),
leaving \(k_4\) unmatched. Couple \(k_4\) then blocks through \(p_{42}\):
\(d_7\) displaces \(d_3\) at \(h_2\), and \(d_8\) fills a vacant position at
\(h_3\). At \((1,4,4)\), the forced assignment sends \(s_1\) to \(h_2\),
\(s_2\) to \(h_3\), \(k_1\) to \(p_{11}\), \(k_2\) to \(p_{23}\), and
\(k_4\) to \(p_{42}\), leaving \(k_3\) unmatched. Couple \(k_3\) blocks
through \(p_{33}\), since \(d_5\) fills a vacancy at \(h_3\) and \(d_6\)
displaces \(d_1\) at \(h_2\). At \((1,5,3)\), the rejection conditions force
\(s_1,s_2\) to \(h_2\), \(k_1\) to \(p_{11}\), \(k_2\) to \(p_{23}\), and
\(k_3\) to \(p_{33}\), leaving \(k_4\) unmatched. Again \(k_4\) blocks
through \(p_{42}\): \(d_7\) displaces \(d_1\) at \(h_2\), and \(d_8\) fills
a vacancy at \(h_3\).

Let \(q'_{h_1}=2\). At \((2,2,5)\), the no-block conditions force
\(s_1\) to \(h_2\), \(s_2\) to \(h_3\), \(k_1\) to \(p_{12}\),
\(k_2\) to \(p_{23}\), and \(k_3\) to \(p_{31}\), leaving \(k_4\)
unmatched. Couple \(k_4\) blocks through \(p_{42}\): \(d_7\) displaces
\(d_3\) at \(h_2\), and \(d_8\) is accepted at a vacant position of \(h_3\).
At \((2,3,4)\), the forced assignment sends \(s_1\) to \(h_2\), \(s_2\) to
\(h_3\), \(k_1\) to \(p_{11}\), \(k_3\) to \(p_{31}\), and \(k_4\) to
\(p_{42}\), leaving \(k_2\) unmatched. Couple \(k_2\) then blocks through
\(p_{23}\), with \(d_3\) displacing \(d_1\) at \(h_2\) and \(d_4\) filling
a vacancy at \(h_3\). At \((2,4,3)\), the assignment is \(s_1,s_2\) to
\(h_2\), \(k_1\) to \(p_{11}\), \(k_2\) to \(p_{23}\), and \(k_3\) to
\(p_{31}\), leaving \(k_4\) unmatched. Couple \(k_4\) blocks through
\(p_{42}\), as \(d_7\) displaces \(d_1\) at \(h_2\) and \(d_8\) fills a
vacancy at \(h_3\). At \((2,5,2)\), the same assignment is forced, but now
\(d_7\) fills a vacancy at \(h_2\) and \(d_8\) displaces \(d_4\) at \(h_3\).
Thus \(p_{42}\) is again an S2 block for \(k_4\).

Let \(q'_{h_1}=3\). At \((3,1,5)\), the assignment is \(s_1\) to
\(h_2\), \(s_2\) to \(h_3\), \(k_1\) to \(p_{12}\), \(k_3\) to \(p_{31}\),
and \(k_4\) to \(p_{41}\), leaving \(k_2\) unmatched. Couple \(k_2\) blocks
through \(p_{23}\), because \(d_3\) displaces \(s_1\) at \(h_2\) and \(d_4\)
fills a vacancy at \(h_3\). At \((3,2,4)\), the corresponding assignment has
\(k_1\) at \(p_{11}\), with all other assignments unchanged. The same option
of \(k_2\) blocks: \(d_3\) displaces \(d_1\) at \(h_2\), while \(d_4\) is
accepted at \(h_3\). At \((3,3,3)\), the rejection conditions force \(s_1\)
to \(h_1\), \(s_2\) to \(h_2\), \(k_1\) to \(p_{11}\), \(k_2\) to
\(p_{23}\), and \(k_3\) to \(p_{31}\), leaving \(k_4\) unmatched. Couple
\(k_4\) blocks through \(p_{41}\), with \(d_7\) displacing \(s_1\) at \(h_1\)
and \(d_8\) filling a vacancy at \(h_3\). At \((3,4,2)\), the same assignment
is forced; now \(d_8\) displaces \(d_4\) at \(h_3\), so \(p_{41}\) still
blocks. At \((3,5,1)\), the forced assignment sends \(s_1\) to \(h_1\),
\(s_2\) to \(h_2\), \(k_1\) to \(p_{11}\), \(k_2\) to \(p_{24}\), and
\(k_3\) to \(p_{32}\), leaving \(k_4\) unmatched. Couple \(k_4\) blocks
through \(p_{43}\): \(d_7\) fills a vacancy at \(h_2\), and \(d_8\)
displaces \(s_1\) at \(h_1\).

Let \(q'_{h_1}=4\). At \((4,1,4)\), the rejection conditions force
\(s_1,s_2\) to \(h_1\), \(k_2\) to \(p_{23}\), \(k_3\) to \(p_{31}\), and
\(k_4\) to \(p_{41}\), leaving \(k_1\) unmatched. Couple \(k_1\) blocks
through \(p_{12}\), because \(d_1\) fills a vacancy at \(h_3\) and \(d_2\)
displaces \(d_6\) at \(h_1\). At \((4,2,3)\), the forced assignment sends
\(s_1\) to \(h_1\), \(s_2\) to \(h_2\), \(k_1\) to \(p_{11}\), \(k_3\) to
\(p_{31}\), and \(k_4\) to \(p_{41}\), leaving \(k_2\) unmatched. Couple
\(k_2\) blocks through \(p_{23}\): \(d_3\) displaces \(d_1\) at \(h_2\),
while \(d_4\) fills a vacancy at \(h_3\). At \((4,3,2)\), the same assignment
is forced; here \(d_3\) fills a vacancy at \(h_2\) and \(d_4\) displaces
\(d_5\) at \(h_3\), so \(p_{23}\) blocks. At \((4,4,1)\), the assignment
is \(s_1,s_2\) to \(h_1\), \(k_1\) to \(p_{11}\), \(k_2\) to \(p_{24}\), and
\(k_3\) to \(p_{32}\), leaving \(k_4\) unmatched. Couple \(k_4\) blocks
through \(p_{43}\), since \(d_7\) fills a vacancy at \(h_2\) and \(d_8\)
displaces \(s_1\) at \(h_1\).

Finally, let \(q'_{h_1}=5\). At \((5,1,3)\), the assignment is
\(s_1,s_2\) to \(h_1\), \(k_1\) to \(p_{11}\), \(k_3\) to \(p_{31}\), and
\(k_4\) to \(p_{41}\), leaving \(k_2\) unmatched. Couple \(k_2\) blocks
through \(p_{23}\): \(d_3\) displaces \(d_1\) at \(h_2\), and \(d_4\) fills
a vacancy at \(h_3\). At \((5,2,2)\), the forced assignment sends \(s_1,s_2\)
to \(h_1\), \(k_1\) to \(p_{11}\), \(k_2\) to \(p_{21}\), and \(k_4\) to
\(p_{41}\), leaving \(k_3\) unmatched. Couple \(k_3\) blocks through
\(p_{32}\), with \(d_5\) filling a vacancy at \(h_2\) and \(d_6\) displacing
\(d_4\) at \(h_1\). At \((5,3,1)\), the forced assignment sends \(s_1,s_2\)
to \(h_1\), \(k_1\) to \(p_{11}\), \(k_2\) to \(p_{21}\), and \(k_3\) to
\(p_{32}\), leaving \(k_4\) unmatched. Couple \(k_4\) blocks through
\(p_{43}\), because \(d_7\) fills a vacancy at \(h_2\) and \(d_8\)
displaces \(d_4\) at \(h_1\).

Every capacity vector satisfying the stated bounds appears in one of these
five cases. In every case, the presumed matching has an S2 block, a
contradiction. Hence no admissible capacity vector admits a stable matching.

\subsection{Proof of Theorem~\ref{thm:one-sided-near-feasibility}}
\label{app:one-sided-near-feasibility-proof}

We use the construction in the proof of
\citet[Theorem~3.1]{csaji2026}. Under the
sub-completeness assumed here, our requirement that both members be assigned
or both remain unmatched places every couple. The reduction
and rounding rule are unchanged.
We show that their output never reduces a hospital capacity, bound its total
expansion, and verify BIS stability.

\paragraph{The inherited construction.}
We recall the part of the reduction needed for our analysis. In the auxiliary
Stable Fixtures instance \(G\), each doctor \(i\in\D\) has capacity one and
each hospital \(h\in\Hh\) has capacity \(q_h\). Add an edge \(ih\) for every
acceptable doctor--hospital pair \((i,h)\in\E\). Hospitals and single doctors
retain their respective preference orders. For each couple \(k=(a,b)\), use
the individual orders \(\succ_a,\succ_b\) supplied in the input to establish
sub-responsiveness.

The connected couple gadget of \citet{csaji2026} adds four vertices
\(u_k^a,v_k^a,u_k^b,v_k^b\), each of capacity one, and the six edges
\[
au_k^a,\quad u_k^av_k^a,\quad v_k^ab,\quad
bu_k^b,\quad u_k^bv_k^b,\quad v_k^ba.
\]
Doctor \(a\) ranks \(u_k^a\) first, the hospitals in \(H(a)\) according to
\(\succ_a\) next, and \(v_k^b\) last. Doctor \(b\) similarly ranks
\(u_k^b\) first, her hospitals according to \(\succ_b\) next, and \(v_k^a\)
last. The connector preferences are
\[
v_k^a\succ_{u_k^a}a,\qquad b\succ_{v_k^a}u_k^a,\qquad
v_k^b\succ_{u_k^b}b,\qquad a\succ_{v_k^b}u_k^b.
\]

Let \(x\) be the stable half-integral matching used in their construction,
obtained by the Stable Fixtures procedure based on
\citet{tan1991,fleiner2010}. Thus \(x_e\in\{0,\tfrac12,1\}\), vertex
capacities are respected, and every edge with \(x_e<1\) is dominated at an
endpoint. More precisely, write \(\delta(v)\) for the edges incident to a
vertex \(v\) of capacity \(q_v\). An edge \(e\) is dominated at \(v\) if
\[
\sum_{\substack{f\in\delta(v)\\f\succeq_v e}}x_f=q_v.
\]
In particular, \(v\) is saturated and no edge below \(e\) has positive
weight there.

We also recall the orientation argument in
\citet[Observation~3.2]{csaji2026}. Call an edge with value \(\tfrac12\) a
half-edge and orient it towards an endpoint
where it is dominated. At most one half-edge enters any vertex: domination
of a better incoming edge would exclude positive weight on a worse one.
A vertex with an incoming half-edge is saturated and, because its capacity
is integral, has at least one further half-edge. This further edge must
leave the vertex. The outdegree is therefore at least the indegree at every
vertex. Equality of their global sums gives equality at each vertex.
Consequently, every vertex incident to a half-edge is saturated and has
exactly one incoming and one outgoing half-edge. The half-edges form
vertex-disjoint cycles.

The original rounding rule assigns each doctor \(i\) to her most preferred
hospital in
\(H_x(i)=\{h\in H(i):x_{ih}>0\}\), leaving her unmatched if this set is
empty. Denote the assignment by \(M\), and retain the original capacity rule
\[
\widehat q_h=
\begin{cases}
|M(h)|, & \text{if \(h\) is saturated in \(x\)},\\
q_h, & \text{otherwise}.
\end{cases}
\]

The feasibility argument in \citet[Claim~3.3]{csaji2026} ensures that this
rounding respects connected couples. We recall why. If a vertex of capacity
one ranks another such vertex first, the latter is saturated: their edge
either has value one or must be dominated at the latter endpoint. Each
vertex of a connected couple gadget is the first choice of another gadget
vertex, so all six are saturated. Suppose now that \(a\) has a positive
hospital edge but \(b\) does not, and put \(z=x_{bu_k^b}\). The saturation
equations give
\[
x_{v_k^ab}=1-z,\qquad x_{u_k^bv_k^b}=1-z,\qquad
x_{v_k^ba}=z,\qquad x_{u_k^av_k^a}=z,\qquad x_{au_k^a}=1-z.
\]
Hence \(x_{au_k^a}+x_{v_k^ba}=1\), leaving no capacity at \(a\) for a
positive hospital edge, a contradiction. The other case is symmetric.
Thus both members are assigned or neither is assigned. Sub-completeness
makes the assigned hospital pair acceptable.

\paragraph{Capacity bounds.}
We now strengthen the capacity guarantee for this same output. Every
doctor--hospital edge of value one is retained by the rounding rule. An unsaturated hospital
has no half-edge and retains its original capacity, so its assignment is
feasible. A saturated hospital with no half-edge receives exactly \(q_h\)
doctors and also has unchanged capacity.

Consider a hospital \(h\) incident to half-edges. It is saturated and has
exactly two such edges and \(q_h-1\) edges of value one. Let \(ih\) be the
half-edge entering \(h\). The other half-edge at \(i\) enters \(i\) and is
dominated there. If it joins \(i\) to another hospital \(g\), domination
implies \(h\succ_i g\), so the rounding selects \(h\). If it joins \(i\)
to a connector, it cannot be the first choice connector: an edge of value
one half cannot be dominated at a vertex of capacity one that ranks it
first. It is therefore the last choice connector, and \(h\) is the only
hospital with positive weight at \(i\). The rounding again selects \(h\).
Thus \(h\) retains at least one of its two fractional doctors, in addition
to all \(q_h-1\) doctors on edges of value one. It follows that
$q_h\leq |M(h)|\leq q_h+1$.
Since \(\widehat q_h=|M(h)|\), this proves
\(0\leq\widehat q_h-q_h\leq1\) for every hospital. In particular, the
capacity rule in the original construction makes no reductions on the
instances considered here.

To bound the total expansion, let \(n_1\) and \(n_2\) count doctor vertices
incident to exactly one and exactly two hospital half-edges, respectively.
The rounding retains \(n_1+n_2\) hospital half-edges. Each hospital incident
to half-edges has exactly two of them, so there are
\((n_1+2n_2)/2\) such hospitals. Their total load on half-edges is
\((n_1+2n_2)/2\), whereas the rounding retains \(n_1+n_2\) corresponding
assignments. All doctor--hospital edges of value one are unchanged. Therefore,
\[
\sum_{h\in\Hh}(\widehat q_h-q_h)
=(n_1+n_2)-\frac{n_1+2n_2}{2}
=\frac{n_1}{2}.
\]
A single doctor is adjacent only to hospitals. If incident to a half-edge,
she has two hospital half-edges and cannot contribute to \(n_1\). Hence
\(n_1\leq2|\C|\), giving the required bound
\(\sum_h(\widehat q_h-q_h)\leq|\C|\).

\paragraph{BIS stability.}
We make explicit why the same output is BIS-stable. For any acceptable
alternative preferred by a single doctor or a couple, we show that one
proposed doctor \(i\) and her target hospital \(r\) satisfy
\(x_{ir}=0\) and that \(ir\) is dominated at \(r\).

For a single doctor, a hospital \(r\) preferred to \(M(i)\) has
\(x_{ir}=0\) by the rounding rule. All positive edges at \(i\) lead to
less preferred hospitals, so \(ir\) is not dominated at \(i\) and must be
dominated at \(r\).

For a matched couple \(k=(a,b)\) that prefers \((h,g)\) to \(M(k)\), at
least one member strictly prefers her proposed hospital to her current
hospital. Otherwise, sub-completeness permits changing the two coordinates
one at a time, and sub-responsiveness gives \(M(k)\succeq_k(h,g)\), a
contradiction. For such a member \(i\) and target \(r\), the rounding rule
again gives \(x_{ir}=0\), with every positive hospital edge at \(i\) below
\(r\). The only connector ranked above \(r\) is the first choice connector.
Its weight is at most \(\tfrac12\), because \(i\) has a positive hospital
edge. Thus \(ir\) is not dominated at \(i\), so \(r\) dominates that edge.

For an unmatched couple \(k=(a,b)\), neither member has a positive hospital
edge. If both proposed edges \(ah\) and \(bg\) were dominated at their
doctor endpoints, then \(x_{au_k^a}=x_{bu_k^b}=1\), since only the first
choice connector is ranked above an acceptable hospital. These two edges
force both edges incident to \(v_k^a\) to have value zero. Then
\(u_k^av_k^a\) is dominated at neither endpoint: \(u_k^a\) prefers
\(v_k^a\) to \(a\), and \(v_k^a\) is unsaturated. This contradicts
stability of \(x\). Hence at least one proposed edge is dominated at its
hospital endpoint.

In every case, domination of \(ir\) at \(r\) makes \(r\) saturated in
\(x\), so \(\widehat q_r=|M(r)|\). Every assignee in \(M(r)\) has a
positive edge to \(r\). Because \(x_{ir}=0\) and hospital priorities are
strict, all these assignees rank above \(i\). Thus \(r\) is full, \(i\)
is not already assigned there, and no incumbent can be displaced by \(i\).
This excludes S1 and S2. For a proposed pair \((h,h)\), the same rejection
also excludes S3: its vacancy cases cannot hold, and every remaining case
requires both proposed doctors to rank above at least one incumbent.
Consequently, \(M\) is BIS-stable under \(\widehat q\).

\paragraph{Running time.}
The linear running time is inherited from the construction in
\citet[Theorem~3.1]{csaji2026}. The auxiliary graph has
\(|\D|+|\Hh|+4|\C|\) vertices and \(|\E|+6|\C|\) edges, and its ordered
adjacency lists have size \(O(L)\). The stable half-integral matching and
the rounding can therefore be computed in \(O(L)\) time. The orientation
and counting arguments above require at most a further linear scan and do
not change the algorithm. It uses the supplied individual orders, without
constructing or scanning the potentially quadratic lists of joint options.
As in the theorem's input convention, checking that an explicitly supplied
joint ranking satisfies the two preference restrictions is not part of the
construction. Every expanded hospital is full by the definition of
\(\widehat q\).

\subsection{Proof of Proposition~\ref{prop:tight-one-sided-budget}}
\label{app:tight-one-sided-budget-proof}

The universal upper bound \(B_{\min}(\mathcal I)\leq|\C|\) follows directly
from Theorem~\ref{thm:one-sided-near-feasibility}. It remains to show that this
bound is attained. 
For each \(k\in\{1,\ldots,m\}\), create three hospitals
\(h_k^1,h_k^2,h_k^3\) with capacities \((1,1,2)\), respectively. Component
\(k\) contains the couple \(k=(a_k,b_k)\) and two singles \(s_k,t_k\).
The couple accepts only \((h_k^1,h_k^2)\). The singles'
preferences are
\[
  s_k:h_k^1\succ h_k^2,
  \qquad
  t_k:h_k^3.
\]
The relevant hospital priorities are
\[
  h_k^1:a_k\succ s_k,
  \qquad
  h_k^2:s_k\succ b_k,
  \qquad
  h_k^3:t_k.
\]
There are no acceptable assignments between different components. A common
master list is obtained by placing all \(a_k\) first, then all \(s_k\), then
all \(b_k\), and finally all \(t_k\); the order within each group is
arbitrary. The couple list is both sub-complete and sub-responsive,
and its members are assigned to distinct hospitals. Each component has four
doctors and total capacity four.

Consider a component in which neither \(h_k^1\) nor \(h_k^2\) is expanded.
If the couple is assigned, \(s_k\) is unmatched and blocks with \(h_k^2\),
which prefers \(s_k\) to \(b_k\). If the couple is unmatched, an unmatched
\(s_k\) blocks with an unfilled hospital; assigning \(s_k\) to \(h_k^1\)
creates the couple block \((h_k^1,h_k^2)\), while assigning \(s_k\) to
\(h_k^2\) creates the single-doctor block \((s_k,h_k^1)\). Thus a stable
matching requires an expansion at \(h_k^1\) or \(h_k^2\) in every component.
Because components have disjoint acceptable assignments, any stable solution
uses at least \(m\) added positions.

Conversely, increase the capacity of every \(h_k^1\) from one to two. Assign
\(a_k,s_k\) to \(h_k^1\), \(b_k\) to \(h_k^2\), and \(t_k\) to \(h_k^3\).
Every doctor receives a most preferred acceptable assignment and every couple
receives its only acceptable pair, so the union of these component matchings
is stable. Hence \(B_{\min}(\mathcal I)=m\), proving that the universal bound
is tight.

\subsection{Proof of Theorem~\ref{thm:formulation-exactness}}
\label{app:formulation-exactness-proof}

Consider any feasible integer solution of ICM
\eqref{eq:variable-model}. The assignment and linking constraints define a
matching \(M\), and \eqref{eq:general-capacity-change} defines a nonnegative
integer capacity vector \(q'\). Conditional on this vector, the matching,
capacity, and stability constraints are precisely those of IFC
\eqref{eq:fixed-model} evaluated at \(q'\). The fixed capacity equivalence
in Section~\ref{subsec:fixed-ip} therefore implies that \(M\) is BIS-stable
under \(q'\).

Conversely, let \(M\) be a BIS-stable matching under a nonnegative integer
capacity vector \(q'\). Set \(x\) and \(y\) to the indicators of the
assignments in \(M\), and define
\[
  u_h^+=\max\{q'_h-q_h,0\},
  \qquad
  u_h^-=\max\{q_h-q'_h,0\}
  \quad\forall h\in\Hh.
\]
These variables satisfy \eqref{eq:general-capacity-change}. By the same
fixed capacity equivalence, the assignment variables can be extended with
binary S2 auxiliary variables \(\alpha\) to satisfy IFC at \(q'\).
Together they give a feasible solution of ICM.

Restricting \(q'\) to
\eqref{eq:balanced-capacity-region},
\eqref{eq:unit-expansion-capacity-region}, or
\eqref{eq:near-balanced-capacity-region} gives the corresponding equivalence
for \ICMBC{}, \ICMUC{}, or \ICMNF{}, respectively. Under balanced
modification, the construction above satisfies \(u_h^++u_h^-\leq1\) and
\(\sum_hu_h^+=\sum_hu_h^-\). Under unit expansion, it satisfies
\(u_h^-=0\), \(u_h^+\in\{0,1\}\), and \(\sum_hu_h^+\leq B\).
Under near feasible modification, the signed changes lie between \(-2\)
and \(2\), and their sum lies between \(0\) and \(1\). Conversely, these
restrictions imply their respective capacity regions.

The bounds \(U_h=q_h+1\) for \ICMBC{} and \ICMUC{}, and \(U_h=q_h+2\)
for \ICMNF{}, are valid throughout these regions. Hence the linearization
in Section~\ref{subsec:variable-ip} preserves the integer feasible solutions
in each case. Finally, \(\sum_{i\in\D}\sum_{h\in H(i)}x_{ih}\) counts the
matched doctors, so maximizing it selects a largest stable matching over the
admissible capacity vectors. Changing the objective to a capacity measure
does not alter this feasibility correspondence.

\subsection{Proof of Theorem~\ref{thm:cutoff-equivalence}}
\label{app:cutoff-equivalence-proof}

Consider first a feasible integer solution of the cutoff formulation.
Suppose a single doctor \(i\) has not obtained \(h\) or a better hospital.
The single-doctor prefix recursion gives \(b_{i,\rho_i(h)}=0\), so
\eqref{eq:cutoff-S1} gives \(t_h\geq s_{ih}+1>0\). Constraint
\eqref{eq:cutoff-fullness} then implies that \(h\) is full. Moreover,
\eqref{eq:cutoff-assignee} gives \(s_{jh}\geq t_h>s_{ih}\) for every doctor
\(j\) assigned to \(h\). Thus \(h\) ranks every current assignee above
\(i\), and \((i,h)\) is not an S1 block.

Next consider a couple that prefers \(p=p_k^t=(h,g)\), where \(h\neq g\),
to its current assignment. The couple-prefix recursion gives \(a_{kt}=0\).
Constraint~\eqref{eq:fixed-S2c} requires at least one of
\(\alpha_{kp}^{1}\) and \(\alpha_{kp}^{2}\) to be zero. Suppose, without
loss of generality, that \(\alpha_{kp}^{1}=0\). Constraint
\eqref{eq:cutoff-S2a} gives \(t_h>s_{d_k^1h}\). The same argument as for S1
shows that \(h\) is full and ranks every assignee above \(d_k^1\). Hence
\(h\) cannot accommodate the proposed member, so the couple does not form an
S2 block.

Finally, consider an option \(p_k^t=(h,h)\) that the couple prefers to its
assignment, so \(a_{kt}=0\), and relabel its members so that
\(i_1\succ_h i_2\). The hospital-tail recursion gives
$w_{h,r_h(i_2)-1}+x_{i_1h}=F_h+L_h(i_2)+x_{i_1h}+x_{i_2h}$,
where \(F_h=q'_h-|M(h)|\) and
\(L_h(i_2)=|\{j\in M(h):i_2\succ_h j\}|\). Because \(a_{kt}=0\), the two
proposed members are not both assigned to \(h\). The right-hand side is at
least two if and only if one of conditions~\textnormal{(S3)(i)},
\textnormal{(S3)(ii)}, \textnormal{(S3)(iii)(a)}, or
\textnormal{(S3)(iii)(c)} in Definition~\ref{def:bis-blocking} holds.
Constraint~\eqref{eq:cutoff-S3-base} therefore excludes precisely these
cases. If a couple
\(\ell\) is assigned to \((h,h)\), constraint
\eqref{eq:linked-cutoff-assignee} gives
\(t_h^L\leq\underline s_{\ell h}\), whereas
\eqref{eq:linked-cutoff-candidate} gives
\(s_{i_2h}+1\leq t_h^L\). Thus \(h\) cannot prefer both proposed members to
the lower-priority member of that assigned couple, excluding
condition~\textnormal{(S3)(iii)(b)}. The induced matching is
stable.

Conversely, fix a stable matching under a capacity vector. Set \(b\) and \(a\) using the displayed single-doctor and couple
prefix recursions. If hospital \(h\) has an unfilled position, set
\(z_h=t_h=0\). If it is full and \(q'_h>0\), set \(z_h=1\) and let \(t_h\)
be the lowest score among its assignees. If \(q'_h=0\), set \(z_h=1\) and
\(t_h=M_s\). These choices satisfy \eqref{eq:cutoff-assignee} and
\eqref{eq:cutoff-fullness}.

If \(b_{i,\rho_i(h)}=1\), constraint~\eqref{eq:cutoff-S1} is automatic. If
it is zero, the absence of an S1 block implies that \(h\) is full and every
assignee has a score of at least \(s_{ih}+1\); hence the chosen cutoff
satisfies the constraint. For an option \(p=p_k^t=(h,g)\) with \(h\neq g\),
set both \(\alpha\) variables to zero when the couple has obtained \(p\) or a
better option. In that case \(a_{kt}=1\), so both cutoff inequalities are
automatic. Otherwise \(a_{kt}=0\), and the absence of an S2 block means that
at least one proposed hospital cannot accommodate its member. Set the
corresponding \(\alpha\) variable to zero. Set the other to one if its cutoff
inequality would otherwise fail, and to zero otherwise. Then
\eqref{eq:cutoff-S2a}--\eqref{eq:cutoff-S2b} and
\eqref{eq:fixed-S2c} hold.

Set \(w\) using the hospital-tail recursion. If \(a_{kt}=0\), the absence of
conditions~\textnormal{(S3)(i)},
\textnormal{(S3)(ii)}, \textnormal{(S3)(iii)(a)}, and
\textnormal{(S3)(iii)(c)} gives \eqref{eq:cutoff-S3-base}. When
\(a_{kt}=1\),
\[
  w_{h,r_h(i_2)-1}+x_{i_1h}
  \leq q'_h\leq U_h=1+(U_h-1)a_{kt},
\]
so the same constraint is automatic. Set \(t_h^L\) to the smallest
lower-member score among the couples assigned to \((h,h)\), or to \(M_s\)
if there is none. Constraint~\eqref{eq:linked-cutoff-assignee} holds by
construction. When \(a_{kt}=0\), the preceding inequality together with the
absence of condition~\textnormal{(S3)(iii)(b)} implies that \(i_2\) is
ranked below the lower-priority member of every couple assigned to
\((h,h)\). If there is no such couple,
\(s_{i_2h}+1\leq M_s=t_h^L\). Otherwise, the consecutive integer scores
give \(s_{i_2h}+1\leq t_h^L\). Hence
\eqref{eq:linked-cutoff-candidate} holds; when \(a_{kt}=1\), that constraint
is automatic. Thus the stable matching extends to the cutoff formulation.
Theorem
\ref{thm:formulation-exactness} gives the same set of matchings and capacity
vectors for the original formulation.

\begingroup
\footnotesize
\setlength{\bibsep}{2pt}
\bibliographystyle{cas-model2-names-nodoi}
\bibliography{references}
\endgroup

\end{document}